\tikzstyle{every node}=[font=\scriptsize]
\def\centerarc[#1](#2)(#3:#4:#5)
\def\carc(#1:#2:#3:#4:#5)
\def\carcred(#1:#2:#3:#4:#5)
\def\carcblue(#1:#2:#3:#4:#5)
\def\carcpurple(#1:#2:#3:#4:#5)
\def\carcgreen(#1:#2:#3:#4:#5)
\def\arclabel(#1:#2:#3:#4:#5:#6)
\def\line(#1:#2:#3)
\def\vlinetikz[#1](#2:#3:#4)
\def\beam[#1](#2:#3:#4:#5:#6)
\def\line(#1:#2:#3:#4)
\theoremstyle{plain}
\newtheorem{fact}[theorem]{Fact}
\newcommand{\N}{\mathbb{N}}
\newcommand{\simorbit}{\sim_{\mathrm{orb}}}
\newcommand{\canon}{\mathrm{canon}}
\newcommand{\set}[2]{\ensuremath{\left\{ #1 \mid #2 \right\}}}
\newcommand\restr[2]{{
        \left.\kern-\nulldelimiterspace 
        #1 
        \vphantom{\big|} 
        \right|_{#2} 
    }}
\newcommand{\circlecover}{\ensuremath{\mathtt{cc}}}
\newcommand{\overlap}{\ensuremath{\mathtt{ov}}}
\newcommand{\disjoint}{\ensuremath{\mathtt{di}}}
\newcommand{\contains}{\ensuremath{\mathtt{cs}}}
\newcommand{\contained}{\ensuremath{\mathtt{cd}}}
\newcommand{\itypeset}{\ensuremath{\{ \circlecover, \contained, \contains, \disjoint, \overlap  \}}}
\newcommand{\catype}{\ensuremath{\mathrm{type}}}
\newcommand{\ovtit}{\ensuremath{\mathcal{T}_{\mathrm{IT}}}}
\newcommand{\ovtnht}{\ensuremath{\mathcal{T}_{\mathrm{NHT}}}}
\newcommand{\Funiform}{F_{\mathrm{uniform}}}
\title{Canonical Representations for Circular-Arc Graphs Using Flip Sets}
\author{Maurice Chandoo}
\affil{Leibniz Universität Hannover, Theoretical Computer Science,\\
    Appelstr. 4, 30167 Hannover, Germany\\
    \texttt{chandoo@thi.uni-hannover.de}}
\authorrunning{M. Chandoo} 
\subjclass{G.2.2 Graph Theory}
\keywords{canonization, circular-arc graphs, isomorphism }
\begin{document}

\maketitle

\begin{abstract}
	We show that computing canonical representations for circular-arc (CA) graphs reduces to computing certain subsets of vertices called flip sets. 
	For a broad class of CA graphs, which we call uniform, it suffices to compute a CA representation to find such flip sets. 
	As a consequence canonical representations for uniform CA graphs can be obtained in polynomial-time. 
    We then investigate what kind of CA graphs pose a challenge to this approach. This leads us to introduce the notion of restricted CA matrices and show that the canonical representation problem for CA graphs is logspace-reducible to that of restricted CA matrices. As a byproduct, we obtain the result that CA graphs without induced 4-cycles can be canonized in logspace. 
\end{abstract} 

\section{Introduction}
We consider an arc to be a connected set of points on the unit circle including the endpoints. 
A CA graph is a graph whose vertices can be assigned arcs such that two vertices are adjacent iff their corresponding arcs intersect. More formally, given a graph $G$ we call it a CA graph if there exists a function $\rho$ which maps every vertex $u$ of $G$ to an arc $\rho(u)$ such that $u$ and $v$ are adjacent iff their arcs $\rho(u)$ and $\rho(v)$ have non-empty intersection. We call such a mapping $\rho$ a CA representation of $G$. CA graphs are a form of geometrical intersection graphs. Let $\mathcal{X}$ be a family of sets over some ground set. Any subset $Y$ of $\mathcal{X}$ defines a graph $G_Y$ which has $Y$ as its vertex set and two vertices are adjacent if they have non-empty intersection. The graph $G_Y$ is called intersection graph of $Y$. We say a (finite) graph $G$ is an intersection graph of $\mathcal{X}$ if it is isomorphic to the intersection graph of $Y$ for some $Y \subseteq \mathcal{X}$. 
In this language CA graphs are intersection graphs of arcs. The intersection graphs of intervals on the real line are called interval graphs. In this sense any set of geometrical objects defines a (geometrical intersection) graph class. CA graphs are a generalization of interval graphs since every set of intervals on the real line can be `bent' into arcs while preserving the intersection relation. Therefore every interval graph is a CA graph.    

Being a generalization of interval graphs---the archetype of geometrical intersection graphs---CA graphs are quite prominent as well and have been known for decades.
Since then structural properties and algorithmic problems for this class have been thoroughly investigated with \cite{gav} and \cite{tuc} being two of the earliest works in this regard. In particular, finding characterizations of CA graphs and constructing a CA representation for a given CA graph have received a great deal of attention. Remarkably, finding a forbidden induced subgraph characterization of CA graphs is still an open problem. See \cite{lin2} for a survey on this line of research and \cite{cao} for one of the most recent results in that direction. It should also be mentioned that CA graphs are of practical relevance with applications arising in disciplines such as genetics and operations research. An explanation of the connection between genetics and interval graphs in layman's terms can be found in \cite{wat}. For a specialized account on this connection emphasizing circularity see \cite{stahl}. An example of how CA graphs can be used to model the problem of phasing traffic lights is given in \cite{gol}. 

In this work we consider the canonical representation problem for CA graphs. The representation problem for CA graphs is as follows. Given a CA graph $G$ as input we want to output a CA representation $\rho_G$ of $G$. The canonical variant of this problem imposes the additional requirement that for every pair of isomorphic CA graphs $G$ and $H$ their representations $\rho_G$ and $\rho_H$ should have identical underlying sets of arcs, i.e.~$\set{\rho_G(v)}{ v \in V(G)} = \set{\rho_H(v)}{ v \in V(H)}$. Notice that solving the representation problem for CA graphs implies solving the recognition problem for CA graphs, i.e.~the question given a graph $G$ is it a CA graph. Likewise, solving the canonical representation problem for CA graphs implies solving the isomorphism problem for CA graphs, i.e.~deciding whether two given CA graphs are isomorphic. 

Consider the following generalization of interval graphs: 2-interval graphs are intersection graphs of two intervals on the real line. It is easy to see that this class contains CA graphs because given a set of arcs one can cut the circle at some point and straighten the arcs. The arcs which are cut can be modeled as two intervals. It is interesting to note that the isomorphism problem for interval graphs is logspace-complete \cite{kob:intv} while the one for 2-interval graphs is already GI-complete and CA graphs lie inbetween these two classes. The GI-completeness for 2-interval graphs follows from the fact that every line graph is a 2-interval graph and line graphs are already GI-complete. To see why this inclusion holds consider a graph $G$ and its line graph $L(G)$. Assign every vertex $v$ of $G$ an interval $I_v$ on the real line such that no two intervals $I_u$ and $I_v$ intersect for every pair of distinct vertices $u$ and $v$ of $G$. The 2-interval model for $L(G)$ is obtained by mapping every edge $\{u,v\}$ of $G$ to the two intervals $I_u$ and $I_v$. 

While a polynomial-time algorithm for deciding isomorphism of interval graphs is known since 1976 due to Booth and Lueker\nocite{lue} this question still remains open for CA graphs.    
There have been two claimed polynomial-time algorithms for deciding isomorphism of CA graphs in \cite{wu} and \cite{hsu} which were shown to be incorrect in \cite{esc} and \cite{cur} respectively. For interval graphs even a linear-time algorithm for isomorphism is known \cite{lue}. A more recent result is that canonical interval representations for interval graphs can be computed in logspace and that this is optimal in the sense that recognition and deciding isomorphism for interval graphs is logspace-complete \cite{kob:intv}. These two hardness results also carry over to the class of CA graphs.  Furthermore, the isomorphism problem for proper CA graphs \cite{kob:pca} and Helly CA graphs \cite{kob:hca} have been shown to be decidable in logspace. It is also shown how to obtain canonical representations for these subclasses in logspace.

In this article we explain how the method used in \cite{kob:hca} to obtain canonical representation for Helly CA graphs can be adapted to CA graphs in general. Following this approach, canonical representations for CA graphs can be found by computing certain subsets of vertices called flip sets in an isomorphism-invariant manner. We introduce the class of uniform CA graphs for which this method yields canonical representations in polynomial-time. We then aim to isolate the instances of CA graphs which are difficult to handle with this method. We try to capture these hard instances by what we call restricted CA matrices and show that the canonical representation problem for CA graphs is logspace-reducible to that of restricted CA matrices.
During this isolation process we find a subset of uniform CA graphs, namely $\Delta$-uniform CA graphs, for which canonical representations can be computed in logspace. The $\Delta$-uniform CA graphs contain Helly CA graphs and every CA graph without an induced 4-cycle.
This generalizes the canonization result for Helly CA graphs given in \cite{kob:hca}.
A preliminary version of this work appeared in \cite{cha1}.

The paper is organized as follows. In the third section we formalize the idea of computing invariant flip sets in order to obtain canonical representations for CA graphs. This leads to the definition of invariant flip set functions. In the fourth section we investigate for what CA graphs a particular invariant flip set function is easy to compute. This leads to the class of uniform CA graphs. We also provide an alternative characterization of uniform CA graphs in terms of whether certain triangles in a CA graph have an unambiguous representation. The main result of this section is that the representation problem for uniform CA graphs, the canonical representation problem for uniform CA graphs and the non-Helly triangle representability problem (introduced in section 4) for uniform CA graphs are all logspace-equivalent. In the fifth section we consider the structure of non-uniform CA graphs, introduce restricted CA matrices and show how the canonical representation problem for CA graphs can be reduced to that of restricted CA matrices. In the process of proving this reduction the class of $\Delta$-uniform CA graphs is defined and it is shown that this class can be canonized in logspace.

\section{Preliminaries}
    For a number $n \in \mathbb{N}$ we write $[n]$ for $\{1,\dots,n \}$.
    Given two sets $A,B$ we say $A$ and $B$ intersect if $A \cap B \neq \emptyset$. We say $A$ and $B$ overlap, in symbols $A \between B$, if $A \cap B, A \setminus B$ and $B \setminus A$ are non-empty.     
    
    We consider graphs without self-loops which sometimes have colored vertices and colored edges. They can be seen as relational structures with the vertex set as universe and vertex colors encoded as unary relations and edge colors as binary relations. The standard notion of isomorphism for relational structures applies. We describe a graph with vertex colors as tuple $(G,c)$ where $c$ is a function that maps the vertices of $G$ to the colors. We talk about a graph with edge colors as a square matrix whose entries represent the edge colors and identify the indices of the matrix and the vertices of the graph. Consequently, we identify a square matrix with the graph that it represents and talk about it in graph-theoretical terms. 
    By a class of (relational) structures we mean a set of such structures which is closed under isomorphism.    
       
    We call a bijective function $\tau$ which maps the vertices of a graph $G$ to some set $V'$ a relabeling of $G$ and $\tau(G)$ denotes the graph obtained after relabeling the vertices of $G$ according to $\tau$. 
    Let $G$ and $H$ be two graphs and let $X \subseteq V(G)$ and $Y \subseteq V(H)$. We say $X$ and $Y$ are in the same orbit, in symbols $X \simorbit Y$,  if there exists an isomorphism $\pi$ from $G$ to $H$ such that $\pi(X) = Y$. Let $f$ be a function which maps a graph along with a subset of its vertex set to a binary string, i.e.~$f(G,X) \in \{0,1\}^*$ and $X \subseteq V(G)$. We call $f$ an invariant for a graph class $\mathcal{C}$ if $f(G,X) = f(H,Y)$ whenever $X \simorbit Y$ and $G,H \in \mathcal{C}$.        
    Let us call a function $f$ which maps a graph $G$ to a family of subsets of its vertex set, i.e.~$f(G) \subseteq \mathcal{P}(V(G))$, a vertex set selector.
    For example, the function that maps a graph to the set of its cliques is a vertex set selector.
    The characteristic function $\chi_f$ of a vertex set selector $f$ is defined as $\chi_f(G,X) = 1 \Leftrightarrow X \in f(G)$.     
    We say a vertex set selector $f$ is invariant for a graph class $\mathcal{C}$ if its characteristic function $\chi_f$ is an invariant for $\mathcal{C}$. We call $f$ globally invariant if $\chi_f$ is an invariant for all graphs. 
    Intuitively, a vertex set selector $f$ is invariant for $\mathcal{C}$ if a graph $G \in \mathcal{C}$ can be arbitrarily relabeled and $f$ still returns the `same' vertex sets as before w.r.t.~$\simorbit$.
    
    The following definitions are with respect to a graph $G$. Throughout the paper it will be always clear from context with respect to what graph these expressions are to be interpreted. 
    For a vertex $v$ we define its open neighborhood $N(v)$ as the set of vertices which are adjacent to $v$ and its closed neighborhood $N[v] = N(v) \cup \{v\}$. 
    A vertex $v$ is called universal if $N[v] = V(G)$. For two vertices $u, v$ we say that $u$ and $v$ are twins if $N[u] = N[v]$. A graph $G$ is twin-free if for every pair of distinct vertices $u \neq v$ it holds that $N[u] \neq N[v]$. A twin class is an inclusion-maximal set of vertices $X$ such that for all $u,v \in X$ it holds that $u$ and $v$ are twins.
    For two subsets of vertices $S,S'$ with $S' \subseteq S$ we define the exclusive neighborhood $N_{S}(S')$ 
    as all vertices $v \in V(G) \setminus S$ such that $v$ is connected to all vertices in $S'$ and to none in $S \setminus S'$. Let $A$ be a square matrix with entries from a set $\mathcal{E}$. For a vertex $u$ of $A$ and $x \in \mathcal{E}$ we define $N^x(u) = \set{v \in V}{ A_{u,v} = x}$.
    
    \subparagraph*{Logspace Transducers and Reductions.} 
    We assume deterministic Turing machines as default model of computation.
    A logspace transducer is a deterministic Turing machine $M$ with a read-only input tape, a work tape and a write-only output tape. The work tape is only allowed to use at most $\mathcal{O}(\log n)$ cells where $n$ denotes the input length. 
    To write onto the output tape $M$ has a designated state called output state with the following semantic. If $M$ enters the output state then the symbol in the current cell of the work tape is written to the current cell of the output tape and the head on the output tape is moved one cell to the right. Other than that, $M$ cannot write or move the head on the output tape.     
    This means as soon as something is written to the output tape it cannot be modified afterwards.     
    Let $\Sigma$ and $\Gamma$ be the input and work alphabet of $M$ respectively. Then $M$ computes a function $f_M \colon \Sigma^* \rightarrow \Gamma^*$.     
    We say a (partial) function $f$ is computed by a logspace transducer $M$ if $f(x) = f_M(x)$ whenever $f(x)$ is defined. We call $f$ logspace-computable if there exists a logspace transducer $M$ which computes $f$. The class of logspace-computable functions is closed under composition. 
    Let $f$ be a function which maps words over some alphabet to words over some other alphabet. We say that the length of $f$ is polynomially bounded if $|f(x)|$ is polynomially bounded by $|x|$. Only functions whose length is polynomially bounded can be logspace-computable since the runtime of a logspace transducer is polynomially bounded.    
    A language $L \subseteq \Sigma^*$ is in logspace if its characteristic function is logspace-computable.
    
    Given two functions $f$ and $g$ we say $f$ is logspace-reducible to $g$ if there exists $l \in \N$ and logspace-computable functions $r_1,\dots,r_l$ such that $f$ can be expressed as composition of $g$ and $r_1,\dots,r_l$. Intuitively, this means that an oracle which can compute $g$ can be queried a constant number of times when constructing a logspace transducer for $f$.  
    For two functions $f$ and $g$ we say that they are logspace-equivalent if $f$ is logspace-reducible to $g$ and vice versa. Analogously, given three functions $f,g_1,g_2$ we say $f$ is logspace-reducible to $g_1$ and $g_2$ if there exists $l \in \N$ and logspace-computable functions $r_1,\dots,r_l$ such that $f$ can be expressed as composition of $g_1,g_2$ and $r_1,\dots,r_l$.                
    
    \subparagraph*{Circular-Arc Graphs and Representations.}
    A CA model is a set of arcs $\mathcal{A} = \{A_1,\dots,A_n\}$ on the circle. Let $p \neq p'$ be two points on the circle. Then the arc $A$ specified by $[p,p']$ is given by the part of the circle that is traversed when starting from $p$ going in clockwise direction until $p'$ is reached. We say that $p$ is the left and $p'$ the right endpoint of $A$ and write $l(\cdot),r(\cdot)$ to denote the left and right endpoint of an arc in general. If $A = [p,p']$ then the arc obtained by swapping the endpoints $\overline{A} = [p',p]$ covers exactly the opposite part of the circle plus the endpoints. We say $\overline{A}$ is obtained by flipping $A$.
    In our context, we are only interested in the intersection structure of a CA model and thus only the relative position of the endpoints to each other matter. All endpoints can w.l.o.g.~be assumed to be pairwise different and no arc covers the full circle. Under these assumptions, a CA model $\mathcal{A}$ with $n$ arcs can be described as a unique string as follows.      
    Pick an arbitrary arc $A \in \mathcal{A}$ and relabel the arcs with $1, \dots, n$ in order of appearance of their left endpoints when traversing the circle clockwise starting from the left endpoint of $A$. Then write down the endpoints in order of appearance when traversing the circle  clockwise starting from the left endpoint of $A$. Do this for every arc and pick the lexicographically smallest resulting string as representation for $\mathcal{A}$. For example, the smallest such string for the CA model in Figure~\ref{fig:ca_intro} would result from choosing $A_1$: ($l(1),r(1),l(2),r(5),l(3),r(2),\dots$). Let $\mathrm{str}(\mathcal{A})$ denote this smallest string representation. 
    For a CA model $\mathcal{A}$ let $\mathcal{A}^{\mathrm{r}}$ be the CA model obtained after reversing the order of its endpoints.
    Observe that reversing the endpoints does not affect the intersection structure of a CA model. Therefore we consider two CA models $\mathcal{A}$ and $\mathcal{B}$ to be equal if $\mathrm{str}(\mathcal{A}) = \mathrm{str}(\mathcal{B})$ or $\mathrm{str}(\mathcal{A}^{\mathrm{r}}) = \mathrm{str}(\mathcal{B})$.
    
    Let $G$ be a graph and $\rho = (\mathcal{A},f)$ consists of a CA model $\mathcal{A}$ and a bijective mapping $f$ from the vertices of $G$ to the arcs in $\mathcal{A}$. Then $\rho$ is called a CA representation of $G$ if for all $u \neq v \in V(G)$ it holds that $\{u,v\} \in E(G) \Leftrightarrow f(u) \cap f(v) \neq \emptyset$. We write $\rho(x)$ to mean the arc $f(x)$ corresponding to the vertex $x$, $\rho(G)$ for the CA model $\mathcal{A}$ and for a subset $V' \subseteq V(G)$ let $\rho[V'] = \left\{ \rho(v) \mid v \in V' \right\}$.
    A graph is a CA graph if it has a CA representation. 
    
    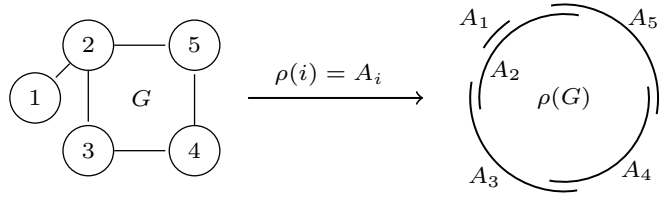
\begin{figure}
    \centering
    \resizebox{9cm}{!}{%
        \begin{tikzpicture}[shorten >=1pt,auto,node distance=1.2cm,
  main node/.style={circle,draw}]

\newcommand*{\xoff}{0}%
\newcommand*{\yoff}{0}%

\newcommand*{\xoffca}{4.5}%
\newcommand*{\yoffca}{0}%

\newcommand*{\movea}{1.4}%
\newcommand*{\moveb}{0.8*sqrt(1/2)}%

\node[main node] (v1) at ({\xoff-\moveb},{\yoff+\moveb}) {2};
\node[main node] (v2) at ({\xoff-\moveb},{\yoff-\moveb}) {3};
\node[main node] (v3) at ({\xoff+\moveb},{\yoff-\moveb}) {4};
\node[main node] (v4) at ({\xoff+\moveb},{\yoff+\moveb}) {5};
\node[main node] (v5) at ({\xoff-2*\moveb},{\yoff}) {1};
\path[-]
(v1) edge (v4)
(v1) edge (v2)
(v1) edge (v5)
(v2) edge (v3)
(v3) edge (v4)
;
\node (M) at (\xoff,\yoff) {$G$};

\draw[line width=0.6pt,<-] (\xoff+3,\yoff) -- (\xoffca-3.4,\yoff);
\node (M) at (\xoff+2,\yoff+0.25) {$\rho(i) = A_i$};

\centerarc[](\xoffca,\yoffca)(90-10:180+10:0.9);
\centerarc[](\xoffca,\yoffca)(180-10:270+10:1);
\centerarc[](\xoffca,\yoffca)(270-10:360+10:0.9);
\centerarc[](\xoffca,\yoffca)(0-10:90+10:1);
\centerarc[](\xoffca,\yoffca)(125:150:1.01);

\node (a1) at ($({\xoffca+1.2*cos(45)},{\yoffca+1.2*sin(45)})$) {$A_5$};  
\node (a2) at ($({\xoffca+1.1*cos(45-90)},{\yoffca+1.1*sin(45-90)})$) {$A_4$};  
\node (a3) at ($({\xoffca+1.2*cos(45-180)},{\yoffca+1.2*sin(45-180)})$) {$A_3$};   
\node (a4) at ($({\xoffca+0.75*cos(-195)+0.1},{\yoffca+0.75*sin(-195)+0.1})$) {$A_2$};      
\node (a5) at ($({\xoffca+1.2*cos(45+90)-0.1},{\yoffca+1.2*sin(45+90)})$) {$A_1$};      
\node (M) at (\xoffca,\yoffca) {$\rho(G)$};

\end{tikzpicture}
    }
    \caption{A CA graph and a representation of it}
    \label{fig:ca_intro}                
    \end{figure}          
    
    We say a CA model $\mathcal{A}$ has a hole if there exists a point on the circle which isn't contained by any arc in $\mathcal{A}$. Every such CA model can be understood as interval model (a set of intervals on the real line)  by straightening the arcs. Conversely, every interval model can be seen as CA model by bending the intervals. Therefore a graph is an interval graph iff it admits a CA representation with a hole.
    
    A family of sets $\mathcal{F}$ over some ground set is called Helly if for all subsets $\mathcal{F}'$ of $\mathcal{F}$ such that all elements in $\mathcal{F}'$ intersect pairwise it holds that $\cap_{A \in \mathcal{F}'} A$ is non-empty.    
    A CA graph $G$ is called Helly (HCA graph) if it has a CA representation $\rho$ with a Helly CA model $\rho(G)$. This is the case iff for all inclusion-maximal cliques $C$ in $G$ it holds that the overall intersection of $C$ in $\rho$ is non-empty, i.e.~$\bigcap_{v \in C} \rho(v) \neq \emptyset$. Every interval model has the Helly property and therefore every interval graph is a Helly CA graph.         
        
    The intersection type of two circular arcs $A$ and $B$ can be one of the following five types:
        \begin{itemize}
            \item $\disjoint$: $A$ and $B$ are disjoint  --- $A \cap B = \emptyset$
            \item $\contains$: $A$ contains $B$  --- $B \subset A$
            \item $\contained$: $A$ is contained by $B$  --- $A \subset B$        
            \item $\circlecover$: $A$ and $B$ jointly cover the circle (circle cover) --- $A \between B$ and $A \cup B = $ whole circle
            \item $\overlap$: $A$ and $B$ overlap  --- $A \between B$ and $A \cup B \neq $ whole circle
        \end{itemize}       
    
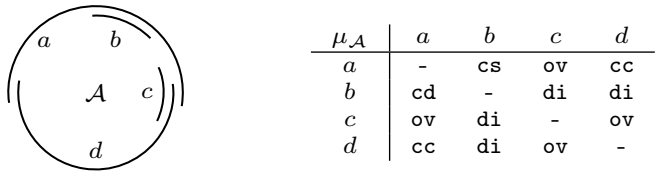
\begin{figure}[b]
    \centering
    \resizebox{9cm}{!}{%
        \begin{tikzpicture}[shorten >=1pt,auto,node distance=1.2cm,
  main node/.style={circle,draw}]

\newcommand*{\xoff}{0}%
\newcommand*{\yoff}{0}%

\newcommand*{\xoffca}{0}%
\newcommand*{\yoffca}{0}%

\carc(\xoffca:\yoffca:90:198:0.9);
\carc(\xoffca:\yoffca:270:198:0.8);
\carc(\xoffca:\yoffca:70:50:0.8);
\carc(\xoffca:\yoffca:0:50:0.7);

\arclabel(cam:$\mathcal{A}$:\xoffca:\yoffca:0:0);
\arclabel(a:$a$:\xoffca:\yoffca:0.75:135);
\arclabel(b:$b$:\xoffca:\yoffca:0.6:70);
\arclabel(c:$c$:\xoffca:\yoffca:0.53:0);
\arclabel(d:$d$:\xoffca:\yoffca:0.6:270);

\node (tab) at (4,0) { 
\begin{tabular}{ c |  c c c c }
  $\mu_{\mathcal{A}}$ & $a$ & $b$ & $c$ & $d$ \\
  \hline
$a$ & - & $\contains$ & $\overlap$ & $\circlecover$ \\
$b$ & $\contained$ & - & $\disjoint$ & $\disjoint$ \\
$c$ & $\overlap$ & $\disjoint$ & - & $\overlap$ \\
$d$ & $\circlecover$ & $\disjoint$ & $\overlap$ & - \\
\end{tabular}
};


\end{tikzpicture}
    }
    \caption{A CA model $\mathcal{A}$ and its intersection matrix $\mu_\mathcal{A}$}
    \label{fig:imatrix}  
\end{figure}         
    
    Using these types we can associate a matrix with every CA model. An intersection matrix is a square matrix with entries $\itypeset$. Given a CA model $\mathcal{A}$ we define its intersection matrix $\mu_\mathcal{A}$ such that $(\mu_{\mathcal{A}})_{A,B} \in \itypeset$ reflects the intersection type of the arcs $A \neq B \in \mathcal{A}$. An intersection matrix $\mu$ is called a CA (interval) matrix if it is the intersection matrix of some CA model (with a hole). See Figure \ref{fig:imatrix} for an example of a CA model and the CA matrix which it induces. 
    Given an intersection matrix $\mu$ and two distinct vertices $u, v$ of $\mu$ we sometimes write $u \: \alpha \: v$ instead of $\mu_{u,v} = \alpha$ if $\mu$ is clear from the context. Also, we sometimes talk about an intersection matrix $\mu$ as if it were a graph. In that case we consider two vertices $u,v$ of $\mu$ to be adjacent if they do not have a $\disjoint$-entry in $\mu$. 
        
    When trying to construct a CA representation for a CA graph $G$ it is clear that whenever two vertices are non-adjacent their corresponding arcs must be disjoint in every CA representation of $G$. For two adjacent vertices the intersection type of their corresponding arcs might depend on the particular CA representation of $G$ that one considers. Hsu has shown that this ambiguity can be removed as follows \cite{hsu}. 
    
    We adopt the notation of \cite{kob:hca}.
    
    \begin{definition}
        For a graph $G$ we define its neighborhood matrix $\lambda_G$ which is an intersection matrix as        
    \[
    (\lambda_G)_{u,v} = 
    \begin{cases}
    \disjoint &, \text{if } \{ u,v \} \notin E(G) \\
    \contained &, \text{if } N[u] \subsetneq N[v] \\
    \contains &, \text{if } N[v] \subsetneq N[u] \\
    \circlecover &, \text{if } N[u] \between N[v] \text{ and } N[u] \cup N[v] = V(G) \\
    & \text{ and } \forall w \in N[u]\setminus N[v]: N[w] \subset N[u] \\
    & \text{ and } \forall w \in N[v]\setminus N[u]: N[w] \subset N[v] \\
    \overlap &, \text{otherwise}
    \end{cases}
    \]
    for all $u \neq v \in V(G)$. 
    \end{definition} 

    Let $\mu$ be an intersection matrix with vertex set $V$ and let $\rho = (\mathcal{A},f)$ where $\mathcal{A}$ is a CA model and $f$ is a bijective mapping from $V$ to $\mathcal{A}$. We say $\rho$ is a CA representation of $\mu$ if $f$ is an isomorphism from $\mu$ to the intersection matrix $\mu_{\mathcal{A}}$ of $\mathcal{A}$. We denote the set of such CA representations for $\mu$ with $\mathcal{N}(\mu)$. The representation problem for CA matrices is to compute a CA representation for a given CA matrix $\mu$. The canonical representation problem for CA matrices is defined analogously to the canonical representation problem for CA graphs.    
    We say $\rho$ is a normalized CA representation for a graph $G$ if $\rho$ is a CA representation for the neighborhood matrix $\lambda_G$ of $G$. An example of a normalized representation can be seen in Figure~\ref{fig:normrepr}.  
    Let us denote the set of all normalized CA representations for $G$ with $\mathcal{N}(G) = \mathcal{N}(\lambda_G)$.    
    
    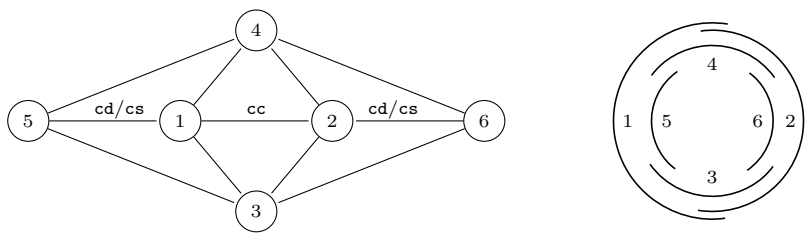
\begin{figure}[b]
        \centering
        \begin{tikzpicture}[shorten >=1pt,auto,node distance=1.2cm,
  main node/.style={circle,draw}]

\newcommand*{\xoff}{0}%
\newcommand*{\yoff}{0}%

\newcommand*{\xoffca}{6}%
\newcommand*{\yoffca}{0}%

\newcommand*{\movea}{1}%

\node (midcc) at ({\xoff},{\yoff+0.15}) {$\circlecover$};
\node (leftcd) at ({\xoff-1.8*\movea},{\yoff+0.15}) {$\contained$/$\contains$};
\node (rightcd) at ({\xoff+1.8*\movea},{\yoff+0.15}) {$\contained$/$\contains$};

\node[main node] (v1) at ({\xoff-\movea},{\yoff}) {$1$};
\node[main node] (v2) at ({\xoff+\movea},{\yoff}) {$2$};
\node[main node] (v3) at ({\xoff},{\yoff-1.2*\movea}) {$3$};
\node[main node] (v4) at ({\xoff},{\yoff+1.2*\movea}) {$4$};
\node[main node] (v5) at ({\xoff-3*\movea},{\yoff}) {$5$};
\node[main node] (v6) at ({\xoff+3*\movea},{\yoff}) {$6$};

\path[-]
(v1) edge (v3)
(v1) edge (v4)
(v2) edge (v3)
(v2) edge (v4)

(v1) edge (v2)

(v5) edge (v1)
(v5) edge (v3)
(v5) edge (v4)

(v6) edge (v2)
(v6) edge (v3)
(v6) edge (v4)
;

\carc(\xoffca:\yoffca:180:198:1.3);
\carc(\xoffca:\yoffca:0:198:1.2);

\carc(\xoffca:\yoffca:0:110:0.8);
\carc(\xoffca:\yoffca:180:110:0.8);
\carc(\xoffca:\yoffca:90:110:1);
\carc(\xoffca:\yoffca:270:110:1);

\arclabel(a1:$1$:\xoffca:\yoffca:1.11:180);
\arclabel(a2:$2$:\xoffca:\yoffca:1.03:0);

\arclabel(a5:$5$:\xoffca:\yoffca:0.6:180);
\arclabel(a6:$6$:\xoffca:\yoffca:0.6:0);

\arclabel(a4:$4$:\xoffca:\yoffca:0.75:90);
\arclabel(a3:$3$:\xoffca:\yoffca:0.75:270);

%

%
%
%
%
%
%
%
%
%

\end{tikzpicture}
        \caption{A CA graph and a normalized representation thereof. Every non-labeled edge corresponds to an $\overlap$-entry in the neighborhood matrix.}
        \label{fig:normrepr}  
    \end{figure}

    \begin{lemma}[Corollary 2.3.~\cite{hsu}]
        Every twin-free CA graph $G$ without a universal vertex has a normalized CA representation, that is $\mathcal{N}(G) \neq \emptyset$.    
    \end{lemma}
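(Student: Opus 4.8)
\emph{Proof strategy.} The plan is to start from an arbitrary CA representation $\rho$ of $G$ and repeatedly slide arc endpoints --- never changing the represented graph --- until every pair of arcs has the intersection type prescribed by $\lambda_G$. Note first that since $G$ has no universal vertex, no arc of any CA representation of $G$ covers the whole circle, as such an arc would meet every other arc. Next I would isolate the ``free'' facts, valid in \emph{every} CA representation of $G$: (i) non-adjacent vertices have disjoint arcs; (ii) $\rho(x)\subseteq\rho(y)$ implies $N[x]\subseteq N[y]$; (iii) if $\rho(x),\rho(y)$ circle-cover then $N[x]\cup N[y]=V(G)$. Combining these with twin-freeness and the no-universal-vertex hypothesis one checks: disjoint arcs always match $\lambda_G$ by (i); a nested pair of arcs is always typed $\contained$ or $\contains$ by (ii), the strictness of the neighborhood inclusion coming from twin-freeness; and a circle-covering pair is always typed $\circlecover$ --- by (iii) and no-universal one gets $N[x]\between N[y]$ with incomparable neighborhoods, and (ii) together with twin-freeness applied to the exclusive neighbors (whose arcs, being disjoint from the partner arc, lie inside $\rho(x)$ resp.\ $\rho(y)$) verifies the remaining clauses of the $\circlecover$-rule. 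Hence the only pairs at which $\rho$ can disagree with $\lambda_G$ are \emph{overlapping} pairs of arcs whose prescribed type is $\contained$, $\contains$ or $\circlecover$; for instance, if $N[u]\subsetneq N[v]$ then $\rho(u),\rho(v)$ cannot be disjoint, cannot be nested the wrong way by (ii), and cannot circle-cover as that would make $v$ universal, so either they are already nested the right way or they overlap.

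These disagreements are to be removed by \emph{legal slides}: moving one endpoint of one arc past the next endpoint in the cyclic order, subject to creating and destroying no edge of $G$. A slide affects only edges at the moving arc's vertex; a shrinking slide of $\rho(u)$ destroys an edge only when it moves an endpoint over an arc that meets $\rho(u)$ in a single sub-arc hugging that endpoint, and a growing slide of $\rho(u)$ creates an edge only when it reaches a region containing an arc not yet adjacent to $u$. To remove a $\contained$-disagreement ($N[u]\subsetneq N[v]$, arcs overlapping) the move is to push the endpoint of $\rho(u)$ that sticks out of $\rho(v)$ inward and/or pull the matching endpoint of $\rho(v)$ outward over $\rho(u)$; the inclusion $N[u]\subseteq N[v]$ is precisely what guarantees that no edge at $u$ is lost and no edge at $v$ is gained. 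I would package this as an extremal argument --- pass to a representation of $G$ realizing as many nestings as possible, breaking ties by minimal total arc span --- and show that no $\contained$- or $\contains$-disagreement can survive in it, the witness against a surviving one being a vertex in $N[v]\setminus N[u]$ whose arc pins the offending endpoint of $\rho(u)$. Verifying this, with a case analysis on which side $\rho(v)$ protrudes and attention to ``wrap-around'' arcs that meet $\rho(u)$ near one end only, is the technical core of this part.

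It then remains to remove the $\circlecover$-disagreements from a representation that already realizes all $\disjoint$, $\contained$, $\contains$ and $\overlap$ entries. For each such pair $u,v$ I would grow $\rho(u)$ and $\rho(v)$ toward each other across the gap separating $\rho(u)\cup\rho(v)$ from the rest of the circle: $N[u]\cup N[v]=V(G)$ says no whole arc lies in that gap, and the $\forall w$-clauses of the $\circlecover$-rule say every endpoint in the gap belongs to an arc that already meets $\rho(u)$ or $\rho(v)$ from the side facing the gap, so the two growths create no edge and close the gap. The step I expect to be the main obstacle is carrying out all of this simultaneously and consistently: a vertex may be prescribed $\contained$ with one neighbor and $\circlecover$ with another, so the growths must be arranged so as never to push an arc out of a container it is required to lie inside --- this is exactly what the strong $\forall w$-conditions in the definition of $\lambda_G$ are designed to make possible, and handling it cleanly is the heart of Hsu's structural analysis underlying this corollary. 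Throughout, twin-freeness is used to upgrade every neighborhood inclusion forced by an arc inclusion to a strict one, and the no-universal-vertex hypothesis is used to guarantee that no arc is ever the full circle and that $N[u]\between N[v]$ whenever two arcs circle-cover.
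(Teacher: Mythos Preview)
The paper does not prove this lemma; it is quoted verbatim as Corollary~2.3 of Hsu~\cite{hsu} and used as a black box. So there is no ``paper's own proof'' to compare against.

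Your strategy is the right shape and is in the spirit of Hsu's argument: start from an arbitrary model and massage it until the arc intersection types agree with $\lambda_G$. Your ``free facts'' (i)--(iii) are correct, and the conclusion that the only possible disagreements are overlapping arcs whose prescribed type is $\contained/\contains/\circlecover$ is exactly the right reduction. The verification that a circle-covering pair automatically satisfies the full $\circlecover$ clause (using that $w\in N[x]\setminus N[y]$ forces $\rho(w)\subseteq \text{circle}\setminus\rho(y)\subseteq\rho(x)$, hence $N[w]\subsetneq N[x]$ by twin-freeness) is clean and correct.

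Where you should be cautious is precisely where you already flag trouble. The extremal argument for realizing all containments is delicate: when you shrink the protruding end of $\rho(u)$, an arc $\rho(w)$ that meets $\rho(u)$ only near that end may wrap around and meet $\rho(v)$ on the far side, so the naive ``$N[u]\subseteq N[v]$ makes the slide legal'' justification is not quite enough on its own --- one really needs the case analysis you allude to, and in Hsu's treatment this is handled via his structural decomposition rather than a raw sliding argument. Likewise, your concern about consistency between the $\contained$-fixes and the $\circlecover$-fixes is real: growing $\rho(u)$ to close a gap with a $\circlecover$-partner can push $\rho(u)$ out of an arc it is required to be contained in. Hsu resolves this not by a local sliding procedure but by a global structural analysis (similar modules, the ``conformal'' model construction), which is why the result appears in his paper as a corollary of that machinery rather than as a direct endpoint-sliding lemma. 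Your outline is a legitimate plan, but turning it into a self-contained proof would essentially require redeveloping a fair portion of that analysis.
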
  

    \begin{lemma}
        The canonical representation problem for CA graphs is logspace-reducible to the canonical representation problem for vertex-colored twin-free CA graphs without a universal vertex.  
    \end{lemma}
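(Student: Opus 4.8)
The plan is to carry out the two familiar normalisations---deleting universal vertices and collapsing twin classes---and to verify that each of them can be reversed at the level of CA representations by a deterministic manipulation of the endpoint sequence, so that the whole procedure becomes a logspace reduction with a single oracle call.

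Let $G$ be a CA graph; apart from the trivial case where $G$ is a complete graph (in particular $|V(G)| \le 1$), which is handled directly by outputting $|V(G)|$ nested arcs, I would proceed as follows. Let $U$ be the set of universal vertices of $G$ and put $k := |U|$. A universal vertex of $G_1 := G - U$ would be adjacent to every vertex of $V(G) \setminus U$ and, the members of $U$ being universal in $G$, to every vertex of $U$, hence would be universal in $G$ and lie in $U$; so $G_1$ is a CA graph (being an induced subgraph) with no universal vertex. Next let $G_2$ be the induced subgraph of $G_1$ on a system of representatives of the twin classes, each representative coloured by the size of its class. Then $G_2$ is a vertex-coloured CA graph; it is twin-free, since distinct twin classes of $G_1$ have distinct closed neighbourhoods and this is already witnessed inside the set of representatives; and it has no universal vertex, since such a vertex would again be universal in $G_1$ (twins being mutually adjacent). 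To turn the construction into a genuine composition I would append to $G_2$ a set of $k$ pairwise non-adjacent vertices carrying a colour distinct from all twin-class sizes, obtaining $\widehat{G}$; in the non-trivial case $G_2$ has at least two vertices (it cannot have exactly one, as $G_1$ has no universal vertex), so $\widehat{G}$ is still a vertex-coloured twin-free CA graph without a universal vertex, tiny pairwise disjoint arcs realising the appended vertices. The map $G \mapsto \widehat{G}$ is logspace-computable.

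Now I would query the oracle for the canonical representation problem on $\widehat{G}$, obtaining a CA representation $\rho = (\mathcal{A}, f)$ whose underlying arc set depends only on the colour-preserving isomorphism type of $\widehat{G}$. From $\rho$ I read off $k$ as the number of arcs whose vertex has the appended colour and delete those arcs; what remains is a CA representation $\rho_2$ of $G_2$ in which $f$ still assigns to every arc the twin-class size of its vertex. I undo the twin contraction by blowing up the cyclic endpoint sequence of $\rho_2$: an arc $[l(w), r(w)]$ whose vertex has colour $c$ is replaced by $c$ nested arcs obtained by splitting the slot occupied by $l(w)$ into $c$ consecutive slots, likewise for $r(w)$, and matching them so the arcs are nested. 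Since the new endpoints stay strictly between the old ones and their cyclic neighbours, no intersection type between a pre-existing arc and these copies changes and the $c$ copies pairwise intersect, so a CA representation $\rho_1$ of $G_1$ results. Finally I undo the deletion of $U$ by choosing the canonical base point used to form $\mathrm{str}(\mathcal{A})$ (or the midpoint of the endpoint gap following it, to avoid coinciding with an endpoint) and adding $k$ nested arcs, each equal to the whole circle minus a small neighbourhood of that point; every arc of $\rho_1$ meets all of them, since no arc lies inside that neighbourhood, so these $k$ arcs realise $k$ universal twins and we obtain a CA representation $\rho_G$ of $G$, which is the output.

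It remains to argue canonicity. If $G \isomorph H$ then their universal-vertex sets correspond, so $k_G = k_H$, and their twin classes correspond together with their sizes, so $\widehat{G} \isomorph \widehat{H}$ as coloured graphs and the oracle returns the same arc set for both. Every subsequent step---recovering $k$, deleting the appended-colour arcs, the slot blow-up, adding the universal arcs---is a function of the current arc set, the arc-to-colour labelling, and the invariant $k$ only, never of the vertex names; hence $\rho_G$ and $\rho_H$ have identical underlying arc sets. I expect the main obstacle to be precisely this bookkeeping: specifying the two ``un-normalisation'' rules purely in terms of the cyclic endpoint sequence together with the colours (so that they commute with relabelling) and pinning down in which sense the oracle's output is invariant for coloured graphs. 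The geometric content of the two reverse steps---that $\varepsilon$-nested copies of an arc, and a single ``almost full'' arc, always work---is routine once the endpoints are viewed as a cyclic sequence of slots.
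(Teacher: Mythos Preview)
Your overall approach matches the paper's: delete universal vertices, collapse twin classes, colour representatives by class size, query the oracle, and then reverse both normalisations. Your reverse steps (the nested blow-up for twins and the nested almost-full arcs for universal vertices) are in fact spelled out more carefully than the paper's version, which simply sends every twin to the same arc and leaves the choice of the universal arc implicit.

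There is, however, a genuine error in the step where you encode $k$ by appending $k$ isolated vertices to $G_2$. You claim that ``tiny pairwise disjoint arcs'' realise the appended vertices, but an arc disjoint from every arc of a CA model exists only if that model has a hole, i.e.\ only if $G_2$ is an interval graph. Take $G$ to be $\overline{3K_2}$ together with one universal vertex: this is a CA graph with $k=1$, and $G_1=G_2=\overline{3K_2}$, which is not an interval graph (it contains an induced $4$-cycle). Hence your $\widehat{G}=\overline{3K_2}$ plus one isolated vertex is not a CA graph at all, and the oracle cannot be invoked on it. The fix is simply to drop this encoding: under the paper's notion of logspace reduction---a logspace transducer making a constant number of oracle calls while retaining access to its original input---the machine already knows $k$ when it post-processes the oracle's answer, so nothing needs to be smuggled through. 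If you insist on a strict functional composition, encode $k$ in the vertex colours (replace each colour $c$ by the pair $(c,k)$) rather than in the graph structure.
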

    \begin{proof}
        For a graph $G$ let $G_0$ denote the induced subgraph of $G$ that is obtained by removing all universal vertices from $G$ and only taking one vertex from each twin-class and deleting the rest. Let $c_0$ be a coloring of $G_0$ which assigns each vertex the cardinality of its twin class in $G$. It holds that $(G_0,c_0)$ and the number of universal vertices in $G$ suffice to reconstruct $G$.  
        Let $G$ be a CA graph. Compute the graph $(G_0,c_0)$. Since $(G_0,c_0)$ is twin-free and without universal vertices we can compute a canonical representation $\rho_0$ for it. For a vertex $v$ of $G$ let $v_0$ denote the twin of $v$ that occurs in $G_0$. A canonical representation of $G$ is given by $v \mapsto \rho_0(v_0)$ for every non-universal vertex $v$ of $G$ and every universal vertex of $G$ is represented by an arc which intersects with all other arcs. 
    \end{proof}
    
    Therefore for our purposes it suffices to consider only twin-free graphs without universal vertices and a vertex-coloring. 
    
    \begin{proviso}
        From this point on we assume every graph to be twin-free and without a universal vertex unless explicitly stated otherwise. As a consequence we view CA graphs as a set of CA matrices in the sense that the neighborhood matrix of every CA graph is a CA matrix.    
    \end{proviso}

\subparagraph*{Flips in Intersection Matrices.}
    \begin{table}[b]
    \caption{Algebraic flip functions $Z_{xy} \colon \itypeset \rightarrow \itypeset$}
    \label{tab:flip}
    \begin{center}
        \begin{tabular}{l | c c c c c}
            $Z_{xy}(\alpha)$ & \circlecover & \contained & \contains & \disjoint & \overlap \\
            \hline
            $Z_{00}$ & \circlecover & \contained & \contains & \disjoint & \overlap \\
            $Z_{01}$ & \contains & \disjoint & \circlecover & \contained & \overlap \\
            $Z_{10}$ & \contained & \circlecover & \disjoint & \contains & \overlap \\
            $Z_{11}$ & \disjoint & \contains & \contained & \circlecover & \overlap             
        \end{tabular}
    \end{center}
\end{table}   
McConnell \cite{mcc} observed that the operation of flipping arcs in CA models has a counterpart in intersection matrices. He called this counterpart operation algebraic flips. Note that for two arcs $A,B$ with intersection type $\alpha \in \itypeset$ the intersection type of $\overline{A}$ and $B$ is solely determined by $\alpha$. More precisely, the intersection type of $\overline{A}$ and $B$ is $Z_{10}(\alpha)$ where $Z_{10}$ is the function defined in Table~\ref{tab:flip}. Similarly, the intersection type of $A$ and $\overline{B}$ is given by $Z_{01}(\alpha)$. Using the functions $Z_{ij}$ we can define the operation of flipping a set of vertices in an intersection matrix.

\begin{definition}
	Let $\mu$ be an intersection matrix with vertex set $V$ and $X \subseteq V$. We define the intersection matrix $\mu^{(X)}$ obtained after flipping the vertices $X$ in $\mu$  as 
$$ \mu^{(X)}_{u,v} = Z_{ij}(\mu_{u,v}) \text{ with } i = 1 \text{ iff } u \in X \text{ and } j = 1 \text{ iff } v \in X     $$
    for all $u \neq v$ in $V$.
\end{definition}

Since flipping the same set of arcs twice is an involution it follows that $(\mu^{(X)})^{(X)} = \mu$. 

\begin{definition}
	Let $V$ be a set of vertices, let $\mathcal{A}$ be a set of arcs and let $\rho$ be a function that maps $V$ to $\mathcal{A}$. Then $\rho^{(X)} \colon V \rightarrow \mathcal{A}$ for $X \subseteq V$ is defined as follows:
    $$ \rho^{(X)}(v) = \begin{cases} 
    \overline{\rho(v)} & \text{, if $v \in X$} \\ 
    \rho(v) & \text{, if $v \notin X$} 
    \end{cases} $$          
\end{definition}    

Notice that flipping vertices in an intersection matrix is equivalent to flipping arcs in a CA representation in the following sense. 
Given an intersection matrix $\lambda$ and a subset of its vertices $X$ it holds that $\rho \in \mathcal{N}(\lambda) \Leftrightarrow \rho^{(X)} \in \mathcal{N}(\lambda^{(X)})$. Also, it is not difficult to observe that flipping is an isomorphism-invariant operation in the sense that flipping sets of vertices which are in the same orbit lead to isomorphic intersection matrices.

\section{Flip Trick}
In this section we generalize the idea used by Köbler, Kuhnert and Verbitsky in \cite{kob:hca} to compute canonical representations for Helly CA graphs. They showed that finding canonical representations for Helly CA graphs can be reduced to finding  canonical representations for vertex-colored interval matrices. We show that the idea behind this reduction also works for CA matrices in general. 
Recall that CA graphs can be seen as special case of CA matrices since the neighborhood matrix of every CA graph is a CA matrix. The converse does not hold, i.e.~there exist CA matrices which are not expressible as the neighborhood matrix of a CA graph (for instance any CA matrix with only two vertices that are not disjoint). 
The key result here, which is used in the subsequent sections, is that finding canonical representations for CA matrices is logspace-reducible to the task of computing what we call an invariant flip set function. 

McConnell showed in \cite{mcc} that CA representations for CA graphs can be computed as follows. Given a CA graph $G$ with neighborhood matrix $\lambda$ one can compute a set of vertices $X$ of $G$ such that $\lambda^{(X)}$ is an interval matrix. We call such a set $X$ a flip set.
Then by computing an interval representation $\rho$ for $\lambda^{(X)}$ and flipping back the arcs $X$ in $\rho$ one obtains a CA representation for $\lambda$ and therefore for $G$ as well \cite{mcc}. We essentially use the same argument to obtain canonical CA representations.

\begin{definition}
	Let $\lambda$ be a CA matrix. A subset of vertices $X$ of $\lambda$ is called a flip set if there exists a representation $\rho \in \mathcal{N}(\lambda)$ and a point $x$ on the circle such that $v \in X$ iff $\rho(v)$ contains the point $x$.
\end{definition}

The concept of flip sets has already been implicitly defined and used in both \cite{mcc} and \cite{kob:hca}. They observed that $\lambda^{(X)}$ is an interval matrix whenever $X$ is a flip set of a CA matrix $\lambda$. In fact, the other direction holds as well leading to the following characterization.

\begin{lemma}
	Let $\lambda$ be a CA matrix and $X$ is a subset of vertices of $\lambda$. It holds that $X$ is a flip set iff $\lambda^{(X)}$ is an interval matrix.    
    \label{lem:fsim}
\end{lemma}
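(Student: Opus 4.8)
The plan is to prove both implications by passing back and forth between a normalized CA representation and its flip, using only the two facts recorded immediately above the lemma: the commutation property $\rho \in \mathcal{N}(\lambda) \Leftrightarrow \rho^{(X)} \in \mathcal{N}(\lambda^{(X)})$, and the involution property $(\mu^{(X)})^{(X)} = \mu$. The geometric mechanism behind both directions is the elementary observation that for an arc $A$ and a point $x$ on the circle that is not an endpoint of $A$, exactly one of $A$ and $\overline{A}$ contains $x$; hence flipping a vertex set toggles, arc by arc, whether a fixed generic point is covered.

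For the direction where $X$ is a flip set, which is essentially the implication already observed in \cite{mcc,kob:hca}, I would start from a witnessing $\rho \in \mathcal{N}(\lambda)$ and point $x$ with $v \in X \Leftrightarrow \rho(v) \ni x$. After replacing $x$ by a nearby point I may assume $x$ is an endpoint of no arc of $\rho$ (justified at the end). Then in $\rho^{(X)}$ every arc originally in $X$ gets flipped and now misses $x$, whereas every arc outside $X$ already missed $x$ and stays put; so $\rho^{(X)}$ has a hole at $x$. By the commutation property $\rho^{(X)} \in \mathcal{N}(\lambda^{(X)})$, so $\lambda^{(X)}$ is the intersection matrix of a CA model with a hole, i.e.\ an interval matrix.

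For the converse, assume $\lambda^{(X)}$ is an interval matrix and fix a CA representation $\sigma$ of it with a hole at some point $x$, so $\sigma(v) \not\ni x$ for all $v$; in particular $x$ is automatically a non-endpoint of every arc of $\sigma$, so no perturbation is needed here. Put $\rho := \sigma^{(X)}$; by the commutation property together with $(\lambda^{(X)})^{(X)} = \lambda$ we obtain $\rho \in \mathcal{N}(\lambda)$. For $v \notin X$ the arc $\rho(v) = \sigma(v)$ still avoids $x$, while for $v \in X$ the arc $\rho(v) = \overline{\sigma(v)}$ now contains $x$ precisely because $\sigma(v)$ did not. Hence $v \in X \Leftrightarrow \rho(v) \ni x$, so $X$ is a flip set, witnessed by $\rho$ and $x$.

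The one step requiring a little care — and the place I would expect to spend a sentence or two — is the reduction to a generic witness point in the forward direction. The standing convention that all endpoints are pairwise distinct makes this painless: the $2n$ endpoints of $\rho$ cut the circle into open arcs on each of which the set of arcs covering a point is constant, and if the given $x$ happens to be the left (resp.\ right) endpoint of the unique arc having an endpoint at $x$, then the open arc immediately clockwise (resp.\ counterclockwise) of $x$ has covering set exactly $X$. Everything else is bookkeeping on top of the two already-available facts.
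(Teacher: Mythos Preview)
Your proof is correct and follows essentially the same route as the paper's: pass from $\rho$ to $\rho^{(X)}$ (and back) via the commutation property, observing that flipping toggles coverage of the witness point so that a flip set corresponds exactly to a hole in the flipped model. The only difference is that you explicitly handle the edge case where the witness point $x$ coincides with an arc endpoint by perturbing to a nearby generic point---a detail the paper's proof silently elides.
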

\begin{proof}
    ``$\Rightarrow$'': 
    Let $X$ be a flip set of $\lambda$. Let $\rho \in \mathcal{N}(\lambda)$ be a witnessing representation of the fact that $X$ is a flip set, i.e.~there exists a point $x$ on the circle such that every arc $\rho(v)$ with $v \in X$ contains $x$ and every arc $\rho(v)$ with $v \notin X$ does not contain $x$. Consider the representation $\rho^{(X)} \in \mathcal{N}(\lambda^{(X)})$. It holds that no arc $\rho^{(X)}(v)$ with $v \in  V(\lambda)$ contains the point $x$ which implies that there is a hole in $\rho^{(X)}$ and thus $\lambda^{(X)}$ is an interval matrix.     
  
    ``$\Leftarrow$'': Let $X$ be a subset of vertices of $\lambda$ such that $\lambda^{(X)}$ is an interval matrix. We argue that $X$ must be a flip set. 
    Let $\rho \in \mathcal{N}(\lambda^{(X)})$ be a CA representation of $\lambda^{(X)}$ containing a hole at point $x$ on the circle. Such a representation must exist since $\lambda^{(X)}$ is an interval matrix. This means the arc $\rho(v)$ does not contain the point $x$ for every vertex $v \in V(\lambda)$. Consider the representation $\rho^{(X)} \in \mathcal{N}({(\lambda^{(X)})}^{(X)}) = \mathcal{N}(\lambda)$. 
    Then it can be checked that $\rho^{(X)}(v)$ contains the point $x$ iff $v$ is in $X$ and therefore $X$ is a flip set with respect to $\lambda$.   
\end{proof}

We already mentioned that the canonical representation problem for vertex-colored interval matrices can be solved in logspace due to \cite{kob:hca}. However, since the theorem that we reference just states this result for uncolored interval matrices we shortly explain how to modify the proof to incorporate the coloring, which is a straightforward task for anyone familiar with the proof. 

\begin{theorem}[{\cite[Thm.~5.5]{kob:hca}}]
    The canonical representation problem for vertex-colored interval matrices can be solved in logspace. 
    \label{thm:vcim_canon}
\end{theorem}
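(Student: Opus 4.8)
The plan is to reuse the logspace canonization algorithm for uncolored interval matrices of \cite{kob:hca} almost verbatim, threading the vertex colors through it and using them only to break ties. Recall the overall shape of that algorithm: it decomposes the intersection matrix $\mu$ (viewed as an edge-colored graph) along a canonically computable tree structure — modular decomposition, refined by the overlap/containment structure used there — so that every interval representation of $\mu$ arises by (a) recursively choosing an interval representation of each piece attached to the tree, (b) choosing an orientation (reflection) at each piece whose representation is otherwise forced, and (c) choosing an order among pieces that are freely permutable. Canonicity is obtained by recursing into the pieces, comparing the resulting substrings $\mathrm{str}(\cdot)$ lexicographically to fix the orientations in (b) and the order in (c), and splicing everything back together; the procedure stays in logspace because the tree can be navigated and the strings compared within $\mathcal{O}(\log n)$ space.

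First I would observe that vertex colors leave this skeleton untouched: a module of a vertex-colored matrix is exactly a module of the underlying edge-colored matrix, since a unary relation never forbids a set from being a module, so the decomposition and the set of forced pieces are the same as in the uncolored case — and in particular a prime interval matrix is still uniquely representable up to reflection, this being a statement about $\mu$ alone. Hence the colors only need to intervene at the canonical choices (b) and (c): when comparing two candidate sub-representations — the two orientations of a forced piece, or two permutable pieces — one compares the \emph{colored} canonical strings, i.e.\ first $\mathrm{str}(\cdot)$ of the underlying sub-model and, among ties, the sequence of vertex colors read off along the same traversal, keeping the lexicographically smaller. The output is the interval model produced by the uncolored algorithm with each interval additionally tagged by the color of its vertex.

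Then I would check the two correctness conditions in the colored setting, both of which follow from the corresponding arguments in \cite{kob:hca} by inserting the word ``colored'' throughout. For invariance under color-preserving isomorphism: such an isomorphism induces an isomorphism of the decomposition trees matching forced pieces to color-isomorphic forced pieces and matching the multisets of permutable pieces as colored sub-matrices, so the recursive construction returns identical colored interval models; conversely, since the colors are recorded in the output, two colored matrices with the same canonical colored model admit a color-preserving isomorphism. For the complexity: the recursion, the tree navigation, and the comparisons are exactly as in \cite{kob:hca}, and the colors add only $\mathcal{O}(\log n)$ extra bits per vertex and an $\mathcal{O}(\log n)$-space secondary comparison per tie-break, so the algorithm remains in logspace.

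The one step deserving a moment's care — and the only place I expect any friction — is the ordering of the freely permutable pieces: I must make sure that pieces inducing isomorphic \emph{colored} sub-matrices really are interchangeable, so that the arbitrary choice made among equal canonical strings does not break invariance. This holds because a color-preserving isomorphism between two such pieces extends by the identity on the rest of the structure to a color-preserving automorphism of the whole matrix that swaps them, which is exactly the colored analogue of the interchangeability argument underlying the uncolored algorithm; nothing beyond this observation is needed.
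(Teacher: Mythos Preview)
Your argument is correct in spirit and would yield a logspace canonization; the paper's own proof follows the same high-level idea but packages it more tersely. In \cite{kob:hca} the algorithm first converts the interval matrix $\lambda$ into a colored tree $\mathbb{T}(\lambda)$ (the $\Delta$~tree) whose leaves are in bijection with the vertices of $\lambda$, and then canonizes that colored tree in logspace; the paper simply observes that to handle a vertex coloring $c$ one appends $c(v)$ to the existing leaf color of the leaf corresponding to $v$, obtaining a complete invariant for vertex-colored interval matrices, after which the remainder of the original argument goes through unchanged. Your approach instead opens up the recursion --- orientation choices, ordering of permutable siblings --- and threads the colors through each tie-break; this is exactly what colored-tree canonization does internally once the colors have been pushed to the leaves, so the two arguments are equivalent, but the paper's formulation avoids re-proving that the decomposition and the logspace bounds survive the modification by reducing everything to a single augmentation of the leaf labels.
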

\begin{proof}
    In Theorem 5.5 of \cite{kob:hca} it is stated that a canonical interval representation for an interval matrix can be found in logspace. To prove this they convert the input interval matrix $\lambda$ into a colored tree $\mathbb{T}(\lambda)$ called $\Delta$ tree which is a complete invariant for interval matrices. The leafs of this tree correspond to the vertices of $\lambda$. By appending the color of a vertex from our vertex-colored interval matrix $\lambda$ to the existing color of its corresponding leave node in the colored $\Delta$ tree $\mathbb{T}(\lambda)$ one obtains a complete invariant for vertex-colored interval matrices. Then by applying the same argument given in the proof of Theorem 5.5 one can also compute a canonical representation for a vertex-colored interval matrix using this slightly modified colored $\Delta$ tree.
\end{proof}

A consequence of Lemma~\ref{lem:fsim} and Theorem~\ref{thm:vcim_canon} is that flip sets can be recognized in logspace. Given an intersection matrix $\lambda$ and a subset of vertices $X$ of $\lambda$ it suffices to check whether $\lambda^{(X)}$ is an interval matrix by trying to compute an interval representation.

\begin{definition}
     Let $\mathcal{C}$ be a class of CA matrices and $f$ is a vertex set selector.
     The function $f$ is called an invariant flip set function for $\mathcal{C}$ if the following conditions hold: 
     \begin{enumerate}
         \item For every $\lambda \in \mathcal{C}$ there exists an $X \in f(\lambda)$ such that $X$ is a flip set of $\lambda$
         \item $f$ is invariant for $\mathcal{C}$
     \end{enumerate}
     Recall that $f$ is globally invariant if $f$ is invariant for all intersection matrices.
     \label{def:cfsf}
\end{definition}

\begin{theorem}
	Let $\mathcal{C}$ be a class of CA matrices.
	The canonical representation problem for vertex-colored $\mathcal{C}$ is logspace-reducible to the problem of computing an invariant flip set function for $\mathcal{C}$.
    \label{thm:cfsf_cr}
\end{theorem}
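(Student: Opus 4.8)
The plan is to reduce the canonical representation problem for a vertex-colored CA matrix $\mu \in \mathcal{C}$ to a constant number of logspace computations plus a single query to an invariant flip set function $f$ for $\mathcal{C}$. First I would invoke $f$ on $\mu$ to obtain the family $f(\mu) \subseteq \mathcal{P}(V(\mu))$. By the first condition in Definition~\ref{def:cfsf}, at least one $X \in f(\mu)$ is a flip set of $\mu$, and by Lemma~\ref{lem:fsim} the flip sets among $f(\mu)$ are precisely those $X$ for which $\mu^{(X)}$ is an interval matrix; since flip sets can be recognized in logspace (the consequence of Lemma~\ref{lem:fsim} and Theorem~\ref{thm:vcim_canon} noted just above), I can filter $f(\mu)$ down to its flip sets in logspace. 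To make the choice canonical I would then pick a distinguished flip set $X_\mu$ among them by a purely syntactic rule — e.g.\ the lexicographically smallest $X \in f(\mu)$ (as a bit-vector over the vertex order of $\mu$) that is a flip set. The point of using $f$ rather than an arbitrary flip set is invariance: if $\mu \cong \nu$ via an isomorphism $\pi$, then because $f$ is invariant for $\mathcal{C}$ we have $X \in f(\mu) \Leftrightarrow \pi(X) \in f(\nu)$, and because flip-set-ness is an isomorphism-invariant property (flipping orbit-equivalent vertex sets yields isomorphic matrices, as remarked after the definition of $\rho^{(X)}$), the filtered families correspond under $\pi$ as well. However, the lexicographically-smallest rule is \emph{not} isomorphism-invariant on its own, so this naive choice is exactly where care is needed.

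The fix is the standard canonization-via-isomorphism-invariant-reduction trick: instead of choosing one flip set, encode the flip set into the object being canonized. Concretely, for each flip set $X$ appearing in the filtered $f(\mu)$, form the vertex-colored interval matrix $\mu^{(X)}$ in which every vertex additionally carries a fresh color bit recording whether it lies in $X$ (its original color is retained as well). Then apply the logspace canonical representation algorithm for vertex-colored interval matrices (Theorem~\ref{thm:vcim_canon}) to each such colored matrix $(\mu^{(X)}, \mathbf{1}_X)$, obtaining a canonical interval model together with the distinguished set of arcs corresponding to $X$. Among these canonical outputs — one per flip set in $f(\mu)$ — select the lexicographically smallest \emph{canonical string} (using the $\mathrm{str}(\cdot)$ encoding for models, extended by the colors); this selection is now isomorphism-invariant because we are comparing canonical forms, not labelings. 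From the winning canonical colored interval model, flip back the distinguished (marked) arcs to recover a CA model $\mathcal{A}_\mu$, and read off the representation $\rho_\mu$ of $\mu$ via the correspondence $\rho \in \mathcal{N}(\mu^{(X)}) \Leftrightarrow \rho^{(X)} \in \mathcal{N}(\mu)$ from the paragraph preceding Section~3. Strip the auxiliary color bit at the end.

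It remains to verify canonicity: if $\mu \cong \nu$ in $\mathcal{C}$, I must show $\rho_\mu$ and $\rho_\nu$ have the same underlying set of arcs. Let $\pi$ be an isomorphism. Invariance of $f$ for $\mathcal{C}$ gives a bijection between the flip sets selected for $\mu$ and those for $\nu$, with $X \leftrightarrow \pi(X)$; for corresponding pairs, $\pi$ is also an isomorphism of the colored matrices $(\mu^{(X)}, \mathbf{1}_X) \cong (\nu^{(\pi(X))}, \mathbf{1}_{\pi(X)})$, since flipping commutes with relabeling and the color bits are defined by membership in $X$ resp.\ $\pi(X)$. Hence the canonical colored interval models produced for the two sides range over the same multiset of strings, so the lexicographically smallest is the same for both; flipping back the (canonically located) marked arcs therefore yields the same CA model, and dropping the auxiliary color gives $\rho_\mu[V(\mu)] = \rho_\nu[V(\nu)]$ as required. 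The whole procedure is a composition of: one oracle call to $f$; logspace filtering of $f(\mu)$ by the flip-set test; the construction of the colored matrices $\mu^{(X)}$ (a logspace-computable local transformation); a constant-query use of the vertex-colored interval canonizer of Theorem~\ref{thm:vcim_canon}; a logspace lexicographic minimization; and a logspace flip-back — all logspace-computable and closed under composition, so $f$ is queried only a constant number of times.

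The main obstacle I anticipate is the bookkeeping that makes the selection genuinely isomorphism-invariant rather than labeling-dependent: one must be careful that $f(\mu)$ is itself only available up to the orbit relation (it is a family of subsets that is invariant, not a canonically ordered list), so the minimization has to be performed over canonical forms of the $\mu^{(X)}$ (which are labeling-independent by Theorem~\ref{thm:vcim_canon}) and not over the sets $X$ directly. A secondary subtlety is ensuring $f(\mu)$ contains \emph{some} flip set for every $\mu\in\mathcal{C}$ — but that is exactly condition~1 of Definition~\ref{def:cfsf}, so it comes for free — and that the flip-back step lands in $\mathcal{N}(\mu)$, which is the content of the $\mathcal{N}(\lambda^{(X)})\leftrightarrow\mathcal{N}(\lambda)$ correspondence already recorded in the excerpt.
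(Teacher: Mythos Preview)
Your proposal is correct and follows essentially the same approach as the paper's proof: for each flip set $X \in f(\lambda)$ form the colored interval matrix $(\lambda^{(X)}, c_X)$ with an extra color bit recording membership in $X$, canonize these via Theorem~\ref{thm:vcim_canon}, select the lexicographically minimal canonical model, and flip back the marked arcs. Your discussion is in fact slightly more explicit than the paper's about why the naive ``take the lex-min $X$'' fails and about the logspace bookkeeping, but the construction and the correctness argument (invariance of $f$ $\Rightarrow$ the families of canonical colored interval models coincide for isomorphic inputs $\Rightarrow$ same minimum $\Rightarrow$ same flipped-back CA model) are identical.
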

\begin{proof}
    Let $f$ be an invariant flip set function for $\mathcal{C}$. 
    Given a vertex-colored CA matrix $(\lambda,c)$ with $ \lambda \in \mathcal{C}$ a canonical representation can be computed as follows.
    For every flip set $X \in f(\lambda)$ we associate it with the colored interval matrix $I_X = (\lambda^{(X)},c_X)$ where $c_X(v) = (c(v),\mathrm{red})$ if $v$ is in $X$ and $(c(v),\mathrm{blue})$ if $v$ is not in $X$ for all $v \in V(\lambda)$. For a colored interval matrix $I$ let $\hat{\rho}_I$ denote a canonical representation of $I$. Such a canonical representation can be computed in logspace due to Theorem~\ref{thm:vcim_canon}. Let $\hat{X}$ denote a flip set in $f(\lambda)$ such that the interval model of $\hat{\rho}_{I_{\hat{X}}}$ is lexicographically minimal, i.e.~for all flip sets $X$ in $f(\lambda)$ it holds that the model of $\hat{\rho}_{I_{X}}$ is not smaller than the model of $\hat{\rho}_{I_{\hat{X}}}$. Let $\hat{\rho}$ denote the CA representation that is obtained after flipping the red arcs in $\hat{\rho}_{I_{\hat{X}}}$. Since these are the arcs that were flipped to convert $\lambda$ into $I_X$ it holds that $\hat{\rho}$ is a representation of $\lambda$. To see that this leads to a canonical representation consider two isomorphic vertex-colored CA matrices $(\lambda,c)$ and $(\mu,d)$ with $\lambda,\mu \in \mathcal{C}$ and $V(\lambda)$ and $V(\mu)$ are disjoint. Let $\mathcal{I}_\lambda$ be the set of colored interval matrices $I_X$ for all flip sets $X \in f(\lambda)$, and the set $\mathcal{I}_\mu$ is defined analogously. 
    Let $\mathcal{M}_\lambda$ be the set of interval models $M$ such that there exists an $I \in \mathcal{I}_\lambda$ and $M$ is the model underlying the canonical representation $\hat{\rho}_I$ of $I$. The set $\mathcal{M}_\mu$ is  defined analogously. 
    Since $f$ is invariant it follows that for every $I \in \mathcal{I}_\lambda$ there exists an $I' \in \mathcal{I}_{\mu}$ such that $I$ and $I'$ are isomorphic, and vice versa. 
    Since the models in $\mathcal{M}_\lambda$ and  $\mathcal{M}_\mu$  only depend on the isomorphism type of the matrices in $\mathcal{I}_\lambda$ and $\mathcal{I}_\mu$ it follows that $\mathcal{M}_\lambda = \mathcal{M}_\mu$. The CA models which underlie the canonical representations of $\lambda$ and $\mu$ are both derived from the smallest element in $\mathcal{M}_\lambda = \mathcal{M}_\mu$ and thus are identical. 
\end{proof}

Suppose that there is a partition of the set of CA graphs into two classes $\mathcal{C}$ and $\mathcal{D}$ such that you can efficiently compute invariant flip set functions for both classes. One might be misled into thinking that this implies canonical representations for all CA graphs can be found efficiently. However, this is not the case unless the class $\mathcal{C}$ (or $\mathcal{D}$) can be efficiently  recognized, or one of the two invariant flip set functions is globally invariant. 

\begin{lemma}   
    Let $\mathcal{C}$ and $\mathcal{D}$ be classes of CA matrices. The canonical representation problem for $\mathcal{C} \cup \mathcal{D}$ is logspace-reducible to the canonical representation problem for $\mathcal{C}$ and the problem of computing a globally invariant flip set function for $\mathcal{D}$.
\label{lem:gir}    
\end{lemma}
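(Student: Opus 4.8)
The plan is to reduce the canonical representation problem for $\mathcal{C} \cup \mathcal{D}$ to the two stated subproblems by building, on a given input $\lambda \in \mathcal{C} \cup \mathcal{D}$, a canonical representation that does not depend on whether we happen to know that $\lambda$ lies in $\mathcal{C}$ or in $\mathcal{D}$. The obstacle that Lemma~\ref{lem:gir} circumvents—compared to the naive ``run the $\mathcal{C}$-algorithm or the $\mathcal{D}$-algorithm'' idea—is precisely that we cannot test membership in $\mathcal{C}$. We sidestep this: let $g$ be a globally invariant flip set function for $\mathcal{D}$. Globality means $g$ is defined and invariant on \emph{all} intersection matrices, and in particular returns at least one genuine flip set whenever the input is a CA matrix in $\mathcal{D}$. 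So on an arbitrary input $\lambda \in \mathcal{C} \cup \mathcal{D}$ we can always evaluate $g(\lambda)$ and, exactly as in the proof of Theorem~\ref{thm:cfsf_cr}, turn each $X \in g(\lambda)$ into a vertex-colored interval matrix $I_X = (\lambda^{(X)}, c_X)$ with the red/blue coloring marking $X$, compute its canonical interval representation $\hat\rho_{I_X}$ via Theorem~\ref{thm:vcim_canon}, flip back the red arcs, and collect the resulting CA models into a set $\mathcal{M}^{\mathcal{D}}_\lambda$ of candidate canonical models for $\lambda$. If $\lambda \in \mathcal{D}$ this set is nonempty (some $X \in g(\lambda)$ is a flip set, hence $\lambda^{(X)}$ is an interval matrix by Lemma~\ref{lem:fsim}, so $\hat\rho_{I_X}$ exists), and by global invariance of $g$ the set $\mathcal{M}^{\mathcal{D}}_\lambda$ depends only on the isomorphism type of $\lambda$, by the same argument as in Theorem~\ref{thm:cfsf_cr}. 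If $\lambda \notin \mathcal{D}$ the set may be empty or may contain junk, but that is harmless as long as it is still isomorphism-invariant (it is, since $g$ is globally invariant) and as long as we never mistakenly use it for a $\lambda \in \mathcal{C}\setminus\mathcal{D}$.

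In parallel we query the oracle for the canonical representation problem for $\mathcal{C}$ on $\lambda$. The subtlety is that this oracle is only promised to behave on inputs from $\mathcal{C}$; on an input $\lambda \in \mathcal{D}\setminus\mathcal{C}$ it may output garbage or fail. To make the query meaningful regardless, we verify its output: given a claimed representation $\rho$ returned by the $\mathcal{C}$-oracle, we can check in logspace whether $\rho$ is in fact a CA representation of $\lambda$ (compute the intersection matrix of the returned model and compare it to $\lambda$), and similarly verify the outputs for other inputs in the reduction. So define $\mathcal{M}^{\mathcal{C}}_\lambda$ to be the singleton containing the model of the $\mathcal{C}$-oracle's output if that output passes verification as a representation of $\lambda$, and the empty set otherwise. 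When $\lambda \in \mathcal{C}$ this is a nonempty isomorphism-invariant set (the oracle is a \emph{canonical} representation algorithm on $\mathcal{C}$, so isomorphic inputs in $\mathcal{C}$ yield the same model); when $\lambda \notin \mathcal{C}$ it is still isomorphism-invariant among non-$\mathcal{C}$ instances because verification is itself invariant.

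Finally, set $\mathcal{M}_\lambda = \mathcal{M}^{\mathcal{C}}_\lambda \cup \mathcal{M}^{\mathcal{D}}_\lambda$ and output the canonical representation obtained from the lexicographically smallest CA model in $\mathcal{M}_\lambda$ (using $\mathrm{str}(\cdot)$ and its reverse for comparison, as in the preliminaries). Correctness: for every $\lambda \in \mathcal{C}\cup\mathcal{D}$ at least one of the two sets is nonempty, so the output is defined and is a genuine CA representation of $\lambda$; and for two isomorphic inputs $\lambda \cong \mu$ in $\mathcal{C}\cup\mathcal{D}$ we have $\mathcal{M}^{\mathcal{C}}_\lambda = \mathcal{M}^{\mathcal{C}}_\mu$ and $\mathcal{M}^{\mathcal{D}}_\lambda = \mathcal{M}^{\mathcal{D}}_\mu$—the first because the $\mathcal{C}$-oracle is canonical (when both are in $\mathcal{C}$) or verification fails identically on both (when neither is), and the second by global invariance of $g$—hence $\mathcal{M}_\lambda = \mathcal{M}_\mu$ and the smallest element, hence the output model, is identical. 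The whole construction is a constant number of logspace-computable steps composed with a constant number of oracle calls (one to the $\mathcal{C}$-oracle, and the calls to $g$ and to Theorem~\ref{thm:vcim_canon} which are subsumed as logspace-computable pieces once a globally invariant $g$ is given), so it is a logspace reduction in the sense of the preliminaries. The main point to get right is the verification step: it is what lets us issue the $\mathcal{C}$-oracle query unconditionally and discard its answer exactly when the input was not in $\mathcal{C}$, and it is what makes both candidate sets isomorphism-invariant \emph{without} ever deciding membership in $\mathcal{C}$.
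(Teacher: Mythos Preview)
Your argument breaks at the claim that $\mathcal{M}^{\mathcal{C}}_\lambda = \mathcal{M}^{\mathcal{C}}_\mu$ for isomorphic $\lambda,\mu \in \mathcal{D}\setminus\mathcal{C}$. You write that ``verification fails identically on both (when neither is)'', but this is not guaranteed. The $\mathcal{C}$-oracle is an arbitrary function that is only promised to be canonical on $\mathcal{C}$; on an input $\lambda \in \mathcal{D}\setminus\mathcal{C}$ it is perfectly allowed to output a \emph{valid} CA representation of $\lambda$ (which then passes your verification), while on an isomorphic $\mu$ it outputs garbage, or a valid representation with a different underlying model. In either case $\mathcal{M}^{\mathcal{C}}_\lambda \neq \mathcal{M}^{\mathcal{C}}_\mu$, the union $\mathcal{M}_\lambda$ is polluted by a non-invariant element, and the lexicographically smallest model can differ between $\lambda$ and $\mu$. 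Verification tells you whether the oracle's answer happens to be a representation; it does not make the oracle's behaviour isomorphism-invariant outside $\mathcal{C}$.

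The fix is to avoid calling the $\mathcal{C}$-oracle unless you are \emph{certain} the input lies in $\mathcal{C}$, and the globally invariant flip set function $g$ gives you exactly such a test. The predicate ``$g(\lambda)$ contains a flip set'' is logspace-decidable (flip sets are recognizable in logspace by Lemma~\ref{lem:fsim} and Theorem~\ref{thm:vcim_canon}) and, crucially, isomorphism-invariant because $g$ is globally invariant and being a flip set is preserved under isomorphism. If the predicate holds, apply Theorem~\ref{thm:cfsf_cr} using $g$; if it fails, then $\lambda \notin \mathcal{D}$, hence $\lambda \in \mathcal{C}$, and now the $\mathcal{C}$-oracle is safe to use. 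This is the paper's route: a clean invariant case split rather than a union of candidate sets.
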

\begin{proof}
  Let $f$ be a globally invariant flip set function for $\mathcal{D}$. Let $\mathcal{D}'$ be the set of CA matrices $\lambda$ such that $f(\lambda)$ contains a flip set. Clearly, $\mathcal{D}$ is a subset of $\mathcal{D}'$. It holds that $f(\lambda)$ contains a flip set iff $\lambda \in \mathcal{D}'$. Since $f$ is globally invariant it follows that $f$ is an invariant flip set function for $\mathcal{D}'$. 
  To obtain a canonical representation for a matrix $\lambda \in \mathcal{C} \cup \mathcal{D}$ first compute $f(\lambda)$. If $f(\lambda)$ contains a flip set it holds that $\lambda \in \mathcal{D}'$ and therefore the output of $f$ can be used to find a canonical representation for $\lambda$. If $f(\lambda)$ contains no flip set it must be the case that $\lambda \in \mathcal{C}$ and therefore the canonization algorithm for $\mathcal{C}$ can be applied.
\end{proof}

We conclude this section by restating the invariant flip set function that was used in \cite{kob:hca} to compute canonical representations for Helly CA graphs and explain why it is correct:

$$f_{\mathrm{HCA}}(G) = \big\{  N[u] \cap N[v]   \mid  u,v \in V(G) \big\}$$

In a Helly CA graph $G$ every inclusion-maximal clique $C$ of $G$ is a flip set. 
To see why this holds let $\rho$ be a representation of $G$ with the Helly property. Since $C$ is a clique this means every pair of arcs $\rho(u)$ and $\rho(v)$ with $u,v \in C$ intersects. By the Helly property it follows that the overall intersection $\bigcap_{v \in C} \rho(v)$ is non-empty. This means there exists a point $x$ on the circle such that every arc $\rho(v)$ with $v \in C$ contains $x$. Assume there exists a vertex $w \in V(G) \setminus C$ such that $\rho(w)$ contains $x$. This means $w$ must be adjacent to every vertex in $C$, which contradicts that $C$ is inclusion-maximal. Hence $C$ is a flip set. 
 
In \cite[Thm.~3.2]{kob:hca} it is shown that every Helly CA graph contains at least one inclusion-maximal clique which can be expressed as the common neighborhood of two vertices. 
Therefore $f_{\mathrm{HCA}}(G)$ returns at least one flip set for every Helly CA graph $G$. Also, it is trivial to see that $f_{\mathrm{HCA}}$ is globally invariant.

\section{Uniform Circular-Arc Graphs}
We define the class of uniform CA graphs for which computing a particular invariant flip set function reduces to computing a representation.
As a consequence, canonical representations for this class of CA graphs can be computed in polynomial-time. 
This is an interesting class for two reasons. First, it seems to capture the instances where it is easy to apply the flip trick. Secondly, its complement (within the CA graphs) is a rather exotic class of CA graphs with a quite particular structure. 
While the initial definition of uniformity makes it apparent why it suffices to find an arbitrary representation in order to obtain a canonical one, it is rather impractical when trying to understand what constitutes a uniform CA graph. 
We provide a more pleasant characterization of uniform CA graphs in terms of how certain triangles in a CA graph can be represented. 
This alternative characterization also reveals that every Helly CA graph is uniform. Additionally, we show that the canonical representation problem for uniform CA graphs is logspace-equivalent to what we call the non-Helly triangle representability problem. This problem is: given a CA graph $G$ and a set $T$ of three pairwise overlapping vertices as input, does there exist a representation $\rho$ of $G$ such that $T$ covers the whole circle in $\rho$?

The following kind of flip set will lead us to uniform CA graphs when trying to compute canonical representations. Given a CA matrix $\lambda$ recall that $X$ is a flip set of $\lambda$ if there exists a representation $\rho \in \mathcal{N}(\lambda)$ and a point $x$ on the circle such that $x \in \rho(v)$ iff $v \in X$ for all vertices $v$ of $\lambda$.
We impose the additional restriction that $x$ is not allowed to be an arbitrary point on the circle but instead has to be one of the endpoints in $\rho$. 

\begin{definition}
    Let $\lambda$ be a CA matrix and $u \in V(\lambda)$. A flip set $X$ of $\lambda$ is a $u$-flip set if there exists a representation $\rho \in \mathcal{N}(\lambda)$ and an endpoint $x$ of $\rho(u)$ such that $v \in X$ iff $\rho(v)$ contains the point $x$. 
\end{definition}

\begin{figure}[b]
    \centering
    \begin{minipage}{.5\textwidth}
        \centering
        \begin{tikzpicture}[shorten >=1pt,auto,node distance=1.2cm,
  main node/.style={circle,draw}]

    \centerarc[](0,0)(30:150:1.2);
    \centerarc[](0,0)(80:100:1.1);
    \centerarc[](0,0)(20:160:1.6);    
    \centerarc[](0,0)(30+20:150-360-20:0.9);    
    \centerarc[](0,0)(180:120:1.4);
    \centerarc[](0,0)(0:60:1.4);    

    \beam[](0:0:2:0.6:150)
    \beam[](0:0:2:0.6:30)
    \arclabel(xu1:$X_1$:0:0:2.2:150); 
    \arclabel(xu2:$X_2$:0:0:2.2:30); 
    
    \node (xu1) at ($({0.65*cos(110)},{0.65*sin(110)})$) {$u$};
    \beam[->]({0.65*cos(110)}:{0.65*sin(110)}:0.55:0.15:135)
        
    \node[circle,draw,inner sep=0,minimum size=0.1cm,fill=white] (l1) at ($({1.2*cos(30)},{1.2*sin(30)})$) {};
    \node[circle,draw,inner sep=0,minimum size=0.1cm,fill=white] (r1) at ($({1.2*cos(150)},{1.2*sin(150)})$) {};        

\end{tikzpicture}
        \captionof{figure}{Exemplary $u$-flip sets $X_1$ and $X_2$}
        \label{fig:uniform_flipset}
    \end{minipage}%
    \begin{minipage}{.5\textwidth}
        \centering
        \begin{tikzpicture}[shorten >=1pt,auto,node distance=1.2cm,
  main node/.style={circle,draw}]

\carc(0:0:90:120:1.3);
\arclabel(ul:$u$:0:0:1.45:90);

\carc(0:0:200:150:1.1);
\carc(0:0:350:190:0.9);
\arclabel(al:$a$:0:0:0.7:150);
\arclabel(bl:$b$:0:0:0.9:220);

\carc(0:0:50:120:1.1);
\carc(0:0:150:100:0.9);
\arclabel(cl:$c$:0:0:0.7:305);
\arclabel(dl:$d$:0:0:1.3:10);

\arclabel(up:$P_u=\{\{a,b\},\{c,d\}\}$:0:0:3:00);




\node[circle,draw,inner sep=0,minimum size=0.1cm,fill=white] (l1) at ($({1.3*cos(30)},{1.3*sin(30)})$) {};
\node[circle,draw,inner sep=0,minimum size=0.1cm,fill=white] (r1) at ($({1.3*cos(150)},{1.3*sin(150)})$) {};   

\end{tikzpicture}
        \captionof{figure}{Example of a $u$-overlap partition $P_u$}
        \label{fig:uovpart}
    \end{minipage}
\end{figure}

Clearly, every CA graph has a $u$-flip set for every vertex $u$. On the other hand, there are CA graphs that have flip sets which are not $u$-flip sets for any vertex $u$. For example, consider the cycle graph with $n \geq 4$ vertices. Every flip set that consists of exactly one vertex is not a $u$-flip set for any vertex $u$ of the cycle graph.

Consider the following task: given a CA graph $G$ and a vertex $u$, find a $u$-flip set of $G$. Clearly, no vertex $v$ which is disjoint from $u$ or contained by $u$ belongs to $X$ since in every representation the arc of $v$ does not contain any of the two endpoints of the arc of $u$. Similarly, if a vertex $v$ contains $u$ or forms a circle cover with $u$ then in every representation the arc of $v$ contains both endpoints of $u$ and therefore must be included in $X$. See Figure \ref{fig:uniform_flipset} for a schematic overview.

It remains to decide for the set of vertices $N^{\overlap}(u)$ that overlap with $u$ whether they should be included in $X$. A vertex $v$ which overlaps with $u$ contains exactly one of the endpoints of $u$ in any representation. 
Let $x,y$ be two vertices that overlap with $u$. We say $x$ and $y$ overlap from the same side with $u$ in $\rho$ if $\rho(x)$ and $\rho(y)$ contain the same endpoint of $\rho(u)$. Evidently, this is an equivalence relation with respect to $v$ and $\rho$ which partitions $N^{\overlap}(u)$ into two parts, namely the part which contains the left endpoint and the one which contains the right endpoint. If $X$ is a $u$-flip set then $X \cap N^{\overlap}(u)$ must be an equivalence class of the `overlap from the same side with $u$ in $\rho$'-relation for some $\rho \in \mathcal{N}(G)$.

\begin{definition}
    For a CA matrix $\lambda$ and a vertex $u$ of $\lambda$ we say a partition $Y$ of $N^{\overlap}(u)$ into two parts is a $u$-$\overlap$-partition if there exists a representation $\rho \in \mathcal{N}(\lambda)$ such that two vertices $x,y \in N^{\overlap}(u)$ are in the same part of $Y$ iff $\rho(x)$ and $\rho(y)$ overlap from the same side with $\rho(u)$.        
    We say $\overlap$-partition to mean an $u$-$\overlap$-partition for an arbitrary $u \in V(\lambda)$.
\end{definition}    

In general, for a vertex $u$ of a CA graph $G$ there can be multiple $u$-$\overlap$-partitions. In fact, there are instances with exponentially many $u$-$\overlap$-partitions with respect to $|N^{\overlap}(u)|$. A trivial way of obtaining at least one $u$-$\overlap$-partition for every vertex $u$ of a CA graph $G$ is to compute an arbitrary representation $\rho \in \mathcal{N}(G)$. But
the $\overlap$-partitions obtained by this method are not invariant and thus do not yield canonical representations. However, if one considers CA graphs where there is only one $u$-$\overlap$-partition for every vertex $u$ then an arbitrary representation suffices.

\begin{definition}[Uniform CA Graphs]
    A CA graph $G$ is uniform if for every vertex $u$ in $G$ there exists exactly one $u$-$\overlap$-partition. This partition is denoted by $P_u = \{P_{u,1}, P_{u,2}\}$.
    \label{fact:uovpart}
\end{definition}

\begin{lemma}
    The following mapping is an invariant flip set function for uniform CA graphs. Let $G$ be a uniform CA graph.
    $$ \Funiform(G) = \bigcup_{\substack{u \in V(G) \\ i \in \{1,2\}}}  
    \big\{ \{ u \} \cup N^{\contained}(u) \cup N^{\circlecover}(u) \cup P_{u,i} \big\} 
    $$   
    \label{lem:funiform}
\end{lemma}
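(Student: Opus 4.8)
The goal is to verify the two defining conditions of an invariant flip set function (Definition~\ref{def:cfsf}) for $\Funiform$ on the class of uniform CA graphs: (1) for every uniform CA graph $G$ there is a genuine flip set in $\Funiform(G)$, and (2) $\Funiform$ is invariant for uniform CA graphs. For condition (1), the plan is to pick any vertex $u$ and any representation $\rho \in \mathcal{N}(G)$, which exists by Hsu's lemma, and examine the endpoint $x = l(\rho(u))$ (the left endpoint of $u$'s arc). I would argue that the set $X_x$ of all vertices whose arcs contain $x$ equals exactly one of the candidate sets in the union defining $\Funiform$. Indeed, as the discussion preceding Definition~\ref{fact:uovpart} already spells out: $u$ itself and all vertices $v$ with $v\:\contained\:u$ (i.e.\ $N^{\contained}(u)$) — wait, one must be careful with the direction — the vertices forced to contain $x$ are $u$, the vertices containing $u$ or forming a circle cover with $u$, and exactly one side of the overlap partition; the vertices forced to avoid $x$ are those disjoint from or contained in $u$. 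Checking the naming convention in the neighborhood matrix: $\mu_{u,v} = \contained$ means $N[u]\subsetneq N[v]$, i.e.\ $\rho(u)\subset\rho(v)$, so $v$ \emph{contains} $u$ and hence $\rho(v)$ contains both endpoints of $\rho(u)$, in particular $x$; likewise $\circlecover$ forces $\rho(v)$ to contain both endpoints. So $X_x = \{u\} \cup N^{\contained}(u) \cup N^{\circlecover}(u) \cup (X_x \cap N^{\overlap}(u))$, and $X_x \cap N^{\overlap}(u)$ is one part of a $u$-$\overlap$-partition witnessed by $\rho$. Since $G$ is uniform, this partition is $P_u$, so $X_x \cap N^{\overlap}(u) \in \{P_{u,1}, P_{u,2}\}$, and therefore $X_x$ is one of the sets in $\Funiform(G)$. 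Being of the form ``all arcs through a point $x$ on the circle'' in $\rho \in \mathcal{N}(G) = \mathcal{N}(\lambda_G)$, the set $X_x$ is by definition a flip set of $\lambda_G$.

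\textbf{Invariance.} For condition (2), I would show that each ingredient of the union is invariant. Let $\pi$ be an isomorphism between uniform CA graphs $G$ and $H$ (equivalently, between their neighborhood matrices, since $\lambda$ is defined purely from the graph structure and so is preserved under isomorphism). The sets $N^{\contained}(u)$, $N^{\circlecover}(u)$ are defined directly from entries of $\lambda_G$, hence $\pi(N^{\contained}(u)) = N^{\contained}(\pi(u))$ and similarly for $\circlecover$ and for $N^{\overlap}(u)$. The remaining point is that $\pi(P_u) = P_{\pi(u)}$, i.e.\ the unique $u$-$\overlap$-partition maps to the unique $\pi(u)$-$\overlap$-partition. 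This follows from the fact that flipping/representing is an isomorphism-invariant operation, as noted after the definition of $\rho^{(X)}$: if $\rho \in \mathcal{N}(\lambda_G)$ witnesses a $u$-$\overlap$-partition $Y$, then $\rho \circ \pi^{-1} \in \mathcal{N}(\lambda_H)$ witnesses the partition $\pi(Y)$ of $N^{\overlap}(\pi(u))$, because ``$\rho(x),\rho(y)$ overlap from the same side with $\rho(u)$'' is a property of the CA model $\rho(G) = (\rho\circ\pi^{-1})(H)$ alone. Hence $\pi(Y)$ is a $\pi(u)$-$\overlap$-partition of $H$; by uniqueness $\pi(Y) = P_{\pi(u)}$, i.e.\ $\{\pi(P_{u,1}),\pi(P_{u,2})\} = \{P_{\pi(u),1}, P_{\pi(u),2}\}$. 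Combining, $\pi$ maps each candidate set $\{u\}\cup N^{\contained}(u)\cup N^{\circlecover}(u)\cup P_{u,i}$ to a candidate set $\{\pi(u)\}\cup N^{\contained}(\pi(u))\cup N^{\circlecover}(\pi(u))\cup P_{\pi(u),j}$ for the appropriate $j$, and conversely (applying the argument to $\pi^{-1}$); therefore $\pi(\Funiform(G)) = \Funiform(H)$, which is exactly the statement that $\chi_{\Funiform}$ is an invariant for uniform CA graphs.

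\textbf{Anticipated obstacle.} The routine bookkeeping — checking that each of the five intersection types behaves correctly with respect to ``contains the point $x = l(\rho(u))$'' — is slightly error-prone because of the contains/contained naming asymmetry between the combinatorial type ($\contained$ meaning $\rho(u)\subset\rho(v)$) and the geometric intuition, and because one must confirm that a vertex $v$ with $\rho(u) \subset \rho(v)$ really does contain \emph{both} endpoints of $\rho(u)$ and cannot, say, have $l(\rho(v)) = l(\rho(u))$ (excluded by the general-position assumption that all endpoints are pairwise distinct and no arc covers the full circle). The genuinely load-bearing step, though, is the invariance of $P_u$: it is where uniformity is actually used, and the whole point of Definition~\ref{fact:uovpart} is to collapse the set of $u$-$\overlap$-partitions — which can be exponentially large in general — to a singleton so that ``an arbitrary representation suffices.'' Everything else is direct unfolding of definitions.
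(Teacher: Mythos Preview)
Your proposal is correct and follows essentially the same approach as the paper: both arguments identify the candidate sets as $u$-flip sets via the case analysis on intersection types (the paper just points to Figure~\ref{fig:uniform_flipset}), and both derive invariance from the label-independence of intersection types and of the notion of $\overlap$-partition. The only minor difference is that the paper asserts the slightly stronger fact that \emph{every} element of $\Funiform(G)$ is a $u$-flip set, whereas you explicitly exhibit one (using the left endpoint of $\rho(u)$); both suffice for Definition~\ref{def:cfsf}.
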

\begin{proof}
	Let $G$ be a uniform CA graph and $X$ is in $\Funiform(G)$ with $X =  \{ u \} \cup N^{\contained}(u) \cup N^{\circlecover}(u) \cup P_{u,i}$ for some $u \in V(G)$ and $i \in \{1,2\}$. It follows from Figure \ref{fig:uniform_flipset} and the definition of $\overlap$-partitions that $X$ is a $u$-flip set. The invariance of $\Funiform(G)$ follows from the fact that the	 intersection type of two vertices as well as the property of being an $\overlap$-partition is independent of the vertex labels.  
\end{proof}

We remark that the function $\Funiform$ is undefined for non-uniform CA graphs since the sets $P_{u,1}$ and $P_{u,2}$ are not well-defined in that context.

\begin{theorem}
    Canonical representations for uniform CA graphs can be computed in polynomial-time.
    \label{thm:uca_canon}
\end{theorem}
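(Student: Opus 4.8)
The plan is to deduce the theorem from Theorem~\ref{thm:cfsf_cr} and Lemma~\ref{lem:funiform}. By Lemma~\ref{lem:funiform} the map $\Funiform$ is an invariant flip set function for the class of uniform CA graphs, viewed (as the Proviso dictates) through their neighborhood matrices; a vertex-coloring changes nothing, since $\Funiform$ is read off purely from the intersection types recorded in $\lambda_G$ and is invariant under every relabeling, in particular under color-preserving ones, and a flip set stays a flip set when colors are added. Hence, by Theorem~\ref{thm:cfsf_cr}, it suffices to exhibit a \emph{polynomial-time} algorithm that, given a uniform CA graph $G$, outputs $\Funiform(G)$: the theorem then follows because a logspace reduction making a constant number of oracle calls, composed with a polynomial-time oracle whose inputs are of polynomial size, is again polynomial.

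So fix a uniform CA graph $G$. First I would compute its neighborhood matrix $\lambda_G$ in polynomial time, and from $\lambda_G$ read off, for every vertex $u$, the sets $N^{\contained}(u)$ and $N^{\circlecover}(u)$ appearing in the definition of $\Funiform$. The only remaining ingredient is, for each $u$, the unique $u$-$\overlap$-partition $P_u=\{P_{u,1},P_{u,2}\}$ of $N^{\overlap}(u)$.

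To obtain these partitions I would compute a single arbitrary normalized representation $\rho\in\mathcal{N}(G)=\mathcal{N}(\lambda_G)$, which is possible in polynomial time by McConnell's method \cite{mcc} (see also \cite{hsu}): compute a flip set $X$ of $\lambda_G$, compute an interval representation of the interval matrix $\lambda_G^{(X)}$ (possible even in logspace by Theorem~\ref{thm:vcim_canon}), and flip back the arcs of $X$, obtaining $\rho\in\mathcal{N}(\lambda_G)$ by Lemma~\ref{lem:fsim} and the fact that $(\mu^{(X)})^{(X)}=\mu$. Now for each vertex $u$ the arc $\rho(u)$ has a left and a right endpoint, and every $v\in N^{\overlap}(u)$ overlaps $\rho(u)$ and hence $\rho(v)$ contains exactly one of these two endpoints; partitioning $N^{\overlap}(u)$ according to which endpoint of $\rho(u)$ is contained yields a $u$-$\overlap$-partition, as depicted in Figure~\ref{fig:uniform_flipset}. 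Since $G$ is uniform, by Definition~\ref{fact:uovpart} there is exactly one such partition, so the one read off from $\rho$ must be $P_u$. Doing this for all $u$ and assembling the sets $\{u\}\cup N^{\contained}(u)\cup N^{\circlecover}(u)\cup P_{u,i}$ for $i\in\{1,2\}$ produces $\Funiform(G)$ in polynomial time.

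The substance here is thin: the only nontrivial external input is the polynomial-time computability of a normalized CA representation, and the single place where uniformity is genuinely used is the observation that the side-partition of $N^{\overlap}(u)$ read off from \emph{any} representation is independent of the chosen representation, hence invariant. I expect no real obstacle beyond invoking \cite{mcc} correctly; everything else is an unwinding of the definitions of $\Funiform$, of $u$-$\overlap$-partitions, and of the reduction in Theorem~\ref{thm:cfsf_cr}.
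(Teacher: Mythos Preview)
Your proposal is correct and follows essentially the same approach as the paper: compute an arbitrary normalized representation (via \cite{mcc}), read off the unique $u$-$\overlap$-partitions from it, assemble $\Funiform(G)$, and invoke Lemma~\ref{lem:funiform} together with Theorem~\ref{thm:cfsf_cr}. The paper's proof is a three-sentence version of exactly this; your added detail about how a normalized representation is obtained and why the coloring is harmless only makes the implicit steps explicit.
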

\begin{proof}
	Let $G$ be a uniform CA graph. Compute a normalized representation $\rho$ of $G$ and extract the $u$-$\overlap$-partition for each vertex $u$ from $\rho$. Then compute $\Funiform(G)$ from Lemma~\ref{lem:funiform} to obtain a canonical CA representation for $G$. Since CA representations can be computed in polynomial-time (see for instance \cite{mcc}) it follows that this procedure also works in polynomial-time.
\end{proof}

Considering that our definition of uniform CA graphs arose from the desire to compute invariant $u$-flip sets one might expect that these graphs are only a small special case of CA graphs.
Surprisingly, quite the opposite is the case as we will see. We give an alternative definition of uniform CA graphs which gives a better intuition as to why many CA graphs are uniform.

\begin{definition}
	Let $\lambda$ be a CA matrix.
	An $\overlap$-triangle $T$ of $\lambda$ is a set of three vertices that overlap pairwise, i.e.~for all $u \neq v$ in $T$ it holds that $u \: \overlap \: v$.    
	An $\overlap$-triangle $T$ is representable as non-Helly triangle (interval triangle) if there exists a representation $\rho \in \mathcal{N}(\lambda)$ such that the set of arcs $\set{\rho(x)}{x \in T}$ does (not) cover the whole circle.  Let $\ovtnht(\lambda)$ and $\ovtit(\lambda)$ denote the sets of $\overlap$-triangles representable as non-Helly triangles  and interval triangles respectively.  
    \label{def:ovtriangle}   
\end{definition}

This definition also applies to CA graphs via their neighborhood matrix, i.e.~$\ovtit(G) = \ovtit(\lambda)$ and $\ovtnht(G) = \ovtnht(\lambda)$ where $\lambda$ is the neighborhood matrix of $G$.  
See Figure \ref{fig:ucathm} for an example where the vertices $u,x,z$ are represented as non-Helly triangle on the left and interval triangle on the right.

Recall that a set of arcs which intersect pairwise but have overall empty intersection is called non-Helly. Since three pairwise overlapping arcs that cover the whole circle have overall empty intersection we call such a set a non-Helly triangle. In fact, one can verify that this is the only non-Helly arrangement of three arcs. A complete list of inclusion-minimal non-Helly CA models can be found in \cite[Corrollary~3.1]{joe}. 

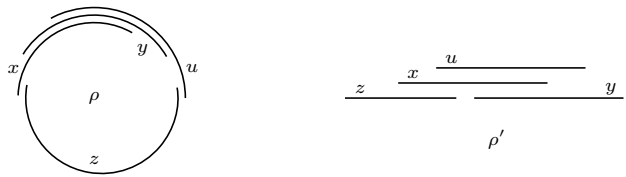
\begin{figure}[b]
    \centering
        \begin{tikzpicture}[shorten >=1pt,auto,node distance=1.2cm,
  main node/.style={circle,draw}]

\newcommand*{\xoffca}{0}%
\newcommand*{\yoffca}{0}%

\newcommand*{\xoff}{4}%
\newcommand*{\yoff}{0}%

\newcommand*{\movea}{1}%

\carc(\xoffca:\yoffca:60:120:1.2);
\carc(\xoffca:\yoffca:90:120:1.1);
\carc(\xoffca:\yoffca:120:120:1.0);
\carc(\xoffca+0.1:\yoffca:270:200:1.0);

\arclabel(rhol:$\rho$:\xoffca:\yoffca:0:0);

\arclabel(vz1:$z$:\xoffca:\yoffca:0.82:270);%
\arclabel(vy1:$y$:\xoffca:\yoffca:0.9:45);%
\arclabel(vu1:$u$:\xoffca:\yoffca:1.35:17);%
\arclabel(vx1:$x$:\xoffca:\yoffca:1.13:160);%

\draw[line width=0.6pt] (\xoff+0.5,\yoff+0.4) -- (\xoff+2+0.5,\yoff+0.4);
\draw[line width=0.6pt] (\xoff,\yoff+0.2) -- (\xoff+2,\yoff+0.2);
\draw[line width=0.6pt] (\xoff+2*0.5,\yoff) -- (\xoff+2+2*0.5,\yoff);
\draw[line width=0.6pt] (\xoff-0.7,\yoff) -- (\xoff+1.5-0.7,\yoff);

\node (vz) at (\xoff-0.7+0.2,\yoff+0.12) {$z$};
\node (vx) at (\xoff+0.2,\yoff+0.2+0.12) {$x$};
\node (vu) at (\xoff+0.5+0.2,\yoff+0.4+0.12) {$u$};
\node (vy) at (\xoff+2+2*0.5-0.2,\yoff+0.12) {$y$};
\node (vrho) at (\xoff+0.3+1,\yoff-0.55) {$\rho'$};

\end{tikzpicture}
    \caption{``$\Leftarrow$''-direction in the proof of Theorem \ref{thm:uca_char}}
    \label{fig:ucathm}                
\end{figure}  

\begin{theorem}
    A CA graph $G$ is uniform iff $\ovtit(G) \cap \ovtnht(G) = \emptyset$. 
    \label{thm:uca_char}
\end{theorem}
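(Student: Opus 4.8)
The plan is to prove both directions contrapositively, translating the ``exactly one $u$-$\overlap$-partition'' condition of uniformity into statements about how overlap-triangles through $u$ can be represented. The key observation linking the two sides of the iff is the following: given a vertex $u$ with $N^{\overlap}(u) \neq \emptyset$, and given two vertices $x,y$ that overlap $u$, whether $x$ and $y$ overlap from the same side of $u$ is, in a fixed representation $\rho$, exactly determined by whether the arcs $\rho(u),\rho(x),\rho(y)$ cover the whole circle: if $x$ and $y$ overlap $u$ from opposite sides then $\rho(u)\cup\rho(x)\cup\rho(y)$ is the whole circle (the two ``free'' ends of $\rho(u)$ are each covered), whereas if they overlap from the same side the union leaves a hole. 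So a change in the $u$-$\overlap$-partition of a pair $\{x,y\}$ across two representations is the same thing as the overlap-triangle $\{u,x,y\}$ (when it is an overlap-triangle, i.e.\ $x\:\overlap\:y$ as well) being representable both as a non-Helly triangle and as an interval triangle. The subtlety is that $x$ and $y$ need not overlap each other, so I must handle the case where $\{u,x,y\}$ is not an overlap-triangle separately.

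For the direction ``$G$ uniform $\Rightarrow$ $\ovtit(G)\cap\ovtnht(G)=\emptyset$'': suppose some overlap-triangle $T=\{u,x,y\}$ lies in both sets. Take $\rho_{\mathrm{nht}}\in\mathcal{N}(G)$ in which $T$ covers the circle and $\rho_{\mathrm{it}}\in\mathcal{N}(G)$ in which $T$ does not. By the observation above, in $\rho_{\mathrm{nht}}$ the vertices $x$ and $y$ overlap $u$ from opposite sides, so they lie in different parts of the $u$-$\overlap$-partition extracted from $\rho_{\mathrm{nht}}$; in $\rho_{\mathrm{it}}$ they overlap $u$ from the same side, so they lie in the same part of the $u$-$\overlap$-partition extracted from $\rho_{\mathrm{it}}$. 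These two $u$-$\overlap$-partitions are therefore distinct, contradicting uniformity. (The same argument works with $x$ or $y$ in the role of the apex, so there is nothing special about $u$.)

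For the converse ``$\ovtit(G)\cap\ovtnht(G)=\emptyset$ $\Rightarrow$ $G$ uniform'': suppose $G$ is not uniform, so some vertex $u$ has two distinct $u$-$\overlap$-partitions $P$ and $Q$, witnessed by $\rho_P,\rho_Q\in\mathcal{N}(G)$. Since $P\neq Q$ there is a pair $x,y\in N^{\overlap}(u)$ that is together in one partition and separated in the other — say $x,y$ are in the same part of $P$ but different parts of $Q$ (swapping $P,Q$ if needed). If $x\:\overlap\:y$ then $\{u,x,y\}$ is an overlap-triangle which, by the observation, is an interval triangle in $\rho_P$ and a non-Helly triangle in $\rho_Q$, so it lies in $\ovtit(G)\cap\ovtnht(G)$, a contradiction. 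The remaining case, where $x$ and $y$ do \emph{not} overlap each other, is the main obstacle: here $\{u,x,y\}$ is not an overlap-triangle and I cannot conclude directly. My plan for this case is to argue that it cannot actually arise: if $x,y\in N^{\overlap}(u)$ and $x\not\between y$ in every representation, then their relative position to the two endpoints of $\rho(u)$ is already forced (one must contain a fixed endpoint and the other the other, or they must be on the same side), so $x$ and $y$ are forced to lie in a fixed relation in \emph{every} $u$-$\overlap$-partition — meaning such a pair cannot be the source of the discrepancy between $P$ and $Q$. Making this rigorous requires a careful case analysis on the intersection types $\mu_{u,x},\mu_{u,y}$ (both $\overlap$ by assumption) together with $\mu_{x,y}$ (one of $\disjoint,\contains,\contained,\circlecover$), checking in each case that the side of $u$ on which $x$ and $y$ sit is representation-independent; this is the routine-but-delicate heart of the argument, and Figure~\ref{fig:ucathm} illustrates the representative non-degenerate situation. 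Once this is established, every discrepancy between two $u$-$\overlap$-partitions must come from a genuine overlap-triangle, and the contradiction follows as above.
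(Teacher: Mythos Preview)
Your ``key observation'' is stated as a biconditional, but only one direction is true. If $x,y\in N^{\overlap}(u)$ lie on the \emph{same} side of $u$ in $\rho$ then indeed $\rho(u)\cup\rho(x)\cup\rho(y)$ has a hole; equivalently, a non-Helly representation forces opposite sides. The converse fails: place endpoints in clockwise order $c,a,e,d,b,f$ with $\rho(u)=[a,b]$, $\rho(x)=[c,d]$, $\rho(y)=[e,f]$. Then $\{u,x,y\}$ is an $\overlap$-triangle represented as an \emph{interval} triangle (the arc $(f,c)$ is uncovered), yet $x$ contains the left endpoint of $u$ and $y$ the right one. This is precisely the configuration with $u$ in the middle, $\rho(u)\subset\rho(x)\cup\rho(y)$.

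For ``$\Rightarrow$'' this is a minor slip: in $\rho_{\mathrm{it}}$ you cannot assert that $x,y$ lie on the same side of your arbitrarily chosen $u$, but you \emph{can} choose as apex one of the two outer vertices of the interval triangle, and then the remaining two are on the same side of it while still on opposite sides in $\rho_{\mathrm{nht}}$. The paper makes exactly this choice.

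For ``$\Leftarrow$'' the gap is genuine. From ``$x,y$ on opposite sides of $u$ in $\rho_Q$'' you cannot conclude that $T=\{u,x,y\}$ is non-Helly in $\rho_Q$; by the example above it may be an interval triangle with $u$ in the middle, and then your argument produces no element of $\ovtit(G)\cap\ovtnht(G)$. The paper handles this as follows: since $T$ is an interval triangle in $\rho_P$ (same-side case), the hypothesis forces $T$ to be an interval triangle in $\rho_Q$ as well, now with $u$ in the middle. Comparing the two pictures --- say $\rho_P(y)\subset\rho_P(u)\cup\rho_P(x)$ versus $\rho_Q(u)\subset\rho_Q(x)\cup\rho_Q(y)$ --- one picks $z\in N[u]\setminus N[y]$, shows from $\rho_Q$ that $z\in N[x]$, and then reads off from $\rho_P$ that $\{u,x,z\}$ is an $\overlap$-triangle represented as non-Helly there but as interval in $\rho_Q$. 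Passing to this \emph{second} triangle is the idea your proposal is missing. (Your case analysis showing that any discrepancy pair must satisfy $x\:\overlap\:y$ is correct, and is in fact more explicit than the paper's one-line assertion of the same fact.)
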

\begin{proof}
    ``$\Rightarrow$'': Assume there exists a uniform CA graph $G$ with $\ovtit(G) \cap \ovtnht(G) \neq \emptyset$. Let $T$ be an $\overlap$-triangle in $\ovtit(G) \cap \ovtnht(G)$ and $T = \{x,y,z\}$. This means there exist two representations $\rho_I,\rho_N \in \mathcal{N}(G)$ such that $T$ is represented as interval triangle in $\rho_I$ and as non-Helly triangle in $\rho_N$. We assume w.l.o.g.~that $\rho_I(y) \subset \rho_I(x) \cup \rho_I(z)$, i.e.~$y$ is placed in-between $x$ and $z$ in $\rho_I$. This means $y$ and $z$ must be in the same part of the unique $x$-$\overlap$-partition $P_x$. However, $y$ and $z$ do not contain the same endpoint of $x$ in the representation $\rho_N$, which contradicts that $G$ is uniform. 
    
    ``$\Leftarrow$'': Assume there exists a CA graph $G$ with $\ovtit(G) \cap \ovtnht(G) = \emptyset$ that is not uniform. This means there exist a vertex $u$, two vertices $x,y \in N^\overlap(u)$ and two representations $\rho,\rho' \in \mathcal{N}(G)$ such that $x$ and $y$ overlap from the same side with $u$ in $\rho$ but not in $\rho'$. This implies that $x$ and $y$ must overlap and therefore $T=\{u,x,y\}$ is an $\overlap$-triangle. Notice that $T$ must be represented as interval triangle in $\rho$ because $x$ and $y$ both contain the same endpoint of $u$.
    It holds that $T$ is represented as interval triangle in $\rho'$ as well since otherwise $T \in \ovtit(G) \cap \ovtnht(G)$. Also, we assume w.l.o.g.~that $\rho(y) \subset \rho(x) \cup \rho(u)$. Since $u$ and $y$ overlap it holds that $N[u] \setminus N[y] \neq \emptyset$. Due to $\rho'$ it follows that $N[u] \setminus N[y] \subseteq N[u] \cap N[x]$. For a vertex $z \in N[u] \setminus N[y]$ to intersect with both $u$ and $x$ it is necessary that $z$ overlaps with $u$ and $x$ due to the representation $\rho$. It follows that $\{u,x,z\}$ is represented as non-Helly triangle in $\rho$. On the other hand, $\{u,x,z\}$ must be represented as interval triangle in $\rho'$ and therefore  $\ovtit(G) \cap \ovtnht(G) \neq \emptyset$, contradiction. See Figure \ref{fig:ucathm} for a schematic overview of $\rho$ and $\rho'$. 
\end{proof}

Observe that if an $\overlap$-triangle $T$ of $G$ is representable as non-Helly triangle then this implies that $T$ must have certain structural properties in $G$. For example, every vertex of $G$ must be adjacent to at least one of the vertices in $T$ since $T$ covers the whole circle in some representation. Similarly, if $T$ is representable as interval triangle this also implies some structural properties.  For instance, there must be an $x \in T$ such that every vertex that is adjacent to $x$ must also be adjacent to at least one other vertex in $T$. If an $\overlap$-triangle is representable as both non-Helly triangle and interval triangle then it must satisfy all of these structural properties at once. As a consequence such an $\overlap$-triangle must have a very particular structure which extends to the whole graph as we will see in the next section.  

A CA graph is Helly if it has a Helly CA representation. In \cite[Theorem~4.1]{joe} it is shown that every `stable' representation of a Helly CA graph is Helly. Since every normalized representation has the `stable' property it follows that a CA graph is Helly iff every normalized representation of it is Helly. If a CA graph $G$ is Helly this implies that $\ovtnht(G)$ is empty, and therefore every Helly CA graph is uniform.

A natural question to consider is the computational complexity of deciding whether an  $\overlap$-triangle is representable as non-Helly triangle or interval triangle. Given a CA graph $G$ and an $\overlap$-triangle $T$ of $G$ let us call the problem of deciding whether $T$ is in $\ovtnht(G)$ the non-Helly triangle representability problem. Analogously, deciding whether $T$ is in $\ovtit(G)$ is called the interval triangle representability problem. 
In the case of uniform CA graphs these two problems are complementary, i.e.~an $\overlap$-triangle $T$ is in $\ovtnht(G)$ iff $T$ is not in $\ovtit(G)$. In the following, we show that solving either of these two problems for uniform CA graphs is logspace-equivalent to computing a canonical representation for uniform CA graphs. 
 
\begin{definition}
    Let $G$ be a CA graph and $T=\{u,v,w\}$ is an $\overlap$-triangle of $G$. We say $v$ is amidst $u$ and $w$ if one of the following conditions holds:
    \begin{enumerate}
        \item $N_T(u)$ and $N_T(w)$ are non-empty
        \item there exists a $z \in N_T(u,w)$ such that $\{u,w,z\} \in \ovtnht(G)$
    \end{enumerate}
    \label{def:amidst}
\end{definition}

\begin{lemma}
    \label{lem:inbtw}
    Let $G$ be a uniform CA graph and $T=\{u,v,w\}$ is an $\overlap$-triangle of $G$ with $T \notin \ovtnht(G)$. Then the following statements are equivalent:
    \begin{enumerate}
        \item $v$ is amidst $u$ and $w$
        \item $\exists \rho \in \mathcal{N}(G): \rho(v) \subset \rho(u) \cup \rho(w) $
        \item $\forall \rho \in \mathcal{N}(G): \rho(v) \subset \rho(u) \cup \rho(w) $
    \end{enumerate}  
    \label{lem:amidst}              
\end{lemma}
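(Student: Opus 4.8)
The plan is to establish the cycle of implications $3 \Rightarrow 2 \Rightarrow 1 \Rightarrow 3$. The implication $3 \Rightarrow 2$ is trivial once we know $\mathcal{N}(G) \neq \emptyset$, which holds by our standing Proviso (twin-free, no universal vertex) together with the lemma of Hsu quoted in the preliminaries. So the real content is $2 \Rightarrow 1$ and $1 \Rightarrow 3$.

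For $2 \Rightarrow 1$: suppose $\rho \in \mathcal{N}(G)$ with $\rho(v) \subset \rho(u) \cup \rho(w)$. Since $T$ is an $\overlap$-triangle not representable as a non-Helly triangle, in $\rho$ the three arcs $\rho(u), \rho(v), \rho(w)$ do not cover the circle, so $T$ sits as an interval triangle with $v$ in the middle. First I would record the geometry: $l(\rho(u))$ and $r(\rho(w))$ (say) are the extreme endpoints and are not covered by $\rho(v)$, and $\rho(v)$'s two endpoints lie inside $\rho(u)$ and inside $\rho(w)$ respectively. Now I split on whether $N_T(u,w)$ contains a vertex $z$ with $\{u,w,z\}\in\ovtnht(G)$ — if so, condition 2 of Definition~\ref{def:amidst} holds and we are done. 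Otherwise I must show $N_T(u)$ and $N_T(w)$ are both nonempty. Here I would argue: take a point $p$ strictly inside $\rho(u)$ but outside $\rho(v)\cup\rho(w)$ — such a point exists because $\rho(v)$ only meets $\rho(u)$ near $\rho(u)$'s $\rho(w)$-side endpoint, and $\rho(w)$ is disjoint from the far portion of $\rho(u)$. Any arc containing $p$ is adjacent to $u$ but, if short enough, to neither $v$ nor $w$; but I need an actual vertex. The clean way is to invoke uniformity: because $G$ is uniform, the fact that $u$ and $v$ overlap means $N[u]\setminus N[v]\neq\emptyset$, and I want to locate such a witness on the correct side. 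A vertex $z\in N[u]\setminus N[v]$ has $\rho(z)$ meeting $\rho(u)$ but not $\rho(v)$; since $\rho(v)$ covers everything of $\rho(u)$ on the $\rho(w)$-facing side, $\rho(z)$ must touch $\rho(u)$ on the opposite side — the side away from $\rho(w)$ — so $z$ is not adjacent to $w$ either, giving $z\in N_T(u)$. Symmetrically, using $N[w]\setminus N[v]\neq\emptyset$ (as $v,w$ overlap), one gets a vertex in $N_T(w)$. That settles $2\Rightarrow 1$.

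For $1 \Rightarrow 3$: assume $v$ is amidst $u$ and $w$ and fix an arbitrary $\rho\in\mathcal{N}(G)$; since $T\notin\ovtnht(G)$, $T$ is represented as an interval triangle in $\rho$, so exactly one of $u,v,w$ lies "between" the other two in $\rho$, i.e. exactly one of $\rho(u)\subset\rho(v)\cup\rho(w)$, $\rho(v)\subset\rho(u)\cup\rho(w)$, $\rho(w)\subset\rho(u)\cup\rho(v)$ holds. I must rule out the two cases where $v$ is an extreme arc. Suppose for contradiction $\rho(u)\subset\rho(v)\cup\rho(w)$, so $u$ is the middle arc in $\rho$. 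In case 1 of amidst, pick $a\in N_T(u)$ and $b\in N_T(w)$: $a$ is adjacent to $u$ only (among $T$), $b$ to $w$ only. Because $u$ is in the middle in $\rho$, any arc meeting $\rho(u)$ meets $\rho(v)$ or $\rho(w)$ — contradicting $a\in N_T(u)$ unless $a$ is adjacent to $v$, but $a\in N_T(u)$ forbids that; so this case is immediately contradictory, ruling out $u$ (and by the same argument $w$) as the middle arc, forcing $\rho(v)\subset\rho(u)\cup\rho(w)$. In case 2 of amidst, there is $z\in N_T(u,w)$ with $\{u,w,z\}\in\ovtnht(G)$; I then argue that if $v$ were not the middle arc in $\rho$, the triangle $\{u,w,z\}$ would be forced to be an interval triangle in $\rho$, contradicting $\{u,w,z\}\in\ovtnht(G)$ via Theorem~\ref{thm:uca_char} (uniformity makes $\ovtit$ and $\ovtnht$ complementary, so membership in $\ovtnht(G)$ means the triangle is a non-Helly triangle in \emph{every} normalized representation). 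Concretely: $z$ overlaps both $u$ and $w$ and is non-adjacent to $v$, so $\rho(z)$ avoids $\rho(v)$; if $\rho(u)$ or $\rho(w)$ were the extreme arc containing $\rho(v)\cup(\text{the rest})$, then $\rho(z)$, squeezed between $\rho(u)$ and $\rho(w)$ while avoiding $\rho(v)$, together with $\rho(u),\rho(w)$ would fail to wrap the circle — contradiction. Hence $\rho(v)\subset\rho(u)\cup\rho(w)$ in this case too.

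The main obstacle I anticipate is the case analysis in $1\Rightarrow 3$ when $v$ is amidst via condition 2 (the $\ovtnht$ witness $z$): one has to carefully track, in an arbitrary $\rho$, the cyclic order of the six relevant endpoints and verify that no placement of $\rho(z)$ consistent with $z$ overlapping $u$ and $w$, non-adjacent to $v$, and $u$ or $w$ being the middle arc, can yield a non-Helly triangle $\{u,w,z\}$ — this is where uniformity (hence the "for all $\rho$" force of $\ovtnht$-membership, Theorem~\ref{thm:uca_char}) is essential and where a sloppy argument could miss a subcase. A secondary subtlety throughout is that "interval triangle in $\rho$" only guarantees the three arcs don't cover the circle; I should phrase the middle-arc dichotomy carefully, noting that for three pairwise-overlapping arcs not covering the circle, exactly one is contained in the union of the other two, which is the geometric fact doing most of the work.
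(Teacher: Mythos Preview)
Your cycle $3\Rightarrow 2\Rightarrow 1\Rightarrow 3$ matches the paper's, and your $1\Rightarrow 3$ argument is essentially the paper's (split into the two amidst-cases rather than unified, but using the same observations: a vertex in $N_T(w)$ cannot exist if $w$ is the middle arc, and a witness $z\in N_T(u,w)$ with $\{u,w,z\}\in\ovtnht(G)$ must---by uniformity---be non-Helly in \emph{every} $\rho$, which is geometrically impossible when $v$ is not the middle arc). Your acknowledged vagueness in case~2 is exactly the step the paper dispatches with the one-line ``which is impossible''; the clean reason is that the portion of $\rho(v)$ outside $\rho(u)\cup\rho(w)$ is covered by neither $\rho(u)$, $\rho(w)$, nor $\rho(z)$ (the latter because $z\notin N[v]$).

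There is, however, a real gap in your $2\Rightarrow 1$. You take $z\in N[u]\setminus N[v]$ and conclude $z\in N_T(u)$ by saying that $\rho(z)$ must touch $\rho(u)$ on the side away from $\rho(w)$, ``so $z$ is not adjacent to $w$ either''. That inference fails: nothing prevents $\rho(z)$ from wrapping around the back of the circle to also meet $\rho(w)$, placing $z$ in $N_T(u,w)$ rather than $N_T(u)$. The repair is precisely to invoke your case assumption: if $\rho(z)$ did reach $\rho(w)$ while avoiding $\rho(v)$, then $\{u,w,z\}$ would be an $\overlap$-triangle represented as a non-Helly triangle in $\rho$, hence in $\ovtnht(G)$, contradicting that you are in the ``otherwise'' branch. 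Once you add this, your argument collapses into the paper's: the paper simply assumes $v$ is not amidst (so both $N_T(u)=N_T(w)=\emptyset$ and no such $z$ exists), picks $z\in N[u]\setminus N[v]$, forces $z\in N_T(u,w)$, and derives the non-Helly contradiction directly. A minor side remark: $N[u]\setminus N[v]\neq\emptyset$ follows from $u\:\overlap\:v$ and twin-freeness (the standing proviso), not from uniformity.
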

\begin{proof}
    ``2 $\Rightarrow$ 1'': Let $\rho$ be in $\mathcal{N}(G)$ such that $\rho(v) \subset \rho(u) \cup \rho(w)$ and assume that $v$ is not amidst $u,w$. Since $v$ overlaps with $u$ and $w$ it holds that $N[u] \setminus N[v]$ and $N[w] \setminus N[v]$ are non-empty. Because $N_T(u) = N_T(w) = \emptyset$ it must hold that $N_T(u,w) \neq \emptyset$. Let $z \in N_T(u,w)$. For $z$ to intersect with $u$ and $w$ in $\rho$ it must hold that $\{u,w,z\}$ is represented as non-Helly triangle in $\rho$. This contradicts the assumption that $v$ is not amidst $u,w$.  
    
    ``1 $\Rightarrow$ 3'': Let $v$ be amidst $u$ and $w$ and assume that there exists a $\rho \in \mathcal{N}(G)$ such that $\rho(v) \not\subset \rho(u) \cup \rho(w)$. Since $T \notin \ovtnht(G)$ and $G$ is uniform it follows by Theorem \ref{thm:uca_char} that $T$ must be represented as interval triangle in every representation, which includes $\rho$. We assume w.l.o.g.~that $\rho(w) \subset \rho(u) \cup \rho(v)$. From that it follows that $N_T(w)$ is empty and therefore there must be a $z \in  N_T(u,w)$ such that $\{u,w,z\}$ is a non-Helly triangle in $\rho$, which is impossible. 
    
    ``3 $\Rightarrow$ 2'': clear.
\end{proof}

\begin{definition}
	Let $G$ be a CA graph and $u \in V(G)$. Let the binary relation $\sim_u$ on $N^{\overlap}(u)$ be defined such that $x \sim_u y$ holds if one of the following holds:
	\begin{enumerate}
		\item $x = y$
		\item $x \: \contained \: y$ or $x \: \contains \: y$
		\item $x \: \overlap \: y$, $\{u,x,y\} \notin \ovtnht(G)$ and $u$ is not amidst $x$ and $y$
	\end{enumerate} 
	\label{def:simu}
\end{definition}

\begin{lemma}
    For every uniform CA graph $G$ and $u \in V(G)$ it holds that the partition induced by $\sim_u$ equals the unique $u$-$\overlap$-partition $P_u$. Stated differently, $x \sim_u y$ iff $x$ and $y$ are in the same part of $P_u$.
    \label{lem:simu}
\end{lemma}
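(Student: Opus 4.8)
The plan is to show that the equivalence relation $\sim_u$ (Definition~\ref{def:simu}) and the partition $P_u$ agree on $N^{\overlap}(u)$ by analysing each of the three cases in the definition of $\sim_u$ and relating them to the geometry of a normalized representation. Fix a uniform CA graph $G$, a vertex $u$, and take any representation $\rho \in \mathcal{N}(G)$; by uniformity the side of $\rho(u)$ that $\rho(x)$ meets is the same for every choice of $\rho$, which is exactly what makes $P_u$ well-defined. First I would observe that it suffices to check, for arbitrary $x,y \in N^{\overlap}(u)$, that $x \sim_u y$ holds iff $\rho(x)$ and $\rho(y)$ contain the same endpoint of $\rho(u)$. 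Case~1 ($x = y$) is trivial. For Case~2 ($x\,\contained\,y$ or $x\,\contains\,y$): if, say, $\rho(x) \subset \rho(y)$ and both overlap $\rho(u)$, then $\rho(x)$ lies inside $\rho(y)$, which itself straddles exactly one endpoint of $\rho(u)$; since $\rho(x)$ also straddles an endpoint of $\rho(u)$ and is contained in $\rho(y)$, it must straddle the \emph{same} endpoint — so $x$ and $y$ are on the same side of $u$. Hence the relation forces sameness of side in Case~2.

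The heart of the argument is Case~3. I would prove the biconditional: for $x,y \in N^{\overlap}(u)$ with $x\,\overlap\,y$, the vertices $x$ and $y$ overlap $u$ from the same side in $\rho$ iff $\{u,x,y\} \notin \ovtnht(G)$ and $u$ is not amidst $x$ and $y$. For the backward direction, suppose $\{u,x,y\} \notin \ovtnht(G)$ and $u$ is not amidst $x,y$. The triangle $T = \{u,x,y\}$ is an $\overlap$-triangle; since it is not in $\ovtnht(G)$ and $G$ is uniform, Theorem~\ref{thm:uca_char} says $T$ is represented as an interval triangle in \emph{every} $\rho \in \mathcal{N}(G)$. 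In an interval-triangle arrangement of three pairwise-overlapping arcs, exactly one arc sits ``in the middle'', i.e. is contained in the union of the other two. By Lemma~\ref{lem:amidst} (which applies because $T \notin \ovtnht(G)$ and $G$ is uniform), $u$ not being amidst $x,y$ means $\rho(u) \not\subset \rho(x) \cup \rho(y)$ for every $\rho$; therefore the middle arc of $T$ is either $\rho(x)$ or $\rho(y)$. Say $\rho(x) \subset \rho(u) \cup \rho(y)$. Then $\rho(x)$ meets both $\rho(u)$ and $\rho(y)$ without wrapping around, which pins $\rho(y)$ to meet $\rho(u)$ on the same endpoint that $\rho(x)$ does — so $x \sim_u y$ geometrically. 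For the forward direction, suppose $\rho(x)$ and $\rho(y)$ contain the same endpoint of $\rho(u)$. Then as in the proof of Theorem~\ref{thm:uca_char} (``$\Leftarrow$'' direction) the triangle $\{u,x,y\}$ is represented as an interval triangle in $\rho$, hence — by uniformity and Theorem~\ref{thm:uca_char} — $\{u,x,y\} \notin \ovtnht(G)$. Moreover the arc on the common side, say $\rho(x)$ or $\rho(y)$, lies between $\rho(u)$ and the other arc, so $u$ is not the middle arc; by Lemma~\ref{lem:amidst} again, $u$ is not amidst $x$ and $w$.

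I would then assemble these cases: each clause of Definition~\ref{def:simu} is necessary and sufficient for $x$ and $y$ to lie on the same side of $\rho(u)$, and since $\rho$ is an arbitrary element of $\mathcal{N}(G)$ and the partition-by-sides is $P_u$ by definition of $u$-$\overlap$-partition, transitivity of ``same side of $\rho(u)$'' immediately gives that $\sim_u$ is an equivalence relation whose classes are exactly the two parts $P_{u,1}, P_{u,2}$. (One should note $\sim_u$ is declared only as a binary relation in Definition~\ref{def:simu}; transitivity and symmetry come for free once we identify it with the same-side relation.) The main obstacle I anticipate is the careful case analysis of three overlapping arcs: verifying that ``$\rho(x) \subset \rho(u) \cup \rho(y)$'' really does force $\rho(y)$ onto the same endpoint of $\rho(u)$ as $\rho(x)$, and ruling out the degenerate circle-cover configuration — but this is exactly the content that Lemma~\ref{lem:amidst} and Theorem~\ref{thm:uca_char} have already isolated, so the work is mostly in invoking them correctly rather than in new geometric reasoning. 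A secondary subtlety is making sure the definition of ``amidst'' (Definition~\ref{def:amidst}), which is stated for $v$ amidst $u,w$, is applied with the roles permuted correctly, namely to ``$u$ amidst $x,y$''; this requires that $N_T$ and $\ovtnht$ are symmetric in the obvious way, which they are.
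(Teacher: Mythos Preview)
Your proposal is correct and follows essentially the same approach as the paper: both arguments reduce the claim to the overlap case and then invoke Theorem~\ref{thm:uca_char} and Lemma~\ref{lem:amidst} to equate ``$\{u,x,y\}\notin\ovtnht(G)$ and $u$ not amidst $x,y$'' with ``$x,y$ on the same side of $\rho(u)$''. The paper's proof is a compact contradiction argument, whereas yours unfolds the same content into an explicit case analysis, but the mathematical substance is identical. (Minor typo: in your forward direction you write ``$u$ is not amidst $x$ and $w$'' where you mean $y$.)
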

\begin{proof}
    ``$\Rightarrow$'': Let $x \sim_u y$ and assume for the sake of contradiction that $x$ and $y$ are not in the same part of the $u$-$\overlap$-partition. This means there exists a representation $\rho \in \mathcal{N}(G)$ such that $\rho(x)$ and $\rho(y)$ contain different endpoints of $\rho(u)$. This is only possible if $x$ and $y$ overlap. Since $\{u,x,y\} \notin \ovtnht(G)$ this means $\{u,x,y\}$ must be represented as interval triangle in $\rho$. In order for $\rho(x)$ and $\rho(y)$ to contain different endpoints of $\rho(u)$ it must hold that $\rho(u) \subset \rho(x) \cup \rho(y)$, which implies that $u$ is amidst $x$ and $y$ by Lemma \ref{lem:amidst}. This contradicts $x \sim_u y$.  
    
    ``$\Leftarrow$'': Let $x$ and $y$ be in the same part of the $u$-$\overlap$-partition and assume that $x \sim_u y$ does not hold. This implies that $x$ and $y$ must overlap and therefore $\{u,x,y\}$ form an $\overlap$-triangle. For $x \sim_u y$ to not hold it must be either the case that $\{u,x,y\}$ is only representable as non-Helly triangle or $u$ is amidst $x$ and $y$. In both cases this contradicts $x$ and $y$ being in the same part of the $u$-$\overlap$-partition.
\end{proof}

\begin{theorem}
    The representation, canonical representation, non-Helly triangle representability and interval triangle representability problem for uniform CA graphs are logspace-equivalent.
    \label{thm:eq_problem}
\end{theorem}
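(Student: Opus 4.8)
The plan is to close a short cycle of logspace reductions among the four problems, using the flip trick of Theorem~\ref{thm:cfsf_cr} together with the invariant flip set function $\Funiform$ of Lemma~\ref{lem:funiform} as the engine that turns knowledge of the $\overlap$-partitions $P_u$ into a canonical representation, and using Lemmas~\ref{lem:amidst} and~\ref{lem:simu} as the bridge between representations and the purely combinatorial $\overlap$-triangle data. This refines Theorem~\ref{thm:uca_canon} by pinpointing where the computational cost of canonizing uniform CA graphs sits.

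First the easy links. A canonical representation is in particular a representation, so the representation problem logspace-reduces to the canonical representation problem. For a uniform $G$ every $\overlap$-triangle $T$ is, in any $\rho\in\mathcal{N}(G)$ (nonempty since $G$ is twin-free without universal vertices), represented either covering the circle or not, so $T\in\ovtnht(G)\cup\ovtit(G)$; since $\ovtnht(G)\cap\ovtit(G)=\emptyset$ by Theorem~\ref{thm:uca_char}, the two are complementary on $\overlap$-triangles, and (after the logspace check that the input triple is an $\overlap$-triangle) the non-Helly and interval triangle representability problems are logspace-interreducible. Next, to reduce the triangle problems to the representation problem: given $(G,T)$, query the representation oracle on $G$ for a $\rho$ and report $T\in\ovtnht(G)$ iff $\rho[T]$ covers the whole circle. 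Here one uses that the covering status of an $\overlap$-triangle is the same in every representation of a uniform CA graph: in any representation an $\overlap$-pair overlaps properly (neither arc contains the other, and together they do not cover the circle), so the arrangement of the three arcs of $T$ is rigid and, by uniqueness of the $\overlap$-partitions, agrees with that in any normalized representation (alternatively one normalizes the returned $\rho$ without further oracle calls). This gives $\mathrm{NHT}\le\mathrm{repr}$ and $\mathrm{IT}\le\mathrm{repr}$.

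The main step is to reduce the representation problem to the non-Helly triangle representability problem; the canonical-representation case then follows by the same recipe and the interval-triangle case by complementarity. Using the oracle we obtain the (polynomial-size) set $\ovtnht(G)$. From $G$ and $\ovtnht(G)$ the ``amidst'' relation of Definition~\ref{def:amidst} is logspace-computable, since its only non-combinatorial ingredient is membership in $\ovtnht(G)$; hence so is the relation $\sim_u$ of Definition~\ref{def:simu} for each vertex $u$, and by Lemma~\ref{lem:simu} the partition it induces is exactly the unique $u$-$\overlap$-partition $P_u$. Having all the $P_u$ we assemble $\Funiform(G)$ in logspace; since $\Funiform$ is an invariant flip set function for uniform CA graphs (Lemma~\ref{lem:funiform}), feeding $\Funiform(G)$ into the construction in the proof of Theorem~\ref{thm:cfsf_cr} --- whose only remaining ingredient, canonization of vertex-colored interval matrices, is in logspace by Theorem~\ref{thm:vcim_canon} --- produces a canonical representation of $G$, hence a representation. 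Running the representation oracle instead and reading off the $P_u$ from the returned $\rho$ (as in the previous paragraph) gives $\mathrm{canon\text{-}repr}\le\mathrm{repr}$ the same way. Chaining, $\mathrm{repr}\equiv\mathrm{canon\text{-}repr}$ and $\mathrm{repr}\equiv\mathrm{NHT}\equiv\mathrm{IT}$, so all four problems are logspace-equivalent.

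The main obstacle is this last reduction: one has to verify that the combinatorial definition of $\sim_u$ genuinely reconstructs $P_u$ --- this is precisely where Lemmas~\ref{lem:inbtw}/\ref{lem:amidst} and~\ref{lem:simu} are used and where uniformity (via Theorem~\ref{thm:uca_char}) is indispensable --- and one has to check that the whole pipeline (extract $\ovtnht(G)$, compute the $\overlap$-partitions, build $\Funiform(G)$, run the flip trick) stays within the logspace-reduction framework, invoking the oracle only in the permitted way and recomputing the polynomial-size objects $\ovtnht(G)$, the $P_u$, and $\Funiform(G)$ on demand rather than storing them. A minor additional point, used when reducing the triangle problems to the representation problem, is that the representation returned by the oracle need not be normalized; this is handled by the rigidity remark above.
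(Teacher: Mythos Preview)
Your proof is correct and follows essentially the same route as the paper: establish the easy reductions (complementarity of the two triangle problems on uniform CA graphs via Theorem~\ref{thm:uca_char}, triangle problems to representation, representation to canonical representation), and for the main direction use the NHT oracle to compute $\sim_u$ via Definitions~\ref{def:amidst} and~\ref{def:simu}, recover the unique $\overlap$-partitions by Lemma~\ref{lem:simu}, assemble $\Funiform$ (Lemma~\ref{lem:funiform}), and invoke the flip trick of Theorem~\ref{thm:cfsf_cr}. One small point: your worry that the oracle-returned representation might not be normalized is unnecessary under the paper's Proviso, which identifies CA graphs with their neighborhood matrices so that the representation problem asks for a $\rho\in\mathcal{N}(G)$; your ``rigidity'' remark (that $\overlap$-pairs overlap properly in \emph{any} CA representation) is not correct for arbitrary, non-normalized representations, but it is simply not needed.
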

\begin{proof}
    The non-Helly triangle representability and interval triangle representability problem for uniform CA graphs are logspace-equivalent because they are complementary in the sense that an $\overlap$-triangle is representable as non-Helly triangle iff it is not representable as interval triangle. This follows from the fact that an $\overlap$-triangle can only be either represented as non-Helly triangle or interval triangle and these two possibilities are mutually exclusive in the case of uniform CA graphs. As a consequence these two problems are trivially reducible to the representation problem for uniform CA graphs. Given a uniform CA graph $G$, an $\overlap$-triangle $T$ of $G$ and a representation $\rho \in \mathcal{N}(G)$ it holds that $T \in \ovtnht(G)$ iff $T \notin \ovtit(G)$ iff $T$ is represented as non-Helly triangle in $\rho$. 
    
    The representation problem is obviously reducible to the canonical representation problem. Therefore it remains to show that the canonical representation problem for uniform CA graphs is reducible to the non-Helly triangle representability problem. To obtain a canonical representation for a uniform CA graph we can use the invariant flip set function given in Lemma \ref{lem:funiform}. To compute this function we need to figure out the unique $\overlap$-partitions for each vertex. By Lemma \ref{lem:simu} this can be done by computing the equivalence relation $\sim_u$ for each vertex $u$. It can be verified that this relation is computable in logspace using queries of the form $T \in \ovtnht(G)$.
\end{proof}

The isomorphism problem for CA graphs can be reduced to the one for non-uniform CA graphs in polynomial-time due to Theorem~\ref{thm:uca_canon}. However,  a reduction from the canonical representation problem for CA graphs to the one for non-uniform CA graphs does not immediately follow from Theorem~\ref{thm:uca_canon} unless uniform CA graphs can be recognized in polynomial-time. An alternative approach to construct such a reduction is to solve the non-Helly triangle representability problem for uniform CA graphs with an additional requirement.

\begin{definition}
	The globally invariant non-Helly triangle representability problem for uniform CA graphs is defined as follows. Let $A$ be an algorithm that correctly decides the non-Helly triangle representability problem for uniform CA graphs. Let $f_A$ be the function computed by $A$, i.e.~for a graph $G$ and an $\overlap$-triangle $T$ of $G$ it holds that $f_A(G,T) = 1$ iff $A$ accepts $(G,T)$. We say $A$ decides the globally invariant non-Helly triangle representability problem for uniform CA graphs if $f_A$ is an invariant \emph{for all graphs}. Stated differently, the output of $A$ must be independent of the vertex labels.	     
\end{definition} 

\begin{lemma}
	The canonical representation problem for CA graphs is logspace-reducible to the 
	globally invariant non-Helly triangle representability problem for uniform CA graphs and the canonical representation problem for vertex-colored non-uniform CA graphs.	
    \label{lem:ginhtp}
\end{lemma}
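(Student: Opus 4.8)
The plan is to instantiate Lemma~\ref{lem:gir} with $\mathcal{D}$ the class of neighborhood matrices of uniform CA graphs and $\mathcal{C}$ that of neighborhood matrices of non-uniform CA graphs, so that $\mathcal{C}\cup\mathcal{D}$ is precisely the class of CA graphs in the sense of the Proviso. By Lemma~\ref{lem:gir} it then suffices to build a \emph{globally invariant} flip set function $F$ for $\mathcal{D}$ that is logspace-computable with oracle access to an algorithm $A$ that correctly decides the non-Helly triangle representability problem for uniform CA graphs and whose output $f_A$ is invariant for all graphs; the canonical representation oracle for vertex-colored non-uniform CA graphs supplies the canonization oracle for $\mathcal{C}$ that Lemma~\ref{lem:gir} requires (hand it the trivial coloring on the uncolored matrices it receives).

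To define $F$ on an arbitrary graph $G$, I would mimic the construction behind Lemma~\ref{lem:simu}. For a vertex $u$ and $w\in N^{\overlap}(u)$, let $[w]^A_u$ be the set of all $x\in N^{\overlap}(u)$ with $x\sim_u^A w$, where $\sim_u^A$ is the relation of Definition~\ref{def:simu} in which every occurrence of a query ``$\{a,b,c\}\in\ovtnht(G)$'' — including those hidden inside the ``amidst'' predicate of Definition~\ref{def:amidst} — is answered by the oracle value $f_A(G,\{a,b,c\})$, first checking that $\{a,b,c\}$ is an $\overlap$-triangle and answering negatively otherwise (as the definition of $\ovtnht$ dictates). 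Now set
$$F(G)=\Big\{\,\{u\}\cup N^{\contained}(u)\cup N^{\circlecover}(u)\cup Y \;\Big|\; u\in V(G),\ Y\in\{\emptyset\}\cup\big\{[w]^A_u \mid w\in N^{\overlap}(u)\big\}\,\Big\}.$$
All neighborhood-matrix entries and exclusive neighborhoods are logspace-computable from $G$, evaluating a single instance of $\sim_u^A$ costs only logspace work plus queries to $A$, and $|F(G)|\le|V(G)|+|V(G)|^2$; hence $F$ is logspace-computable with oracle $A$.

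It remains to verify the two conditions of Definition~\ref{def:cfsf}. \textbf{Global invariance.} Every ingredient of $F(G)$ — the entries of the neighborhood matrix, the exclusive neighborhoods $N_T(\cdot)$, and the values $f_A(G,\cdot)$, which are invariant for all graphs by the hypothesis on $A$ — is isomorphism-invariant, so a relabeling $\pi$ of $G$ sends $[w]^A_u$ to $[\pi(w)]^A_{\pi(u)}$ and hence $F(G)$ to $F(\pi(G))$; thus $\chi_F$ is an invariant for all graphs. \textbf{Condition 1 for $\mathcal{D}$.} If $G$ is a uniform CA graph, then $A$ answers every $\ovtnht$-query about $G$ correctly, so $\sim_u^A$ coincides with the relation $\sim_u$ of Definition~\ref{def:simu}; by Lemma~\ref{lem:simu} the latter is the equivalence relation whose classes are the two parts $P_{u,1},P_{u,2}$ of the unique $u$-$\overlap$-partition (a part possibly empty, in which case the ``$Y=\emptyset$'' choice supplies it, while $[w]^A_u$ supplies a nonempty part for any $w$ in it). Therefore $F(G)\supseteq\Funiform(G)$, and by Lemma~\ref{lem:funiform} $\Funiform(G)$ — hence $F(G)$ — contains a flip set of $G$.

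Plugging this $F$ and the vertex-colored non-uniform canonization oracle into Lemma~\ref{lem:gir} yields the claimed logspace reduction. I expect the one point that deserves care to be the following: since uniform CA graphs cannot be recognized, $F$ is inevitably evaluated on non-uniform CA matrices, on which $A$ may answer incorrectly, so that $F(\lambda)$ may be meaningless for such $\lambda$. What makes this harmless is exactly the shape of Lemma~\ref{lem:gir}: (i) $F$ is globally invariant regardless — this is precisely why the \emph{globally invariant} variant of the triangle problem is needed, not an arbitrary decider — and (ii) Lemma~\ref{lem:gir} uses $F(\lambda)$ only after verifying in logspace (via Lemma~\ref{lem:fsim} and Theorem~\ref{thm:vcim_canon}) that it contains an actual flip set, and otherwise defers to the $\mathcal{C}$-oracle, which is sound because every neighborhood matrix of a uniform CA graph already lies in the set where $F$ does return a flip set.
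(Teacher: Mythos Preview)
Your proposal is correct and follows essentially the same route as the paper: build a globally invariant flip set function for uniform CA graphs by replacing every $\ovtnht$-query in Definitions~\ref{def:amidst} and~\ref{def:simu} with the oracle $A$, then invoke Lemma~\ref{lem:gir}. The only cosmetic difference is that the paper first tests whether $\sim_u^A$ is an equivalence relation and, if so, uses the full quotient $N^{\overlap}(u)/\sim_u^A$ (otherwise contributing nothing for that $u$), whereas you take the ``classes'' $[w]_u^A$ directly and add $Y=\emptyset$; both variants are globally invariant and coincide with $\Funiform$ on uniform inputs, so either works.
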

\begin{proof}
	Suppose we are given an algorithm $A$ which solves the 	globally invariant non-Helly triangle representability problem for uniform CA graphs. We argue that $A$ can be used to compute a globally invariant flip set function for uniform CA graphs. 
	From Lemma \ref{lem:gir} it then follows that the canonical representation problem for CA graphs reduces to that for vertex-colored non-uniform CA graphs. 
	
    Given a CA graph $G$ let $\Delta(G,A)$ be the set of $\overlap$-triangles $T$ of $G$ such that $A$ accepts $(G,T)$. If $G$ is a uniform CA graph then $\Delta(G,A) = \ovtnht(G)$. Consider Definition \ref{def:amidst} and \ref{def:simu} and suppose that each occurrence of $\ovtnht(G)$ is replaced by $\Delta(G,A)$. Let us call the new relation $\sim_u^A$. Clearly, in the case of uniform CA graphs $\sim_u$ and $\sim_u^A$ coincide.  Next, consider the following variant of $\Funiform$:
    $$ \Funiform^A(G) = \bigcup_{\substack{u \in V(G) \\ X \in (N^{\overlap}(u) / \sim_u^A)}}  
    \big\{ \{ u \} \cup N^{\contained}(u) \cup N^{\circlecover}(u) \cup X \big\} 
    $$   
    where $(N^{\overlap}(u) / \sim_u^A)$ denotes the equivalence classes of $\sim_u^A$. If $\sim_u^A$ is not an equivalence relation let $(N^{\overlap}(u) / \sim_u^A) = \emptyset$. If $G$ is a uniform CA graph then it follows from Lemma~\ref{lem:simu} that $\Funiform(G) = \Funiform^A(G)$. Therefore $\Funiform^A$ is an invariant flip set function for uniform CA graphs. Additionally, it can be verified that $\Funiform^A$ is globally invariant due to the fact that the answer of $A$ is independent of the vertex labels. Also, the function $\Funiform^A$ can be computed in logspace using queries of the form $T \in \Delta(G,A)$. Observe that $\Delta(G,A)$ only provides $n^3$ bits of information with $n = |V(G)|$ and therefore can be computed `in a single query' by a functional oracle which outputs the $n^3$ bits of information. 
\end{proof}

\section{Non-Uniform CA Graphs and Restricted CA Matrices}
In the first part of this section we examine the structure of non-uniform CA graphs. Every such graph must have two $\overlap$-triangles which have exactly one vertex in common and both are representable as interval triangle and as non-Helly triangle. This pair of $\overlap$-triangles enforces a particular structure in non-uniform CA graphs. In the second part we introduce restricted CA matrices, which try to partly capture this structure. Roughly speaking, restricted CA matrices can be seen as a generalization of the neighborhood matrices of non-uniform CA graphs. We pay the price of considering this more general class of structures in order to provide a logspace reduction from the canonical representation problem for CA graphs to that of restricted CA matrices. 

\begin{definition}
    Given a CA graph $G$, an induced 4-cycle $C = (u,w,w',u')$ of $G$ and $v \in V(G) \setminus C$. We say $(C,v)$ is a non-uniformity witness of $G$ if $\{u,v,w \}, \{ u',v,w' \} \in \ovtit(G) \cap \ovtnht(G)$. We also simply call $(C,v)$ a witness of $G$. 
\end{definition}

\begin{theorem}
    A CA graph $G$ is non-uniform iff $G$ has a non-uniformity witness.
    \label{thm:nuwc}
\end{theorem}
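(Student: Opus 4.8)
The backward implication is immediate and I would dispose of it first: if $(C,v)$ is a non-uniformity witness with $C=(u,w,w',u')$, then by definition $\{u,v,w\}\in\ovtit(G)\cap\ovtnht(G)$, so $\ovtit(G)\cap\ovtnht(G)\neq\emptyset$, and Theorem~\ref{thm:uca_char} yields that $G$ is non-uniform. All the work is in the forward implication.

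For the forward implication the plan is to reuse the construction from the proof of Theorem~\ref{thm:uca_char} (the ``$\Leftarrow$'' direction). Non-uniformity gives a vertex $u$, two vertices $x,y\in N^{\overlap}(u)$ and representations $\rho,\rho'\in\mathcal{N}(G)$ such that $x,y$ overlap $u$ from the same side in $\rho$ but from opposite sides in $\rho'$; as shown there, $\{u,x,y\}$ is then an $\overlap$-triangle and one may assume w.l.o.g. that $\rho(y)\subset\rho(x)\cup\rho(u)$, i.e.\ in $\rho$ the arc of $y$ lies between those of $x$ and $u$, whence $N[y]\subseteq N[x]\cup N[u]$. I would take $x$ to be the apex $v$ of the witness and build the two $\overlap$-triangles of the witness on the two sides of $x$: the first will be $\{u,x,z\}$ and the second $\{x,y,\tilde z\}$, where $z\in N[u]\setminus N[y]$ exists because $u\between y$, and $\tilde z\in(N[x]\cap N[y])\setminus N[u]$ exists because $x\between y$ and $N[y]\subseteq N[x]\cup N[u]$. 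The four ``outer'' vertices $u,z,\tilde z,y$ will be the induced $4$-cycle.

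The heart of the argument is then to establish, by an elementary analysis of the arcs in $\rho$ and $\rho'$ (of the same flavour as the one already carried out in the proof of Theorem~\ref{thm:uca_char}): (i) $\{u,x,z\}$ is an $\overlap$-triangle which is a non-Helly triangle in $\rho$ and an interval triangle in $\rho'$, hence lies in $\ovtit(G)\cap\ovtnht(G)$; (ii) symmetrically, $\{x,y,\tilde z\}$ is an $\overlap$-triangle which is a non-Helly triangle in $\rho'$ and an interval triangle in $\rho$, hence also lies in $\ovtit(G)\cap\ovtnht(G)$; and (iii) in $\rho$ both $\rho(z)$ and $\rho(\tilde z)$ are forced to contain the endpoint of $\rho(x)$ that is not covered by $\rho(y)$, so $z\between\tilde z$, in particular $z\sim\tilde z$. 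Granting this, the remaining checks are bookkeeping: $u,z,\tilde z,y,x$ are five distinct vertices (using $z\notin N[y]$, $\tilde z\notin N[u]$, $x\in N[u]\cap N[y]$, $y\in N[u]$), the four pairs $\{u,z\},\{z,\tilde z\},\{\tilde z,y\},\{y,u\}$ are edges while $\{u,\tilde z\}$ and $\{z,y\}$ are non-edges, so $C:=(u,z,\tilde z,y)$ is an induced $4$-cycle with $x\notin C$, and $\{u,x,z\},\{y,x,\tilde z\}\in\ovtit(G)\cap\ovtnht(G)$; that is, $(C,x)$ is a non-uniformity witness.

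The step I expect to be the main obstacle is (iii) — pinning down that $\rho(z)$ and $\rho(\tilde z)$ must share an endpoint of $\rho(x)$ — together with the case analysis hiding inside (i) and (ii): one must distinguish whether $\{u,x,y\}$ is represented as an interval triangle or as a non-Helly triangle in $\rho'$. In the first case the containment $\rho'(u)\subset\rho'(x)\cup\rho'(y)$ is what forces $z$ into $N[x]$ and makes $\{u,x,z\}$ an interval triangle in $\rho'$; in the second case $\{u,x,y\}$ is already itself a triangle in $\ovtit(G)\cap\ovtnht(G)$ and one has to argue slightly differently which auxiliary vertices to produce and how the arcs of $z,\tilde z$ are positioned. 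Everything else is routine given the groundwork already laid for Theorem~\ref{thm:uca_char}.
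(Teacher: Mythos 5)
Your backward direction is fine and matches the paper. For the forward direction, however, what you have written is a plan rather than a proof: the three claims you label (i), (ii), (iii) --- that $\{u,x,z\}$ and $\{x,y,\tilde z\}$ each lie in $\ovtit(G)\cap\ovtnht(G)$ and that $z$ and $\tilde z$ are adjacent --- are exactly the content of the theorem, and you defer them as ``the main obstacle'' instead of establishing them. They are provable in your ``case 1'' (where $\{u,x,y\}$ is an interval triangle in $\rho'$ with $u$ in the middle), but only via a nontrivial chain: one must first argue from $\rho'$ that $\tilde z$ \emph{overlaps} (rather than is contained in) both $x$ and $y$, since otherwise $\rho(\tilde z)$ could sit inside $\rho(x)\cap\rho(y)$ in $\rho$, in which case $\tilde z$ would miss the outer endpoint of $\rho(x)$ and the edge $\{z,\tilde z\}$ of your alleged $4$-cycle would disappear. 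None of this bookkeeping is ``routine given the groundwork for Theorem~\ref{thm:uca_char}''; it has to be done.

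More seriously, your second case --- $\{u,x,y\}$ represented as a non-Helly triangle in $\rho'$ --- is not a matter of arguing ``slightly differently'': there your construction can genuinely fail. Without $\rho'(u)\subset\rho'(x)\cup\rho'(y)$ you lose the inclusion $N[u]\setminus N[y]\subseteq N[x]$, so the vertex $z\in N[u]\setminus N[y]$ need not be adjacent to $x$ at all (e.g.\ $\rho(z)$ may lie entirely in $\rho(u)\setminus(\rho(x)\cup\rho(y))$), and $\{u,x,z\}$ is then not even a triangle. The paper sidesteps this fork entirely by first invoking Theorem~\ref{thm:uca_char} to obtain a single triangle $T=\{u,v,w\}\in\ovtit(G)\cap\ovtnht(G)$, deducing $N[u]\cup N[w]=V(G)$ from an interval representation of $T$, and then extracting the $4$-cycle $(u,w,w',u')$ from the \emph{violated circle-cover condition} in the definition of the neighborhood matrix (a $u'\in N[u]\setminus N[w]$ with $N[u']\not\subseteq N[u]$, and $w'\in N[u']\setminus N[u]$); the apex of the witness is then the original $v$, and the second triangle $\{u',v,w'\}$ is forced to swap representations with $T$. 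I would recommend restructuring your argument along those lines: start from $T\in\ovtit(G)\cap\ovtnht(G)$ rather than from the raw definition of non-uniformity, which collapses your case distinction and hands you the covering property $N[u]\cup N[w]=V(G)$ that your construction currently lacks.
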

\begin{proof}
    ``$\Rightarrow$'': Let $G$ be a non-uniform CA graph. Due to Theorem \ref{thm:uca_char} there exists an $\overlap$-triangle $T$ of $G$ with $T \in \ovtit(G) \cap \ovtnht(G)$. Let $T= \{u,v,w\}$ and $\rho_{\text{I}} \in \mathcal{N}(G)$ such that $v$ is in-between $u$ and $w$, i.e.~$\rho_{\text{I}}(v) \subset \rho_{\text{I}}(u) \cup \rho_{\text{I}}(w)$. First, we show that there exists an induced 4-cycle $C = (u,w,w',u')$ in $G$. 
       
    From the non-Helly triangle representation of $T$ it follows that $N[u] \cup N[v] \cup N[w] = V(G)$. Since $v$ is in-between $u$ and $w$ this means $N[u] \cup N[w] = V(G)$. It holds that $u$ and $w$ overlap. Therefore one of the conditions in the definition of the neighborhood matrix for $u$ and $w$ to form a circle cover must be violated. Let us assume w.l.o.g.~that the violated condition is that there exists a $u' \in N[u] \setminus N[w]$ such that $N[u'] \not\subseteq N[u]$. This means $u'$ must overlap with $u$ and there exists a $w' \in N[u'] \setminus N[u]$. Since $w' \notin N[u]$ it follows from   $N[u] \cup N[w] = V(G)$ that $w' \in N[w]$ and because $w$ is disjoint from $u'$, and because $w'$ intersects with both $u'$ and $w$ it follows that $w'$ overlaps with $u'$ and $w$. Therefore $C = (u,w,w',u')$ is an induced 4-cycle in $G$. 
    
    It remains to show that $\{u',v,w'\}$ is an $\overlap$-triangle and that it is in both $\ovtit(G)$ and $\ovtnht(G)$. Consider the representation $\rho_{\text{I}}$ from before. Assume for the sake of contradiction that $v$ does not overlap with $u'$. Then due to $\rho_{\text{I}}$ it must be the case that $u'$ is disjoint from $v$ and thus $u' \in N_T(u)$. However, due to fact that $T$ is representable as non-Helly triangle this would imply that $u'$ is contained by $u$, which is not the case. Therefore $u'$ overlaps with $v$ as the other intersections types are out of question. For the same reason $w'$ overlaps with $v$ and hence $T'= \{u',v,w'\}$ is an $\overlap$-triangle. Now, it can be verified that in every representation of $G$ where $T$ is a non-Helly triangle it follows that $T'$ must be an interval triangle and vice versa. This concludes that $T'$ is in $\ovtit(G) \cap \ovtnht(G)$.
    
    ``$\Leftarrow$'': Follows directly from Theorem \ref{thm:uca_char}.
\end{proof}

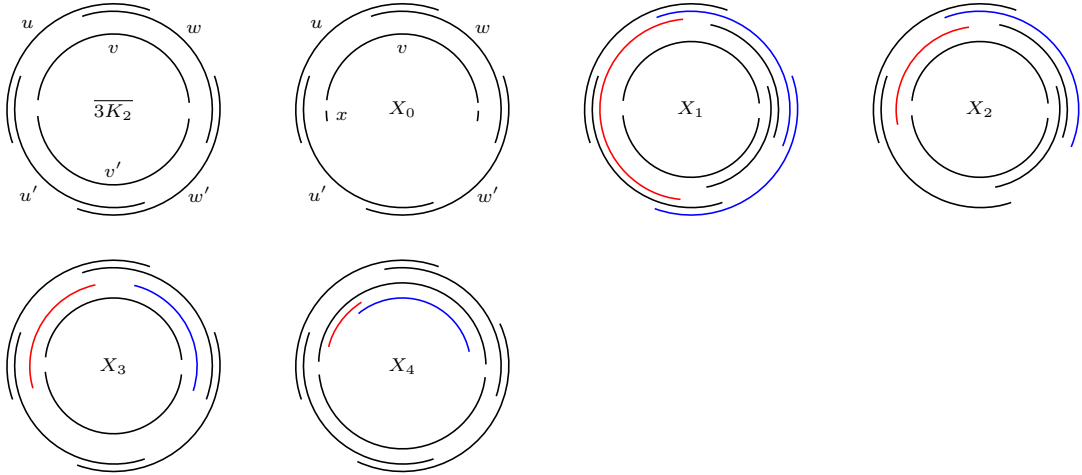
\begin{figure}
	\centering
	\begin{tikzpicture}[shorten >=1pt,auto,node distance=1.2cm,
  main node/.style={circle,draw}]

\newcommand*{\xoff}{0}%
\newcommand*{\xoffa}{3.8}%
\newcommand*{\xoffb}{2*3.8}%
\newcommand*{\xoffc}{3*3.8}%

\newcommand*{\yoff}{0}%
\newcommand*{\yoffa}{-3.4}%


\arclabel(m1:$u$:\xoff:\yoff:1.6:135);%
\carc(\xoff:\yoff:135:130:1.4);
\arclabel(m1:$w'$:\xoff:\yoff:1.6:315);%
\carc(\xoff:\yoff:315:130:1.4);

\arclabel(m1:$w$:\xoff:\yoff:1.5:45);%
\carc(\xoff:\yoff:45:130:1.3);

\arclabel(m1:$u'$:\xoff:\yoff:1.57:225);%
\carc(\xoff:\yoff:225:130:1.3);

\arclabel(m1:$v$:\xoff:\yoff:0.8:90);%
\carc(\xoff:\yoff:90:170:1);

\arclabel(m1:$v'$:\xoff:\yoff:0.8:270);%
\carc(\xoff:\yoff:270:170:1);
\arclabel(m1:$\overline{3K_2}$:\xoff:\yoff:0:0);%

\arclabel(m1:$u$:\xoffa:\yoff:1.6:135);%
\arclabel(m1:$w'$:\xoffa:\yoff:1.6:315);%
\arclabel(m1:$w$:\xoffa:\yoff:1.5:45);%
\arclabel(m1:$u'$:\xoffa:\yoff:1.57:225);%
\arclabel(m1:$v$:\xoffa:\yoff:0.8:90);%
\carc(\xoffa:\yoff:135:130:1.4);
\carc(\xoffa:\yoff:315:130:1.4);
\carc(\xoffa:\yoff:45:130:1.3);
\carc(\xoffa:\yoff:225:130:1.3);
\carc(\xoffa:\yoff:90:170:1);
\carc(\xoffa:\yoff:186:10:1);
\arclabel(m1:$x$:\xoffa:\yoff:0.8:186);%
\carc(\xoffa:\yoff:186+170:10:1);
\arclabel(m2:$X_0$:\xoffa:\yoff:0:0);%

\carc(\xoffb:\yoff:135:130:1.4);  
\carc(\xoffb:\yoff:225:130:1.3);  
\carc(\xoffb:\yoff:45-15:98:1.15);  
\carc(\xoffb:\yoff:315+15:98:1.05); 
\carc(\xoffb:\yoff:90:170:0.9);
\carc(\xoffb:\yoff:270:170:0.9);

\carcblue(\xoffb:\yoff:315:130:1.4);  
\carcblue(\xoffb:\yoff:45:134:1.3);   
\carcred(\xoffb:\yoff:180:170:1.2);

\arclabel(m3:$X_1$:\xoffb:\yoff:0:0);%

\carc(\xoffc:\yoff:135:130:1.4);  
\carc(\xoffc:\yoff:225:130:1.3);  
\carc(\xoffc:\yoff:45-15:98:1.15);  
\carc(\xoffc:\yoff:315+15:98:1.05); 
\carc(\xoffc:\yoff:90:170:0.9);
\carc(\xoffc:\yoff:270:170:0.9);

\carcblue(\xoffc:\yoff:45:135:1.3);   
\carcred(\xoffc:\yoff:145:95:1.1);

\arclabel(m3:$X_2$:\xoffc:\yoff:0:0);%

\carc(\xoff:\yoffa:135:130:1.4);
\carc(\xoff:\yoffa:315:130:1.4);
\carc(\xoff:\yoffa:45:130:1.3);
\carc(\xoff:\yoffa:225:130:1.3);
\carc(\xoff:\yoffa:90:170:0.9);
\carc(\xoff:\yoffa:270:170:0.9);

\carcred(\xoff:\yoffa:150:95:1.1);
\carcblue(\xoff:\yoffa:30:95:1.1);

\arclabel(m1:$X_3$:\xoff:\yoffa:0:0);%

%
%

\carc(\xoffa:\yoffa:135:130:1.4);
\carc(\xoffa:\yoffa:315:140:1.4);
\carc(\xoffa:\yoffa:42:120:1.3);
\carc(\xoffa:\yoffa:225:130:1.3);
\carc(\xoffa:\yoffa:90:178:1.1);
\carc(\xoffa:\yoffa:270:168:1.1);

\carcred(\xoffa:\yoffa:145:45:1.0);
\carcblue(\xoffa:\yoffa:72:120:0.9);

\arclabel(m1:$X_4$:\xoffa:\yoffa:0:0);%

%
%
%
%

\end{tikzpicture}
	\caption{Examples of non-uniform CA graphs and one uniform CA graph $X_4$}
	\label{fig:ex_nuca}                
\end{figure}  

In Figure~\ref{fig:ex_nuca} five non-uniform CA graphs and one uniform CA graph ($X_4$) are given by their CA models. We explain how to verify this claim. First, we have to check that every CA model is normalized. This means the graphs which are induced by these models must be twin-free and without a universal vertex. Additionally, the intersection types of the arcs must match the intersection types in the induced graph (or more precisely its neighborhood matrix). A quick way to determine whether two overlapping arcs also overlap in the graph is to check if they jointly occur in an induced $n$-cycle for some $n \geq 4$.  

To see that the first five CA graphs are non-uniform we have to find an $\overlap$-triangle that is representable as both interval and non-Helly triangle. In the case of $\overline{3K_2}$ this $\overlap$-triangle can be $\{u,v,w\}$. In the given representation $\{u,v,w\}$ is represented as interval triangle. Observe that $v$ and $v'$ are in the same orbit and therefore the labels $v$ and $v'$ can be swapped in the representation. After swapping $v$ and $v'$ the $\overlap$-triangle $\{u,v,w\}$ is represented as non-Helly triangle. For the graph $X_0$ we can also choose the $\overlap$-triangle $\{u,v,w\}$. In this case there is an automorphism which swaps $u$ with $u'$ and $w$ with $w'$ and has the other vertices as fix-points. After changing the labels in the representation according to this automorphism it holds that $\{u,v,w\}$ is represented as non-Helly triangle. We remark that $\overline{3K_2}$ and $X_0$ are minimal in the sense that no induced subgraph of them is a non-uniform CA graph. Next, let us consider the graphs $X_1$ to $X_3$. Observe that the black arcs in each of these graphs form an induced $\overline{3K_2}$ subgraph. We assume that the black arcs are labeled with $u,u',v,v',w,w'$ in the same way that the representation of $\overline{3K_2}$ is labeled. It holds that $v$ and $v'$ are in the same orbit in all of these four graphs because they have the same open neighborhood. Therefore  $\{u,v,w\}$ is representable as both interval and non-Helly triangle due to the same argument that we made for $\overline{3K_2}$.

To show that $X_4$ is uniform we argue that it has a unique normalized representation, i.e.~$|\mathcal{N}(X_4)| = 1$.
Observe that this graph has a unique CA model. Additionally, it has no non-trivial automorphism (it is rigid). Therefore $X_4$ has a unique CA representation.

\begin{fact}
	Every non-uniform CA graph contains $\overline{3K_2}$ or $X_0$ as induced subgraph.
\end{fact}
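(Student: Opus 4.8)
The plan is to apply Theorem~\ref{thm:nuwc} to expose a common ``core'' inside every non-uniform CA graph and then branch on a single structural dichotomy. Let $G$ be non-uniform. By Theorem~\ref{thm:nuwc} there is a witness $(C,v)$ with $C=(u,w,w',u')$ an induced $4$-cycle and $T=\{u,v,w\}$, $T'=\{u',v,w'\}$ both in $\ovtit(G)\cap\ovtnht(G)$. Since $C$ is induced and $v$ is adjacent, via the $\overlap$-triangles $T$ and $T'$, to each of $u,w,w',u'$ and to none of them along a non-edge of $C$, the induced subgraph $G[\{u,v,w,w',u'\}]$ is the wheel $W_4$ (a $4$-cycle plus a hub). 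As $v$ would be universal in $W_4$ while $G$ has no universal vertex, $G$ must have further vertices, and it is the placement of those that decides between $\overline{3K_2}$ and $X_0$.

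Next I would fix a representation $\rho\in\mathcal{N}(G)$ in which $T$ covers the whole circle (possible since $T\in\ovtnht(G)$). As in the proof of Theorem~\ref{thm:nuwc}, $T'$ is then an interval triangle in $\rho$; reading off the cyclic order of the endpoints of $\rho(u),\rho(v),\rho(w),\rho(u'),\rho(w')$ gives $\rho(v)\subseteq\rho(u')\cup\rho(w')$, and covering gives $V(G)=N[u]\cup N[v]\cup N[w]$. Running the same argument with $T'$ covering the circle yields $V(G)=N[u']\cup N[v]\cup N[w']$. Hence every vertex outside $N[v]$ is adjacent to at least one of $u,w$ and to at least one of $u',w'$; and the ``a vertex adjacent to $u$ but to neither of the other two members of a non-Helly triangle is contained in $u$'' argument from Theorem~\ref{thm:nuwc} constrains, via $\rho$, where each such vertex can sit.

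Now I would branch on whether some vertex $v'$ is non-adjacent to $v$ yet adjacent to all of $u,w,u',w'$. If so, then since $C$ is induced, $v\sim\{u,w,w',u'\}$, $v'\sim\{u,w,w',u'\}$ and $v\not\sim v'$, the subgraph $G[\{u,v,w,w',u',v'\}]$ has exactly the $12$ edges of $\overline{3K_2}$ (its three non-edges $\{u,w'\},\{w,u'\},\{v,v'\}$ forming a perfect matching), so $G$ contains an induced $\overline{3K_2}$. If no such vertex exists, pick $z\notin N[v]$ (which exists, as $v$ is not universal); by the two covering identities $z$ meets one of $u,w$ and one of $u',w'$, and by assumption it cannot meet all of $u,w,u',w'$. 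Using the position constraints in $\rho$ I would rule out the ``diagonal'' possibilities ($z$ adjacent to the non-adjacent pair $u,w'$, or to $w,u'$) and the possibility that $z$ has three neighbours among $u,w,u',w'$, leaving $N[z]\cap\{u,w,u',w'\}$ equal to one of the edges $\{u,u'\}$, $\{w,w'\}$ of $C$; say $z\sim u,u'$ and $z\not\sim w,w',v$. A symmetric argument --- now forced because $T'$ also covers the circle in its representation --- produces a vertex $z'$ with $z'\sim w,w'$ and $z'\not\sim u,u',v$, and the positions in $\rho$ also give $z\not\sim z'$. Then $G[\{u,v,w,w',u',z,z'\}]$ has exactly the $12$ edges of $X_0$, so $G$ contains an induced $X_0$.

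The step I expect to be the main obstacle is the second case: showing that once no vertex is adjacent to the whole $4$-cycle, the geometry together with the two covering identities \emph{force} two pendant-type vertices --- one attached to the edge $uu'$, one to $ww'$ --- with no extra incidences (in particular $z\not\sim z'$, $z\not\sim v$, $z\not\sim w,w'$, and no stray third neighbour among $C$). This is elementary but requires a careful bookkeeping of which arc of $\rho$ covers which point between consecutive endpoints of the five core arcs, together with the containment argument imported from Theorem~\ref{thm:nuwc}; the remaining verifications --- that the six chosen vertices in the first case induce $\overline{3K_2}$ and the seven in the second induce $X_0$ --- are then immediate vertex/edge counts.
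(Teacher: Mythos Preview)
Your plan follows the same route as the paper: extract a witness $(C,v)$ via Theorem~\ref{thm:nuwc}, branch on whether some vertex outside $N[v]$ is adjacent to all of $C$, and output $\overline{3K_2}$ or $X_0$ accordingly. The $\overline{3K_2}$ branch is fine as written.

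The $X_0$ branch has a real gap. You pick one $z\notin N[v]$, argue that its $C$-neighbourhood is one of the two edges $\{u,u'\}$ or $\{w,w'\}$, say the former, and then invoke a ``symmetric argument --- now forced because $T'$ also covers the circle'' to obtain $z'$ on the $\{w,w'\}$ side. But nothing in that symmetry forces both edges to be hit: a priori every vertex of $V(G)\setminus N[v]$ could sit on the $\{u,u'\}$ side, and switching to the representation where $T'$ covers does not change the set $V(G)\setminus N[v]$. The paper avoids this by not starting from an arbitrary non-neighbour of $v$. It uses that $u\:\overlap\:v$ and $w\:\overlap\:v$, so $N[u]\setminus N[v]$ and $N[w]\setminus N[v]$ are \emph{separately} nonempty; a vertex in the first set is adjacent to $u$ and (after your exclusions, together with the case assumption $N_C(C)\subseteq N[v]$) must land in $N_C(u,u')$, while a vertex in the second lands in $N_C(w,w')$. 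That is the missing idea.

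For the exclusions themselves, your plan to read them off positions in the non-Helly representation $\rho$ works and is essentially what the paper does (it phrases the three-neighbour cases as $N_C(C\setminus\{x\})\subseteq N[v]$ via the interval representability of $T,T'$ and then combines with the covering identities). The diagonal patterns and $z\not\sim z'$ fall out immediately from $\rho$: any vertex disjoint from $v$ and $w$ has its arc inside $\rho(u)\setminus(\rho(v)\cup\rho(w))$, hence cannot meet $\rho(w')$ (which is disjoint from $\rho(u)$), and the two regions $\rho(u)\setminus(\rho(v)\cup\rho(w))$ and $\rho(w)\setminus(\rho(v)\cup\rho(u))$ are disjoint. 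So the ``careful bookkeeping'' you anticipate is short; the substantive missing step is the existence of pendants on \emph{both} edges.
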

\begin{proof}
    Let $G$ be a non-uniform CA graph. Due to Theorem \ref{thm:nuwc} there exists a witness $(C,v)$ of $G$ with $C=(u,w,w',u')$.
    Since $G$ does not contain a universal vertex it holds that $V(G) \setminus N[v]$ is non-empty. Due to the fact that $\{u,v,w \}$ and $\{u',v,w' \}$ can be represented as interval triangles it follows that $N_C(C \setminus \{x\}) \subseteq N[v]$ for all $x \in C$. Therefore $V(G) \setminus N[v] \subseteq N_C(C) \cup N_C(u,u') \cup N_C(w,w')$. Suppose there is a $v' \in N_C(C) \setminus N[v]$. Then the vertices of $C$ along with $v$ and $v'$ form an induced $\overline{3K_2}$-subgraph of $G$. Assume that this is not the case, i.e.~$N_C(C) \subseteq N[v]$. Since $u$ and $v$ overlap it must hold that $N[u] \setminus N[v] \neq \emptyset$. The only vertices that can be adjacent to $N[u]$ but not to $N[v]$ must be in $N_C(u,u')$ since $N_C(C) \subseteq N[v]$. Therefore there exists a vertex $x \in N_C(u,u')$ that is not adjacent to $v$. For the same reason there must be a vertex $y \in N_C(w,w')$ not adjacent to $v$ because $N[w] \setminus N[v] \neq \emptyset$. The vertices of $C$ along with $v$, $x$ and $y$ form an induced $X_0$-subgraph.
\end{proof}

\begin{definition}[Restricted CA Matrix]
    Let $\lambda$ be a CA matrix. We say $\lambda$ is a restricted CA matrix if it contains an induced 4-cycle $C=(u,w,w',u')$ called witness cycle such that:
    \begin{enumerate}
        \item $N_C(u,w)$, $N_C(u',w')$ and $N_C(x)$ are empty for every $x \in C$
        \item For all $x \in N_C(C)$ it holds that $x$ overlaps with all vertices in $C$ 
    \end{enumerate} 
    \label{def:rca}
\end{definition}

Observe that the intersection matrix of every CA model that is shown in Figure~\ref*{fig:ex_nuca} is a restricted CA matrix.

\begin{table}
    \begin{tabular}{ l | l l l l | l l l l | l l | l l | l l | l l | l }
        &  \multicolumn{4}{c|}{1
        } & \multicolumn{4}{c|}{2
        } & \multicolumn{2}{c|}{3
        }  & \multicolumn{2}{c|}{4
        }  & \multicolumn{2}{c|}{5
        }  & \multicolumn{2}{c|}{6
        } & \multicolumn{1}{c}{7
        }  \\
        \hline
        $u$&$\contains$&$\overlap$&$\contains$&$\overlap$&\cellcolor{gray!25}$\disjoint$&\cellcolor{gray!25}$\disjoint$&\cellcolor{gray!25}$\disjoint$&\cellcolor{gray!25}$\disjoint$&$\overlap$&$\overlap$&$\overlap$&$\contains$&$\overlap$&$\overlap$&\cellcolor{gray!25}$\disjoint$&\cellcolor{gray!25}$\disjoint$&$\overlap$\\
        $w$&\cellcolor{gray!25}$\disjoint$&\cellcolor{gray!25}$\disjoint$&\cellcolor{gray!25}$\disjoint$&\cellcolor{gray!25}$\disjoint$&$\contains$&$\overlap$&$\contains$&$\overlap$&\cellcolor{gray!25}$\disjoint$&\cellcolor{gray!25}$\disjoint$&$\overlap$&$\overlap$&$\overlap$&$\contains$&$\overlap$&$\overlap$&$\overlap$\\
        $w'$&\cellcolor{gray!25}$\disjoint$&\cellcolor{gray!25}$\disjoint$&\cellcolor{gray!25}$\disjoint$&\cellcolor{gray!25}$\disjoint$&$\contains$&$\contains$&$\overlap$&$\overlap$&$\overlap$&$\overlap$&\cellcolor{gray!25}$\disjoint$&\cellcolor{gray!25}$\disjoint$&$\overlap$&$\overlap$&$\overlap$&$\contains$&$\overlap$\\
        $u'$&$\contains$&$\contains$&$\overlap$&$\overlap$&\cellcolor{gray!25}$\disjoint$&\cellcolor{gray!25}$\disjoint$&\cellcolor{gray!25}$\disjoint$&\cellcolor{gray!25}$\disjoint$&$\overlap$&$\contains$&$\overlap$&$\overlap$&\cellcolor{gray!25}$\disjoint$&\cellcolor{gray!25}$\disjoint$&$\overlap$&$\overlap$&$\overlap$
    \end{tabular}   
    \caption{Intersection types of restricted CA matrices with witness cycle $(u,w,w',u)$}
    \label{tab:rca}
\end{table}

\begin{fact}
    Given an intersection matrix $\lambda$, vertices $x,y_1,\dots,y_k$ of $\lambda$ and intersection types $\alpha_1,\dots,\alpha_k$, we say $x$ is an $(\alpha_1,\dots,\alpha_k)$-neighbor of $(y_1,\dots,y_k)$ if $\lambda_{x,y_i} = \alpha_i$ for all $i \in [k]$.        
    A CA matrix $\lambda$ is restricted iff $\lambda$ contains an induced 4-cycle $C=(u,w,w',u')$  such that for all vertices $x \in V(\lambda) \setminus C$ there exists a column $\overline{\alpha}$ in Table \ref{tab:rca} such that $x$ is a $\overline{\alpha}$-neighbor of $C$.  
    \label{fact:rcait}  
\end{fact}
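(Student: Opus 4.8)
The plan is to prove both implications by analysing how a vertex $x\notin C$ can sit relative to an induced $4$-cycle $C=(u,w,w',u')$ in a CA matrix, and two geometric observations do the heavy lifting. First, every edge of an induced $4$-cycle in a CA matrix has type $\overlap$: if, say, $\mu_{u,w}\in\{\contains,\contained\}$ then one of $\rho(u),\rho(w)$ contains the other, so $w'$, being adjacent to $w$, would also meet $u$, contradicting $u\not\between w'$; and $u\:\circlecover\:w$ would force $\rho(w')\subseteq\rho(w)$ by the same kind of argument, again a contradiction. Second, the four arcs of $C$ cover the whole circle: if some point $p$ were uncovered, cutting the circle at $p$ would turn $\rho(u),\rho(w),\rho(w'),\rho(u')$ into four intervals on the line inducing a $C_4$, which is impossible because $C_4$ is not an interval graph. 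Hence every $x\in V(\lambda)\setminus C$ is adjacent to at least one vertex of $C$; moreover a short ``bridging'' argument (an arc meeting the two disjoint arcs $\rho(u)$ and $\rho(w')$ must contain one of the two arcs of the circle between them entirely, and those two arcs are covered by $\rho(w)$ and $\rho(u')$ respectively) shows that no $x$ is adjacent to exactly $\{u,w'\}$ or exactly $\{w,u'\}$. So the only possible adjacency patterns of $x$ with $C$ are: one vertex; one of the four edges; one of the four triples $C\setminus\{y\}$; or all of $C$.

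The direction ``$\Leftarrow$'' is then immediate by inspection of Table~\ref{tab:rca}. No column of the table has $x$ adjacent to exactly one vertex of $C$, and none has $x$ adjacent to exactly $\{u,w\}$ or exactly $\{u',w'\}$; so if every $x\notin C$ matches some column, the sets $N_C(y)$ for $y\in C$ and the sets $N_C(u,w)$, $N_C(u',w')$ are empty, which is condition~1 of Definition~\ref{def:rca}. The only column in which $x$ is adjacent to all four vertices of $C$ is the all-$\overlap$ column, so every vertex of $N_C(C)$ overlaps all of $C$, which is condition~2. Hence $\lambda$ is a restricted CA matrix with witness cycle $C$.

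For ``$\Rightarrow$'', assume $\lambda$ is restricted with witness cycle $C$. Condition~1 of Definition~\ref{def:rca} together with the observations above rules out the singleton patterns, the diagonal patterns and the patterns $\{u,w\}$, $\{u',w'\}$; so every $x\notin C$ is adjacent exactly to one of the two ``sides'' $\{u,u'\}$, $\{w,w'\}$, to one of the four triples, or to all of $C$. It remains to pin down, in each case, the intersection type of $x$ with each vertex of $C$ and to check that the surviving combinations are exactly the columns of Table~\ref{tab:rca}. This is a finite case check resting on a few elementary arc facts: if $\rho(x)\supseteq\rho(y)$ then $x$ inherits all neighbours of $y$; if $\rho(x)\subseteq\rho(y)$ then $x$ is non-adjacent to every non-neighbour of $y$; and $x\:\circlecover\:y$ forces $\rho(z)\subseteq\rho(x)\cup\rho(y)$ for all $z$, which (together with the fact that the edges of $C$ are overlaps) quickly contradicts either the inducedness of $C$ or an assumed (non-)adjacency of $x$. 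Applied to a vertex adjacent to a side or a triple, these facts eliminate $\disjoint$ on the non-neighbours (trivially), $\circlecover$ on every vertex of $C$, and $\contains/\contained$ in the one direction that would create an impossible (non-)adjacency, leaving exactly the tabulated options; for $x\in N_C(C)$ condition~2 of Definition~\ref{def:rca} directly gives the all-$\overlap$ column.

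The hard part is this finite case analysis in the ``$\Rightarrow$'' direction: for each adjacency pattern one must decide, vertex by vertex, which of $\contains$, $\contained$, $\overlap$, $\circlecover$ survives, and confirm that the surviving tuples coincide with the columns of Table~\ref{tab:rca}; this is routine but bookkeeping-heavy, and one has to stay consistent about the orientation of the $\contains$/$\contained$ entries. The two global facts --- all edges of $C$ are overlaps, and the arcs of $C$ cover the circle (so $N_C(\emptyset)$ and the diagonal sets are empty) --- are what make the enumeration finite and constitute the conceptual core of the argument.
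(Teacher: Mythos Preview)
Your ``$\Leftarrow$'' direction and the global geometric observations (edges of $C$ have type $\overlap$, the four arcs cover the circle, no diagonal adjacencies) are fine and match what the paper uses. The gap is in the ``$\Rightarrow$'' direction, in the triple case.

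Take $x\in N_C(C\setminus\{w\})$, i.e.\ $x$ adjacent to $u,w',u'$ and disjoint from $w$. Your elimination rules correctly force $\lambda_{x,u}=\lambda_{x,w'}=\overlap$ (containment either way contradicts an adjacency or non-adjacency with $w$ or $w'$, and $\circlecover$ forces $\rho(w)\subset\rho(u)$ or similar). But for the ``opposite'' vertex $u'$ your rule ``eliminate $\contains/\contained$ in the one direction that would create an impossible (non-)adjacency'' does not fire: the $C$-neighbours of $u'$ are exactly $u$ and $w'$, which $x$ is adjacent to, and the $C$-non-neighbour of $u'$ is $w$, which $x$ is non-adjacent to. So neither $x\:\contains\:u'$ nor $x\:\contained\:u'$ creates any contradiction, and both are geometrically realizable. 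Table~\ref{tab:rca} lists only one of these two containment directions, so your case analysis does \emph{not} leave exactly the tabulated options.

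The paper closes this gap not by geometry but by changing the witness cycle: if $x\in N_C(C\setminus\{w\})$ and $x$ contains $u'$, then $(u,w,w',x)$ is again a witness cycle of $\lambda$ (it is an induced $4$-cycle and one checks the two conditions of Definition~\ref{def:rca} transfer). Iterating this replacement --- each step strictly enlarges one arc, so it terminates --- one arrives at a witness cycle $C$ for which no such $x$ exists, and then the finite check goes through. The statement only asserts the \emph{existence} of such a $C$, and indeed the claim is false for an arbitrary witness cycle; your argument, which tries to prove it for every witness cycle, cannot succeed without this replacement step.
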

\begin{proof}
    We use the numbers in the table headline to refer to the different columns. For example, 2.3 refers to the third column from left in the second part of the table: $(\disjoint,\contains,\overlap,\disjoint)$.
    
    ``$\Rightarrow$'': Let $\lambda$ be a restricted CA matrix with witness cycle $C=(u,w,w',u')$. We need to show for every $x \in V(\lambda) \setminus C$ there exists a column $\overline{\alpha}$ in Table \ref{tab:rca} such that $x$ is a $\overline{\alpha}$-neighbor of $C$. 
    Due to the definition of restricted CA matrices it must hold that $x$ is in (exactly) one of the following seven sets: $N_C(C), N_C(u,u'), N_C(w,w')$ or $N_C(C \setminus \{z\})$ for a $z \in C$. If $x$ is in $N_C(C)$ then $x$ overlaps with every vertex of $C$ by definition. This corresponds to the last column 7.1 of the table. If $x \in N_C(u,u')$ then $x$ is disjoint from $w$ and $w'$. In that case $x$ is an $\overline{\alpha}$-neighbor of $C$ where $\overline{\alpha}$ must be one of the four columns in part one of the table. For the same reason if $x \in N_C(w,w')$ then it is an $\overline{\alpha}$-neighbor of $C$ where $\overline{\alpha}$ corresponds to one of the two columns in the second part of the table. If $x$ is in $N_C(C \setminus \{w\})$ then $x$ is disjoint from $w$ and $x$ overlaps with both $u$ and $w'$. The intersection type between $x$ and $u'$ can be one of the following: $x$ overlaps with $u$ or $x$ is contained by $u$ or $x$ contains $u$. The first two cases are covered by the third part of the table. However, if $x$ contains $u$ then there exists no corresponding column in the table since it does not have any $\contained$-entries. This can be resolved by using the following observation: if $x$ is in $N_C(C \setminus \{w\})$ and contains $u'$ then $(u,w,w',x)$ is a witness cycle of $\lambda$ as well.  As a consequence we can assume without loss of generality that a witness cycle $C$ of $\lambda$ can be chosen such that there exists no $x \in N_C(C \setminus \{w\})$ which contains $u'$. The same argument applies to the remaining three cases $x \in N_C(C \setminus \{z\})$ with $z \in \{u,u',w'\}$.      
    
    ``$\Leftarrow$'': clear.
\end{proof}

In the remainder of this section we prove that the canonical representation problem for CA graphs is logspace-reducible to the canonical representation problem for vertex-colored restricted CA matrices. The proof outline looks as follows. First, we define a subset of uniform CA graphs, namely $\Delta$-uniform CA graphs, for which the globally invariant non-Helly triangle representability problem can be solved in logspace. Therefore the canonical representation problem for CA graphs is logspace-reducible to that of CA graphs which are not $\Delta$-uniform. This reduction follows from a slightly modified version of Lemma~\ref{lem:ginhtp}. Then we show that the neighborhood matrix of a non-$\Delta$-uniform CA graph can be converted into a vertex-colored restricted CA matrix by flipping `long' arcs. By coloring the flipped arcs the isomorphism type is preserved. 

\begin{definition}
    For a graph $G$ we define $\Delta_G$ as the following set of $\overlap$-triangles (see Definition~\ref{def:ovtriangle}).
    An $\overlap$-triangle $T$ of $G$ is in $\Delta_G$ if there exist three pairwise different vertices $u,v,w$ in $T$ such that the  following holds: 
    \begin{enumerate}
        \item $N[u] \cup N[v] \cup N[w] = V(G)$
        \item For all $z \in T$ it holds that if a vertex $x \in N_T(z)$ then $x \: \contained \: z$
        \item If there exist $u',w'$ such that $(u,w,w',u')$ is an induced 4-cycle and $v$ overlaps with $u'$ and $w'$ then $N[v] \subseteq N[u'] \cup N[w']$ 
    \end{enumerate}
    \label{def:deltag}
\end{definition}

\begin{definition}
    A CA graph $G$ is $\Delta$-uniform if $\Delta_G \cap \ovtit(G) = \emptyset$.
\end{definition}

Let us explain the intuition behind these two definitions. The set $\Delta_G$ approximates  $\ovtnht(G)$. More precisely, whenever an $\overlap$-triangle $T = \{u,v,w\}$ is in $\ovtnht(G)$ this implies that $T$ satisfies certain constraints such as for example $N[u] \cup N[v] \cup N[w] = V(G)$. The set $\Delta_G$ consists of three such constraints. Therefore if an $\overlap$-triangle is representable as non-Helly triangle it must also be in $\Delta_G$, i.e.~$\ovtnht(G) \subseteq \Delta_G$. 
The $\Delta$-uniform CA graphs can be alternatively seen as the subset of uniform CA graphs where the constraints of $\Delta_G$ suffice to characterize $\ovtnht(G)$, i.e.~$\Delta_G = \ovtnht(G)$.   

\begin{lemma}
    For every graph $G$ it holds that $\ovtnht(G) \subseteq \Delta_G$. If $G$ is a $\Delta$-uniform CA graph then $\ovtnht(G) = \Delta_G$.
    \label{lem:nht_deltag}
\end{lemma}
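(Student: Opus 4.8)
The plan is to prove the inclusion $\ovtnht(G)\subseteq\Delta_G$ for an arbitrary graph $G$ first, and then obtain the equality for $\Delta$-uniform $G$ with little extra work. For the equality part, recall that in any single representation $\rho\in\mathcal{N}(G)$ the three arcs of an $\overlap$-triangle (Definition~\ref{def:ovtriangle}) either jointly cover the circle or they do not; since $\mathcal{N}(G)\neq\emptyset$ by Hsu's corollary~\cite{hsu} (our graphs being twin-free and without a universal vertex), every $\overlap$-triangle of $G$ lies in $\ovtnht(G)\cup\ovtit(G)$. Now if $G$ is $\Delta$-uniform then $\Delta_G\cap\ovtit(G)=\emptyset$, so any $T\in\Delta_G$ --- which is an $\overlap$-triangle by Definition~\ref{def:deltag} --- lies in $\ovtnht(G)\cup\ovtit(G)$ but not in $\ovtit(G)$, hence in $\ovtnht(G)$; combined with the inclusion this yields $\Delta_G=\ovtnht(G)$.

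For the inclusion I would fix $T\in\ovtnht(G)$ and a representation $\rho\in\mathcal{N}(G)$ in which the three arcs of $T$ cover the whole circle. Since these arcs pairwise overlap and jointly cover the circle, $\rho[T]$ is the standard non-Helly-triangle arrangement: writing $T=\{u,v,w\}$ in an arbitrary order, the six endpoints split the circle cyclically into regions $R_u,R_{uv},R_v,R_{vw},R_w,R_{wu}$ with $\rho(u)=R_{wu}\cup R_u\cup R_{uv}$, $\rho(v)=R_{uv}\cup R_v\cup R_{vw}$ and $\rho(w)=R_{vw}\cup R_w\cup R_{wu}$. I would then verify the three conditions of Definition~\ref{def:deltag} for this labelling; conditions~1 and~2 do not depend on how $T$ is labelled. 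Condition~1 holds because every vertex's arc contains some point of the circle and that point lies in one of $\rho(u),\rho(v),\rho(w)$. For condition~2, if $x\in N_T(z)$ then $\rho(x)$ is disjoint from the two arcs of $T$ other than $\rho(z)$, hence is contained in their common complement, which is exactly the region covered by $\rho(z)$ alone; so $\rho(x)\subseteq\rho(z)$, whence $N[x]\subseteq N[z]$, and twin-freeness makes the inclusion strict, i.e.\ $x \: \contained \: z$.

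The real content is condition~3. Suppose $(u,w,w',u')$ is an induced $4$-cycle of $G$ and $v$ overlaps both $u'$ and $w'$; I must show $N[v]\subseteq N[u']\cup N[w']$. First $u',w'\notin T$: the cycle is induced so $u',w'\notin\{u,w\}$, and $u\not\sim w'$ together with $u\sim v$ gives $w'\neq v$, symmetrically $u'\neq v$. In $\rho$, $\rho(u')$ is disjoint from $\rho(w)$, so it lies inside the arc $R_u\cup R_{uv}\cup R_v$ that is the complement of $\rho(w)$; symmetrically $\rho(w')$ lies inside $R_v\cup R_{vw}\cup R_w$. Because $v$ overlaps $u'$, the nonempty set $\rho(u')\setminus\rho(v)$ can only lie in $R_u$, so $\rho(u')$ meets $R_u$; because $u'$ and $w'$ are adjacent, the nonempty set $\rho(u')\cap\rho(w')$ can only lie in $R_v$, so $\rho(u')$ also meets $R_v$. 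A connected sub-arc of $R_u\cup R_{uv}\cup R_v$ meeting both end-regions must contain $R_{uv}$ entirely and an initial portion of $R_v$ adjoining $R_{uv}$; symmetrically $\rho(w')$ contains all of $R_{vw}$ and a terminal portion of $R_v$ adjoining $R_{vw}$. Since these two portions of $R_v$ both contain the point witnessing $\rho(u')\cap\rho(w')\neq\emptyset$, they together cover $R_v$, so $\rho(u')\cup\rho(w')\supseteq R_{uv}\cup R_v\cup R_{vw}=\rho(v)$. Consequently the arc of every vertex adjacent to $v$ meets $\rho(u')$ or $\rho(w')$, and $v$ itself is adjacent to $u'$; hence $N[v]\subseteq N[u']\cup N[w']$. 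So condition~3 holds and $T\in\Delta_G$.

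I expect the main obstacle to be precisely this region analysis for condition~3 --- locating $\rho(u')$ and $\rho(w')$ among the six regions and arguing that their forced intersection inside $R_v$ makes their union cover $\rho(v)$. The only other care needed is with arc endpoints when passing between statements about regions and statements about the closed arcs $\rho(\cdot)$, but the standing convention that all endpoints are distinct keeps those boundary considerations routine.
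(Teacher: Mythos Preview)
Your proof is correct and follows essentially the same approach as the paper's: both fix a normalized representation in which $T$ covers the circle, verify conditions~1 and~2 of $\Delta_G$ directly from the geometry, and derive condition~3 by showing $\rho(v)\subset\rho(u')\cup\rho(w')$ in that representation, with the equality for $\Delta$-uniform graphs handled identically. The only difference is that you spell out the six-region analysis for condition~3 in detail, whereas the paper simply asserts that $\{u',v,w'\}$ must be an interval triangle with $\rho(v)\subset\rho(u')\cup\rho(w')$.
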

\begin{proof}   	
	For the first claim consider a graph $G$. If $G$ is not a CA graph then $\ovtnht(G) = \emptyset$. Therefore we can assume that $G$ is a CA graph. Given an $\overlap$-triangle $T \in \ovtnht(G)$	we show that it must be in $\Delta_G$. Let $\rho \in \mathcal{N}(G)$ be a representation such that $T = \{u,v,w\}$ is represented as non-Helly triangle in it. Since $\rho(u) \cup \rho(v) \cup \rho(w)$ covers the whole circle it follows that $N[u] \cup N[v] \cup N[w] = V(G)$, which is the first condition of Definition \ref{def:deltag}. To see that the second condition holds we consider a vertex $x \in N_T(u)$ without loss of generality. Since $x$ is not adjacent to $v$ and $w$ it holds that $\rho(x) \subseteq \mathbb{C} \setminus \left( \rho(v) \cup \rho(w) \right) $ where $\mathbb{C}$ denotes the whole circle. Since  $\mathbb{C} \setminus \left( \rho(v) \cup \rho(w) \right) \subset \rho(u)$ it follows that $\rho(x) \subset \rho(u)$. Due to the fact that $\rho$ is a normalized representation this implies that $x$ is contained by $u$. To see that the third condition of $\Delta_G$ holds let $u',w'$ be vertices such that $(u,w,w',u')$ is an induced 4-cycle of $G$. Since $T$ is represented as non-Helly triangle in $\rho$ it must hold that $\{ u',v,w' \}$ is an interval triangle in $\rho$ with $\rho(v) \subset \rho(u') \cup \rho(w')$ and therefore $N[v] \subseteq N[u'] \cup N[w']$. 
	
	For the second claim let $G$ be a $\Delta$-uniform CA graph. From the previous claim we know that $\ovtnht(G) \subseteq \Delta_G$. Since every $\overlap$-triangle must be in $\ovtnht(G) \cup \ovtit(G)$ it follows that $\Delta_G \subseteq \ovtnht(G) \cup \ovtit(G)$. The definition of $\Delta$-uniform requires $\Delta_G \cap \ovtit(G) = \emptyset$ and thus $\Delta_G \subseteq \ovtnht(G)$.  
\end{proof}

\begin{fact}
    $\Delta$-uniform CA graphs are a strict subset of uniform CA graphs.
\end{fact}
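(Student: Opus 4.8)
The plan is to prove the two inclusions separately. The containment ``$\Delta$-uniform $\Rightarrow$ uniform'' follows immediately from what is already on record: Lemma~\ref{lem:nht_deltag} gives $\ovtnht(G) \subseteq \Delta_G$ for every graph $G$, so if $G$ is $\Delta$-uniform, i.e.\ $\Delta_G \cap \ovtit(G) = \emptyset$, then $\ovtnht(G) \cap \ovtit(G) = \emptyset$ as well, and Theorem~\ref{thm:uca_char} yields that $G$ is uniform. I would dispatch this direction in one line.

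For strictness I would exhibit a uniform CA graph $G$ with $\Delta_G \cap \ovtit(G) \neq \emptyset$, and the graph $X_4$ of Figure~\ref{fig:ex_nuca} is a convenient witness, since its uniformity has already been argued there (it has a unique, rigid normalized representation). Because $X_4$ has a single normalized model, $\ovtit(X_4)$ consists exactly of those $\overlap$-triangles whose three arcs fail to cover the circle in that model, whereas $\ovtnht(X_4) \subseteq \Delta_{X_4}$. I would label the six arcs of the induced $\overline{3K_2}$-subgraph of $X_4$ by $u,v,w,u',v',w'$ following the $\overline{3K_2}$-model and consider $T = \{u,v,w\}$. One reads off from the model that $T$ is an $\overlap$-triangle whose arcs leave a hole: the two extra arcs of $X_4$ are each contained in an arc of the $\overline{3K_2}$-subgraph, so they do not change which portion of the circle is covered by $u,v,w$, and in the $\overline{3K_2}$-model those three arcs already leave a hole. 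Hence $T \in \ovtit(X_4)$.

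Next I would verify $T \in \Delta_{X_4}$ by checking the three conditions of Definition~\ref{def:deltag} with $v$ in the role of the middle vertex. Condition~1, $N[u] \cup N[v] \cup N[w] = V(X_4)$, is read off from the model (indeed already $N[u] \cup N[w] = V(X_4)$, since the unique non-neighbor of $u$ is $w'$, a neighbor of $w$). Condition~2 is vacuous: one checks that $N_T(u)$, $N_T(v)$ and $N_T(w)$ are all empty. Condition~3 is the only nontrivial point: the only induced $4$-cycle of the form $(u,w,w',u')$ in which $v$ overlaps both $u'$ and $w'$ is the one carried by the corresponding $\overline{3K_2}$-arcs, and for it $N[u'] \cup N[w'] = V(X_4) \supseteq N[v]$, so the implication holds. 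Therefore $T \in \Delta_{X_4} \cap \ovtit(X_4)$, so $X_4$ is not $\Delta$-uniform, which together with the first part shows that the $\Delta$-uniform CA graphs are a strict subset of the uniform CA graphs.

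The conceptual content is light; the step that needs the most care is the finite verification for $X_4$ --- extracting its neighborhood matrix from the arc model, confirming that $\{u,v,w\}$ is genuinely an $\overlap$-triangle (and not, say, a containment or circle-cover pair once the two extra vertices are accounted for) whose arcs still leave a hole, and checking that the $4$-cycle in condition~3 forces nothing new because $u'$ and $w'$ jointly cover all vertices. If one preferred not to lean on the figure, the alternative would be to build a slightly smaller explicit witness by the same recipe, an $\overline{3K_2}$ equipped with a few ``private'' arcs nested inside two of its members so as to destroy the $v \leftrightarrow v'$ symmetry; but $X_4$ is the economical choice because its uniformity has already been established.
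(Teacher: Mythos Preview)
Your proposal is correct and follows essentially the same route as the paper: the inclusion is derived from Lemma~\ref{lem:nht_deltag} together with Theorem~\ref{thm:uca_char}, and strictness is witnessed by $X_4$ via the $\overlap$-triangle $\{u,v,w\}$, which you verify lies in $\Delta_{X_4}\cap\ovtit(X_4)$. Your write-up simply spells out the three conditions of Definition~\ref{def:deltag} for $X_4$ in more detail than the paper does; the one slightly muddled sentence (that the extra arcs being contained in $\overline{3K_2}$-arcs is why $u,v,w$ leave a hole) is irrelevant to the argument, since the hole is read off directly from the arcs of $u,v,w$ in the unique model.
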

\begin{proof}
    Assume there exists a $\Delta$-uniform CA graph $G$ which is not uniform. This means there exists an $\overlap$-triangle $T \in \ovtnht(G) \cap \ovtit(G)$. Due to the previous lemma it holds that $\ovtnht(G) \subseteq \Delta_G$. This implies that $T \in \Delta_G \cap \ovtit(G)$ which contradicts that $G$ is $\Delta$-uniform. Therefore every $\Delta$-uniform CA graph is uniform.
    
	An example of a uniform CA graph that is not $\Delta$-uniform is the graph $X_4$ in Figure~\ref{fig:ex_nuca}. In the third paragraph after Theorem~\ref{thm:nuwc} we argued that $X_4$ is a uniform CA graph because it has a unique normalized representation. Assume that the black arcs of $X_4$ are labeled with $u,u',v,v',w,w'$ in the same way that the representation of $\overline{3K_2}$ is labeled in Figure~\ref{fig:ex_nuca}. To see that $X_4$ is not $\Delta$-uniform it suffices to check that the $\overlap$-triangle $\{u,v,w\}$ is in $\Delta_{X_4}$ and represented as interval triangle. 
\end{proof}

\begin{corollary}
   The globally invariant non-Helly triangle representability problem for $\Delta$-uniform CA graphs can be solved in logspace.
   \label{corol:deltag_ginhtr}
\end{corollary}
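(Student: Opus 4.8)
The plan is to exhibit a concrete logspace algorithm $A$ and to check the three properties it must have: that it runs in logspace, that it is correct on $\Delta$-uniform CA graphs, and that $f_A$ is an invariant for all graphs. The decisive point is Lemma~\ref{lem:nht_deltag}: on a $\Delta$-uniform CA graph $G$ one has $\ovtnht(G) = \Delta_G$, and the set $\Delta_G$ (Definition~\ref{def:deltag}) is specified purely combinatorially, referring only to adjacency in $G$ and to the $\contained$- and $\overlap$-entries of the neighborhood matrix $\lambda_G$. So $A$ need not reason about representations at all; on input $(G,T)$ it simply decides whether the $\overlap$-triangle $T$ lies in $\Delta_G$.

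Concretely, $A$ first verifies that $T$ is an $\overlap$-triangle of $G$. This only needs access to $\lambda_G$, and each entry of $\lambda_G$ is decided by a fixed-depth quantification over the closed neighborhoods of $G$ (for instance, the $\circlecover$-clause amounts to a pair of nested sweeps over vertices followed by a constant check); since the logspace-computable functions are closed under composition, $\lambda_G$ may be treated as available to $A$ in logspace. Then $A$ iterates over the at most six assignments of the three elements of $T$ to the roles $u,v,w$ and tests conditions~1--3 of Definition~\ref{def:deltag} for each: condition~1 is a single sweep over $V(G)$; condition~2 is a sweep over candidate vertices $x$, testing membership of $x$ in $N_T(z)$ and the entry $\lambda_{x,z}=\contained$; condition~3 is a pair of nested loops over candidate vertices $u',w'$ followed by an inner sweep checking $N[v] \subseteq N[u'] \cup N[w']$. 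Every step is a constant number of nested loop counters of $\mathcal{O}(\log n)$ bits together with neighborhood-membership tests, so $A$ runs in logspace; it accepts iff some assignment passes all three tests, that is, iff $T \in \Delta_G$.

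Correctness on $\Delta$-uniform CA graphs then follows immediately from Lemma~\ref{lem:nht_deltag}: for such $G$, $A$ accepts $(G,T)$ iff $T \in \Delta_G$ iff $T \in \ovtnht(G)$. Global invariance holds because the construction of $\lambda_G$ is isomorphism-equivariant, so for any isomorphism $\pi$ from a graph $G$ to a graph $H$ one has $T \in \Delta_G \iff \pi(T) \in \Delta_H$ (and likewise the property ``$T$ is an $\overlap$-triangle of $G$'' transfers); hence $f_A(G,T) = f_A(H,\pi(T))$ whenever $T \simorbit \pi(T)$, which is exactly the requirement that $f_A$ be an invariant for all graphs, including inputs outside the class. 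I expect the only real work to be the bookkeeping that confirms each clause of Definition~\ref{def:deltag} — in particular the quantifier-alternating condition~3 — is evaluable by a fixed-depth nest of logspace loops; the substantive step, namely replacing the representation predicate $\ovtnht$ by the combinatorial set $\Delta_G$, has already been done in Lemma~\ref{lem:nht_deltag}.
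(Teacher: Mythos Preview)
Your proposal is correct and follows exactly the same approach as the paper: define $A$ to accept $(G,T)$ iff $T \in \Delta_G$, invoke Lemma~\ref{lem:nht_deltag} for correctness on $\Delta$-uniform CA graphs, and observe that $\Delta_G$ is logspace-computable and an isomorphism invariant. The paper's own proof is a terse three-line version of precisely this argument; your additional bookkeeping on the logspace evaluation of the three clauses in Definition~\ref{def:deltag} is sound and simply makes explicit what the paper dismisses with ``clearly.''
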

\begin{proof}
    Given a CA graph $G$ and an $\overlap$-triangle $T$ output yes iff $T \in \Delta_G$. This is correct because in the case of a $\Delta$-uniform CA graph $G$ it holds that $\Delta_G = \ovtnht(G)$ (Lemma~\ref{lem:nht_deltag}). Clearly, $\Delta_G$ is computable in logspace and an invariant.
\end{proof}

\begin{lemma}
    Let $G$ be a CA graph that is not $\Delta$-uniform. Then there exists an induced 4-cycle $C = (u,w,w',u')$ such that $N[u] \cup N[w] = N[u'] \cup N[w'] = V(G)$ and a vertex $v$ that overlaps with every vertex in $C$. 
    \label{lem:nduwc}
\end{lemma}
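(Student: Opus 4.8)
The plan is to replay the forward direction of Theorem~\ref{thm:nuwc}, using the set $\Delta_G$ in the role that $\ovtnht(G)$ plays there. Since $G$ is not $\Delta$-uniform, there is an $\overlap$-triangle $T=\{u,v,w\}\in\Delta_G\cap\ovtit(G)$, and I take $(u,v,w)$ to be a distinguished triple witnessing $T\in\Delta_G$, so that conditions~1--3 of Definition~\ref{def:deltag} hold for $u,v,w$; I also fix a representation $\rho_I\in\mathcal{N}(G)$ in which $T$ is an interval triangle. Exactly one vertex of $T$ has its arc contained in the union of the other two arcs in $\rho_I$, and the first step is to argue that this vertex is $v$, i.e.\ $\rho_I(v)\subset\rho_I(u)\cup\rho_I(w)$. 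I expect this to be the main obstacle (discussed below). Granting it, $N[v]\subseteq N[u]\cup N[w]$, which combined with condition~1 of Definition~\ref{def:deltag} gives $N[u]\cup N[w]=V(G)$.

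With $N[u]\cup N[w]=V(G)$ in hand, the induced 4-cycle is extracted exactly as in Theorem~\ref{thm:nuwc}. Since $u\:\overlap\:w$ but $N[u]\cup N[w]=V(G)$, one of the two universally quantified clauses in the $\circlecover$-case of the neighborhood matrix fails for $u,w$, so w.l.o.g.\ there is $u'\in N[u]\setminus N[w]$ with $N[u']\not\subseteq N[u]$. This forces $u'\:\overlap\:u$, produces $w'\in N[u']\setminus N[u]$, and (using $N[u]\cup N[w]=V(G)$) gives $w'\in N[w]$ with $w'\:\overlap\:u'$ and $w'\:\overlap\:w$, so $C=(u,w,w',u')$ is an induced 4-cycle. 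To see that $v$ overlaps $u'$, suppose not: a short case analysis (using that $u'$ is disjoint from $w$, that $v\:\overlap\:w$, and that $\rho_I(v)\subset\rho_I(u)\cup\rho_I(w)$) rules out that $u'$ is contained in $v$, that $u'$ contains $v$, and that $u'$ forms a circle cover with $v$, so $u'$ and $v$ would be disjoint; but then $u'\in N_T(u)$, and condition~2 of Definition~\ref{def:deltag} would give $u'\:\contained\:u$, contradicting $N[u']\not\subseteq N[u]$. Symmetrically, using $u'\in N[w']\setminus N[w]$ in place of $N[u']\not\subseteq N[u]$, the vertex $v$ overlaps $w'$; together with $v\:\overlap\:u$ and $v\:\overlap\:w$ this shows $v$ overlaps every vertex of $C$.

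It remains to prove $N[u']\cup N[w']=V(G)$. In $\rho_I$ the arcs $\rho_I(u')$ and $\rho_I(w')$ each reach into the hole left by $T$ from opposite ends and, since $u'\:\overlap\:w'$, overlap inside it, so $\rho_I(u')\cup\rho_I(w')$ covers that hole; since moreover $v$ overlaps both $u'$ and $w'$, a routine inspection of the endpoint order in $\rho_I$ shows that $\rho_I(u')\cup\rho_I(v)\cup\rho_I(w')$ is the whole circle. Hence $\{u',v,w'\}$ is an $\overlap$-triangle in $\ovtnht(G)$, and therefore $N[u']\cup N[v]\cup N[w']=V(G)$. Finally, $C=(u,w,w',u')$ is an induced 4-cycle and $v$ overlaps $u'$ and $w'$, so condition~3 of Definition~\ref{def:deltag} applies and yields $N[v]\subseteq N[u']\cup N[w']$; combining this with the previous equality gives $N[u']\cup N[w']=V(G)$. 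Thus $C$ together with $v$ has all the required properties.

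The delicate point, as flagged, is the claim that the distinguished vertex $v$ is the one in the middle of $T$ in $\rho_I$. If $G$ is not uniform this is avoidable: by Theorem~\ref{thm:nuwc} the graph $G$ has a non-uniformity witness, and the construction inside that proof already exhibits an induced 4-cycle $C=(u,w,w',u')$ with $N[u]\cup N[w]=N[u']\cup N[w']=V(G)$ together with a vertex overlapping all of $C$. If $G$ is uniform, then $T\notin\ovtnht(G)$, so by Lemma~\ref{lem:amidst} the middle vertex of $T$ is the unique vertex of $T$ amidst the other two and is the same for every representation; assuming this vertex were $u\neq v$, one runs the construction of the second paragraph with the roles of $u$ and $v$ interchanged to obtain an induced 4-cycle through the edge $v\,w$ and an $\overlap$-triangle $\{p,u,q\}\in\ovtnht(G)\subseteq\Delta_G$ (by Lemma~\ref{lem:nht_deltag}), and derives a contradiction with condition~3 of Definition~\ref{def:deltag}. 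Pinning down this last contradiction is the only step needing real care; the rest is parallel to Theorem~\ref{thm:nuwc}.
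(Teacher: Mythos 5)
Your main construction is exactly the paper's: pick $T=\{u,v,w\}\in\Delta_G\cap\ovtit(G)$, assume $\rho(v)\subset\rho(u)\cup\rho(w)$, derive $N[u]\cup N[w]=V(G)$ from condition~1 of Definition~\ref{def:deltag}, extract $u'$ and $w'$ from the failure of the $\circlecover$-condition, use condition~2 to force $u'$ and $w'$ to overlap $v$, and use condition~3 to get $N[v]\subseteq N[u']\cup N[w']$. Your derivation of $N[u']\cup N[w']=V(G)$ via $\{u',v,w'\}\in\ovtnht(G)$ is a harmless variant of the paper's, which instead observes $\rho(u)\setminus\rho(v)\subset\rho(u')$ and $\rho(w)\setminus\rho(v)\subset\rho(w')$ directly; both close the argument the same way.

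The divergence is at your ``delicate point.'' The paper dispatches it with a bare ``w.l.o.g.\ $\rho(v)\subset\rho(u)\cup\rho(w)$,'' whereas you correctly observe that this needs justification: conditions~1 and~2 are symmetric in $T$, but condition~3 singles out $v$, and Definition~\ref{def:deltag} only guarantees that \emph{some} role assignment satisfies it, not that the distinguished vertex coincides with the middle vertex of the chosen interval representation. Your repair for the non-uniform case is sound (Theorem~\ref{thm:nuwc} together with Lemma~\ref{lem:nht_deltag} gives the full configuration, since for a triangle in $\ovtnht(G)$ every role assignment satisfies condition~3). But your uniform case is not a proof: you assert that running the construction with $u$ in the middle ``derives a contradiction with condition~3'' and then concede the contradiction is not pinned down. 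As far as I can see it cannot be derived from condition~3 as stated: with $u$ in the middle the construction produces an induced 4-cycle through the edge $\{v,w\}$, while condition~3 for the witness triple $(u,v,w)$ only constrains 4-cycles through the edge $\{u,w\}$, so it simply does not apply. This is a genuine gap in your write-up (and an elision in the paper's own proof); to close it one must either argue that the distinguished vertex of the $\Delta_G$-witness can always be taken to be the middle vertex, or strengthen condition~3 to hold for every role assignment of $T$ --- which the proof of Lemma~\ref{lem:nht_deltag} in fact establishes for all triangles in $\ovtnht(G)$, so that reformulation would not disturb the inclusion $\ovtnht(G)\subseteq\Delta_G$.
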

\begin{proof}
    The argument is essentially the same as the one made for the ``$\Rightarrow$''-direction in the proof of Theorem \ref{thm:nuwc}. The difference is that instead of the stronger assumption that $T \in \ovtnht(G)$ we only require that $T \in \Delta_G$.
    
    Since $G$ is not $\Delta$-uniform there exists an $\overlap$-triangle $T=\{u,v,w\}$ of $G$ such that $T \in \Delta_G$ and there is a representation $\rho \in \mathcal{N}(G)$ such that $T$ is represented as interval triangle in $\rho$. Furthermore, let us assume w.l.o.g.~that $\rho(v) \subset \rho(u) \cup \rho(w)$. Since $T \in \Delta_G$ it holds that $N[u] \cup N[v] \cup N[w] = V(G)$. Due to the interval representation of $T$ in $\rho$ it follows that $N[u] \cup N[w] = V(G)$. Since $u$ and $w$ do not form a circle cover it must hold that there exists a vertex $u' \in N[u] \setminus N[w]$ such that $N[u'] \setminus N[u]$ is non-empty. If $u'$ is disjoint from $v$ it follows that $u'$ must be contained by $u$ from the second condition in Definition~\ref{def:deltag} of $\Delta_G$. This cannot be the case and therefore $u' \in N_T(u,v)$. For $u'$ to have a neighbor which is not adjacent to $u$ it must hold that $\rho(u') \not\subseteq \rho(u)$. Therefore $u'$ overlaps with $u$ and $v$. Let $w' \in N[u'] \setminus N[u]$. If $w' \in N_T(w)$ then $w'$ would be contained by $w$ due to the second condition of $\Delta_G$. Again, this cannot be the case and therefore $w' \in N_T(v,w)$. From the representation $\rho$ it follows that $w$ must overlap with $u'$, $v$ and $w$. Then $C=(u,w,w',u')$ is an induced 4-cycle of $G$ such that $v$ overlaps with every vertex of $C$.  It remains to show that $N[u'] \cup N[w'] = V(G)$. Due to the third condition of $\Delta_G$ it holds that $N[v] \subseteq N[u'] \cup N[w']$. Additionally, it holds that $\rho(u) \setminus \rho(v) \subset \rho(u')$ and $\rho(w) \setminus \rho(v) \subset \rho(w')$. As a consequence $N[u'] \cup N[w'] = V(G)$.
\end{proof}

\begin{corollary}
    Canonical representations for CA graphs without induced 4-cycle can be computed in logspace.
\end{corollary}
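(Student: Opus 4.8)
The plan is to show that every CA graph without an induced 4-cycle is $\Delta$-uniform and then harvest the corollary from the flip set machinery. First I would invoke Lemma~\ref{lem:nduwc} in contrapositive form: it produces an induced 4-cycle in every non-$\Delta$-uniform CA graph, so a CA graph $G$ with no induced 4-cycle must be $\Delta$-uniform. Hence the CA graphs without an induced 4-cycle form a subclass of the $\Delta$-uniform CA graphs. This subclass is hereditary (any induced $C_4$ in an induced subgraph is an induced $C_4$ in $G$), so deleting universal vertices and collapsing twin classes preserves $C_4$-freeness; together with the preprocessing lemma reducing the canonical representation problem for CA graphs in logspace to that for vertex-colored twin-free CA graphs without a universal vertex, it suffices to produce canonical representations in logspace for vertex-colored twin-free $C_4$-free CA graphs without a universal vertex. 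Under the Proviso the neighborhood matrices of these graphs form a class $\mathcal{C}$ of CA matrices, closed under isomorphism.

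Next I would construct a logspace-computable invariant flip set function for $\mathcal{C}$. By Corollary~\ref{corol:deltag_ginhtr} the set $\Delta_G$ is logspace-computable and isomorphism-invariant, and by Lemma~\ref{lem:nht_deltag} it equals $\ovtnht(G)$ on $\Delta$-uniform CA graphs; thus the algorithm $A$ that outputs yes iff $T \in \Delta_G$ decides the globally invariant non-Helly triangle representability problem on the $\Delta$-uniform CA graphs. Plugging this $A$ into the construction in the proof of Lemma~\ref{lem:ginhtp} yields $\Funiform^A$, where $\sim_u^A$ is obtained from Definitions~\ref{def:amidst} and \ref{def:simu} by replacing $\ovtnht(G)$ with $\Delta(G,A)=\Delta_G$. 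On any $\Delta$-uniform CA graph these agree with $\sim_u$ (Lemma~\ref{lem:simu}), so $\Funiform^A$ coincides with $\Funiform$ there and is therefore an invariant flip set function by Lemma~\ref{lem:funiform}; it is moreover globally invariant and, since $\Delta_G$ is logspace-computable, logspace-computable. Restricting $\Funiform^A$ to $\mathcal{C}$ preserves all of these properties: global invariance gives invariance for $\mathcal{C}$, and on $\mathcal{C}$ (being $\Delta$-uniform) $\Funiform^A = \Funiform$ still returns a flip set for every matrix, so condition~1 of Definition~\ref{def:cfsf} holds.

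Finally I would apply Theorem~\ref{thm:cfsf_cr} to $\mathcal{C}$: the canonical representation problem for vertex-colored $\mathcal{C}$ is logspace-reducible to computing an invariant flip set function for $\mathcal{C}$, which we have just exhibited in logspace. Composing this with the preprocessing reduction produces a single logspace procedure that outputs canonical representations for all CA graphs without an induced 4-cycle.

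I do not expect a genuine obstacle; this is a packaging corollary of the $\Delta$-uniform theory. The points requiring care are purely bookkeeping: one must move to the twin-free, universal-vertex-free setting so that neighborhood matrices are CA matrices and Theorem~\ref{thm:cfsf_cr} is applicable, while checking that this preprocessing keeps the input $C_4$-free; and one must note explicitly that global invariance, logspace computability, and the ``contains a flip set'' property of $\Funiform^A$ all descend from the $\Delta$-uniform CA graphs to the subclass $\mathcal{C}$ — each of these is immediate but should be recorded.
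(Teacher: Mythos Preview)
Your proposal is correct and follows essentially the same route as the paper: use Lemma~\ref{lem:nduwc} in contrapositive form to place $C_4$-free CA graphs inside the $\Delta$-uniform class, then exploit the logspace computability of $\Delta_G$ to drive the flip-set machinery. The paper's own proof is a two-line citation of Lemma~\ref{lem:nduwc}, Corollary~\ref{corol:deltag_ginhtr}, and Theorem~\ref{thm:eq_problem}; you have unpacked the same argument by going through the $\Funiform^A$ construction from Lemma~\ref{lem:ginhtp} and Theorem~\ref{thm:cfsf_cr} explicitly, which amounts to re-proving the relevant direction of Theorem~\ref{thm:eq_problem} inline rather than citing it.
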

\begin{proof}
    By Lemma \ref{lem:nduwc} the class of CA graphs without induced 4-cycle is a subset of $\Delta$-uniform CA graphs and due to Corollary \ref{corol:deltag_ginhtr} and Theorem \ref{thm:eq_problem} a canonical representation for such graphs can be computed in logspace.
\end{proof}

\begin{corollary}
    Helly CA graphs are a strict subset of $\Delta$-uniform CA graphs.
\end{corollary}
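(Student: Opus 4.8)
The plan is to establish two things: first, that every Helly CA graph is $\Delta$-uniform, and second, that the inclusion is strict. For the inclusion, recall that a Helly CA graph $G$ has a Helly normalized representation, and by the discussion following Theorem~\ref{thm:uca_char} (citing \cite[Thm.~4.1]{joe}) \emph{every} normalized representation of $G$ is Helly. Hence no three pairwise overlapping arcs can cover the whole circle in any $\rho \in \mathcal{N}(G)$, so $\ovtnht(G) = \emptyset$. Combined with Lemma~\ref{lem:nht_deltag}, which gives $\ovtnht(G) \subseteq \Delta_G$ in general but equality for $\Delta$-uniform graphs, we cannot yet conclude; instead I would argue directly that $\Delta_G \cap \ovtit(G) = \emptyset$. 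Since $G$ is Helly every $\overlap$-triangle lies in $\ovtit(G)$, so it suffices to show $\Delta_G = \emptyset$ for Helly $G$. Take any $\overlap$-triangle $T = \{u,v,w\}$ and a Helly representation $\rho \in \mathcal{N}(G)$; the three arcs overlap pairwise, so by the Helly property $\bigcap_{x\in T}\rho(x) \neq \emptyset$, which forces the three arcs to leave a point of the circle uncovered (three pairwise-overlapping arcs with a common point cannot cover the circle — this is the single non-Helly arrangement fact cited via \cite[Corollary~3.1]{joe}), hence $N[u]\cup N[v]\cup N[w] \neq V(G)$ because $G$ has no universal vertex and the uncovered point's "owner" witnesses a missing vertex; more carefully, any vertex whose arc sits in the uncovered region is adjacent to none of $u,v,w$. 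This violates condition~1 of Definition~\ref{def:deltag}, so $T \notin \Delta_G$. Therefore $\Delta_G = \emptyset \subseteq$ anything, in particular $\Delta_G \cap \ovtit(G) = \emptyset$, so $G$ is $\Delta$-uniform.

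For strictness, I would exhibit a $\Delta$-uniform CA graph that is not Helly. The graph $X_4$ from Figure~\ref{fig:ex_nuca} is the natural candidate only if it turns out to be $\Delta$-uniform, but the preceding Fact shows $X_4$ is uniform yet \emph{not} $\Delta$-uniform, so that one is unusable here; instead I would use the earlier Corollary stating that CA graphs without an induced $4$-cycle are $\Delta$-uniform, and pick the smallest non-Helly CA graph without an induced $4$-cycle. A standard example is the complement of $C_6$ (which equals $C_3 \times K_2$, the triangular prism's complement) — more robustly, one can take a non-Helly CA graph whose only obstruction is a non-Helly triple of long arcs and which happens to be $C_4$-free; the claim then follows from the no-$4$-cycle corollary. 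Concretely I would verify that the chosen graph has a CA representation with three arcs jointly covering the circle (hence it is not Helly, since every normalized representation of a Helly CA graph is Helly) while simultaneously containing no induced $4$-cycle, placing it in $\Delta$-uniform by the corollary.

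The main obstacle is pinning down a clean witnessing graph for strictness and justifying both of its properties with results already available in the excerpt. The $C_4$-free route sidesteps most of the work because the corollary does the heavy lifting, so the only genuine content is exhibiting one concrete $C_4$-free non-Helly CA graph and checking non-Hellyness via an explicit representation together with the "every normalized representation is Helly" characterization. I expect the containment direction to be essentially routine once the single-non-Helly-arrangement fact and condition~1 of $\Delta_G$ are invoked; the subtlety there is only the bookkeeping that an uncovered point of the circle in a normalized representation genuinely certifies the existence of a vertex not dominated by $\{u,v,w\}$, which uses that $G$ has no universal vertex and is twin-free (our standing Proviso).
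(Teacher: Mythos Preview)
Your argument for the inclusion has a genuine gap. From the fact that the arcs $\rho(u),\rho(v),\rho(w)$ leave some point of the circle uncovered you conclude that $N[u]\cup N[v]\cup N[w]\neq V(G)$, invoking ``the uncovered point's owner''. But nothing guarantees that any arc lies in the uncovered region: the hole may simply be empty, in which case every vertex could still be adjacent to one of $u,v,w$ and condition~1 of Definition~\ref{def:deltag} would be satisfied. Your appeal to the Proviso (no universal vertex, twin-free) does not produce such a vertex either. So you have not shown that condition~1 fails for every $\overlap$-triangle of a Helly CA graph, and hence have not shown $\Delta_G=\emptyset$.

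The paper takes a different and shorter route that avoids this issue entirely: assume $G$ is Helly but not $\Delta$-uniform and apply Lemma~\ref{lem:nduwc} to obtain an induced $4$-cycle $C=(u,w,w',u')$ together with a vertex $v$ that overlaps every vertex of $C$. In any normalized representation the four arcs of $C$ cover the circle, so $v$ together with one of the diagonal pairs of $C$ must form a non-Helly triangle, contradicting the Helly property. This argument uses all three conditions of $\Delta_G$ (packaged inside Lemma~\ref{lem:nduwc}), not just condition~1.

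For strictness your idea is the right one, and it is exactly what the paper does: exhibit a $C_4$-free non-Helly CA graph and invoke the corollary that $C_4$-free CA graphs are $\Delta$-uniform. You should, however, commit to a concrete graph and verify it rather than gesturing at ``the complement of $C_6$'' (which, incidentally, does contain induced $4$-cycles). The paper uses the six-vertex graph consisting of a triangle with a pendant attached to each corner; it is $C_4$-free and its unique $\overlap$-triangle is forced to be a non-Helly triangle in every normalized representation.
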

\begin{proof}
    Assume $G$ is a Helly CA graph which is not $\Delta$-uniform. Then due to Lemma \ref{lem:nduwc} there exists an induced 4-cycle $C$ and a vertex $v$ not in $C$ which overlaps with every vertex in $C$. In any normalized representation of $G$ it must hold that $v$ forms a non-Helly triangle with two vertices from $C$. This contradicts that $G$ is Helly. The graph \begin{tikzpicture}[shorten >=1pt,auto,node distance=1.2cm,
  main node/.style={circle,draw}]
  
    \newcommand*{\move}{0.3}%
       
        
    \node[circle,draw,inner sep=0,minimum size=0.1cm,fill=black] (d) at (0,0) {};
    \node[circle,draw,inner sep=0,minimum size=0.1cm,fill=black] (e) at (\move,0) {};        
    \node[circle,draw,inner sep=0,minimum size=0.1cm,fill=black] (f) at (2*\move,0) {};

    \node[circle,draw,inner sep=0,minimum size=0.1cm,fill=black] (a) at (0,\move) {};
    \node[circle,draw,inner sep=0,minimum size=0.1cm,fill=black] (b) at (\move,\move) {};        
    \node[circle,draw,inner sep=0,minimum size=0.1cm,fill=black] (c) at (2*\move,\move) {};   
    
    \path[-]
    (a) edge (d)
    (d) edge (a)
    
    (b) edge (c)
    (c) edge (b)
    
    (b) edge (d)
    (d) edge (b)
    
    (b) edge (e)
    (e) edge (b)
        
    (d) edge (e)
    (e) edge (d)    
    
    (e) edge (f)
    (f) edge (e)    
    ;

\end{tikzpicture} is a $\Delta$-uniform CA graph which is not Helly.
\end{proof}

\begin{theorem}
    The canonical representation problem for CA graphs is logspace-reducible to the canonical representation problem for vertex-colored restricted CA matrices. 
    \label{thm:rca_matrix}
\end{theorem}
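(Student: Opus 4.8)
The plan is to compose two logspace reductions: first from CA graphs to non-$\Delta$-uniform CA graphs, and then from those to restricted CA matrices. For the first step I would prove the variant of Lemma~\ref{lem:ginhtp} obtained by replacing ``uniform'' with ``$\Delta$-uniform'' everywhere: the canonical representation problem for CA graphs is logspace-reducible to the globally invariant non-Helly triangle representability problem for $\Delta$-uniform CA graphs together with the canonical representation problem for vertex-colored non-$\Delta$-uniform CA graphs. The argument copies that of Lemma~\ref{lem:ginhtp} with $\Delta_G$ in place of $\Delta(G,A)$: one forms the relation $\sim_u^{\Delta}$ ($\sim_u$ with $\ovtnht(G)$ replaced by $\Delta_G$) and the selector $\Funiform^{\Delta}$ defined like $\Funiform^{A}$, observes via Lemma~\ref{lem:nht_deltag} and Lemma~\ref{lem:simu} that $\Funiform^{\Delta}(G)=\Funiform(G)$ for every $\Delta$-uniform CA graph $G$, so that $\Funiform^{\Delta}$ is a globally invariant flip set function for the $\Delta$-uniform CA graphs, and then invokes Lemma~\ref{lem:gir}. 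Since the globally invariant non-Helly triangle representability problem for $\Delta$-uniform CA graphs is solvable in logspace by Corollary~\ref{corol:deltag_ginhtr}, this step reduces the canonical representation problem for CA graphs to the canonical representation problem for vertex-colored non-$\Delta$-uniform CA graphs.

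For the second step, let $(G,c)$ be a vertex-colored non-$\Delta$-uniform CA graph; by Lemma~\ref{lem:nduwc} it carries an induced $4$-cycle $C=(u,w,w',u')$ with $N[u]\cup N[w]=N[u']\cup N[w']=V(G)$ and a vertex $v$ that overlaps every vertex of $C$. First I would observe that condition~1 of Definition~\ref{def:rca} holds for $\lambda_G$ with witness cycle $C$ automatically: a vertex in $N_C(u,w)$ or in $N_C(x)$ for some $x\in C$ would be adjacent to neither $u'$ nor $w'$, and a vertex in $N_C(u',w')$ would be adjacent to neither $u$ nor $w$, contradicting one of the two covering equalities. Hence only condition~2 can fail, and it fails precisely because some vertex contains or circle-covers a vertex of $C$ --- necessarily a ``long'' arc. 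The plan is to define, in an isomorphism-invariant and logspace-computable way, the set $L(G)$ of long vertices of $G$ --- those whose arc, in every normalized representation realizing the configuration of Lemma~\ref{lem:nduwc}, contains the region left uncovered by $\rho(u)\cup\rho(w)$, a combinatorially recognizable class of ``wide'' vertices of $\lambda_G$ --- to flip them, and to color the result: put $c_L(x)=(c(x),\mathrm{red})$ for $x\in L(G)$ and $c_L(x)=(c(x),\mathrm{blue})$ otherwise. Then one queries the oracle for a canonical representation of the vertex-colored matrix $(\lambda_G^{(L(G))},c_L)$ and flips the red arcs back to obtain a normalized representation of $G$. Since $L(G)$ and $c_L$ depend only on the isomorphism type of $G$, isomorphic graphs produce isomorphic colored matrices, hence the same canonical representation and the same CA model after flipping back, so the output is canonical.

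The heart of the second step, and the main obstacle, is the claim that $\lambda_G^{(L(G))}$ is indeed a restricted CA matrix for every non-$\Delta$-uniform CA graph $G$. This needs (i) showing that the set of long vertices does not depend on the particular normalized representation and can be recovered from $\lambda_G$ alone, and (ii) a case analysis --- best organized along the columns of Table~\ref{tab:rca} via Fact~\ref{fact:rcait} --- over the intersection types of an arbitrary vertex with $u,w,w',u'$ before and after the flip, verifying that each vertex ends up in one of the admissible configurations. One should allow the witness cycle of $\lambda_G^{(L(G))}$ to differ from $C$: flipping can change which $4$-cycle plays that role (exactly as in the proof of Fact~\ref{fact:rcait}, where a vertex containing $u'$ forces a different witness cycle), and Definition~\ref{def:rca} only asks for some witness cycle to exist. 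Should a single invariant flip set not suffice, a fallback is to adjoin to $\lambda_G$ a constant number of gadget vertices that force the restricted structure while still permitting a canonical representation of $G$ to be read off. The remaining ingredients --- logspace-computability, the composition of the two reductions, and the invariance bookkeeping --- are routine in view of the analogous arguments in Theorem~\ref{thm:cfsf_cr}, Lemma~\ref{lem:gir} and Lemma~\ref{lem:ginhtp}.
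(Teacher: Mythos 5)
Your first step coincides with the paper's: the modified Lemma~\ref{lem:ginhtp} with ``$\Delta$-uniform'' in place of ``uniform'', combined with Corollary~\ref{corol:deltag_ginhtr}, reduces the problem to vertex-colored non-$\Delta$-uniform CA graphs exactly as in the paper. The gap is in the second step, and it is precisely the part you yourself flag as ``the heart of the second step, and the main obstacle'': you never say what the flip set actually is. Your $L(G)$ is defined semantically (arcs covering the region left uncovered by $\rho(u)\cup\rho(w)$ in every representation realizing Lemma~\ref{lem:nduwc}), and you pose its combinatorial recoverability from $\lambda_G$ as an unresolved sub-problem. The paper's answer is concrete and purely combinatorial: for each ordered induced $4$-cycle $C$, take $X_C=\set{x\in V(G)\setminus C}{\exists y\in C:\ x\:\contains\:y}$, i.e.\ the vertices whose entry in the neighborhood matrix is $\contains$ against some vertex of $C$. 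This is trivially invariant and logspace-computable, and the substance of the proof is the case analysis showing that for the cycle $C$ of Lemma~\ref{lem:nduwc} (renormalized so that no $z_1\in N_C(N[z]\cap C)$ contains $z\in C$, which is achieved by re-choosing the cycle), $\lambda_G^{(X_C)}$ is a restricted CA matrix with witness cycle $C$ itself. Without that definition and that verification there is no proof.

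Two further points. First, the paper does not need a single well-defined invariant set: it outputs the whole family $f(G)=\bigcup_C\{X_C\}$, checks in logspace which members are R-flip sets (i.e.\ which $\lambda_G^{(X)}$ are restricted CA matrices), and minimizes the resulting canonical forms exactly as in Theorem~\ref{thm:cfsf_cr}. Your single-set $L(G)$ carries an extra well-definedness and invariance burden (different witness cycles could yield different ``long'' sets), which the family-plus-minimization device avoids; you should adopt it rather than the gadget fallback. Second, your observation that condition~1 of Definition~\ref{def:rca} holds ``automatically'' concerns $\lambda_G$ before flipping. Flipping changes adjacency ($Z_{10}$ exchanges $\disjoint$ with $\contains$ and $\contained$ with $\circlecover$), so condition~1 must be re-established for $\lambda_G^{(X_C)}$ after the flip; this re-verification, using $N[u']\cup N[w']=V(G)$ to rule out flipped vertices landing in $N_C(u,w)$ and so on, is a nontrivial part of the paper's argument that your sketch does not account for.
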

\begin{proof}
    For brevity let $\mathcal{Z}$ denote the set of all CA graphs which are not $\Delta$-uniform.
    Since the globally invariant non-Helly triangle representability problem for $\Delta$-uniform CA graphs can be solved in logspace (see Corollary \ref{corol:deltag_ginhtr}) it follows from a modified version of Lemma~\ref{lem:ginhtp} that the canonical representation problem for CA graphs is logspace-reducible to the canonical representation problem for vertex-colored $\mathcal{Z}$. To see this replace  `uniform' with `$\Delta$-uniform' and `non-uniform' with `non-$\Delta$-uniform' in the statement (and proof) of Lemma~\ref{lem:ginhtp}.
    
    For a CA graph $G$ let us say a subset of vertices $X$ of $G$ is an R-flip set if $\lambda_G^{(X)}$ is a restricted CA matrix.
    To find canonical representations for $\mathcal{Z}$ we construct an invariant vertex set selector $f$ such that $f(G)$ contains at least one R-flip set for every $G \in \mathcal{Z}$. 
    Then to obtain a canonical representation for $G \in \mathcal{Z}$ let $\hat{X}$ denote the R-flip set in $f(G)$ such that $\canon(\lambda_G^{(\hat{X})},c_{\hat{X}})$ is lexicographically minimal with $c_X$ being the coloring which assigns every vertex $v \in X$ the color red and the other vertices are blue. Let $\rho$ be a canonical normalized representation for $(\lambda_G^{(\hat{X})},c_{\hat{X}})$. Then $\rho^{(\hat{X})}$ is a canonical representation for $G$. Notice, that $\rho^{(\hat{X})}$ can be computed in logspace by computing canonical representations for vertex-colored restricted CA matrices. The correctness of this approach follows from the same argument made in the proof of Theorem \ref{thm:cfsf_cr} in the flip trick section. The analogy is straightforward. The R-flip sets in this context correspond to flip sets and the invariant vertex set selector $f$ takes the place of the invariant flip set function. Given a CA graph $G$ and $X \subseteq V(G)$ it can be easily checked in logspace whether $\lambda_G^{(X)}$ is a restricted CA matrix. 
        
    For a CA graph $G$ let $C(G)$ denote the set of all ordered induced 4-cycles in $G$.
    Now, we claim that the following logspace-computable function $f$ is an invariant vertex set selector with the desired property: 
    
    $$ f(G) = \bigcup_{C \in C(G)} \big\{ \set{ x \in V(G) \setminus C  }{ \exists y \in C : x \: \contains \: y } \big\} $$        
    It is not difficult to check that $f$ is invariant. It remains to argue why $f(G)$ contains at least one R-flip set for every $G \in \mathcal{Z}$. Let $G \in \mathcal{Z}$ and $C =(u,w,w',u')$ is an induced 4-cycle in $G$ such that $N[u] \cup N[w] = N[u'] \cup N[w'] = V(G)$. The existence of such an induced 4-cycle is guaranteed by Lemma \ref{lem:nduwc}. Observe that if there exists a $u_1 \in N_C(u,w,u')$ with $u_1 \: \contains \: u$ then $C_1 = (u_1,w,w',u')$ also satisfies the previous condition $N[u_1] \cup N[w] = V(G)$. Therefore we can assume that there exists no $z \in C$ and $z_1 \in N_C(N[z] \cap C)$ such that $z_1 \: \contains \: z$. From $N[u] \cup N[w] = N[u'] \cup N[w'] = V(G)$ it immediately follows that $N_C(u,w)$, $N_C(u',w')$ and $N_C(x)$ are empty for every $x \in C$. 
    
    We prove that $\lambda^{(X)}$ is a restricted CA matrix with witness cycle $C$ where $\lambda$ is the neighborhood matrix of $G$ and $X = \set{ x \in V(G) \setminus C }{ \exists y \in C : x \: \contains \: y }$.
    Note that $X \in f(G)$ via $C$.  
    To reference the neighborhoods of $G$ (which are the same as the ones of $\lambda$) or $\lambda^{(X)}$ we write $N^G$ and $N^{\lambda^{(X)}}$ to distinguish between them. 
    First, we show that $N_C^{\lambda^{(X)}}(u,w) = \emptyset$. Assume the opposite, i.e.~there exists $x \in N_C^{\lambda^{(X)}}(u,w)$. If $x$ was not flipped, i.e.~$x \notin X$, then it also holds that $x \in N^G_C(u,w)$, which contradicts that $N^G_C(u,w)$ is empty. If $x$ was flipped, i.e.~$x \in X$, then it must be the case that $x$ contains $u'$ and $w'$ in $\lambda$. This means $N_G[u'] \cup N_G[w'] \subseteq N_G[x]$ which implies that $x$ is a universal vertex in $G$ since $N_G[u'] \cup N_G[w'] = V(G)$, contradiction. For the same reason it holds that $N_C^{\lambda^{(X)}}(u',w')$ and $N_C^{\lambda^{(X)}}(z)$ are empty for all $z \in C$. It remains to show that for all $x \in N_C^{\lambda^{(X)}}(C)$ it holds that $x$ overlaps with all vertices of $C$ in $\lambda^{(X)}$. Notice that $\lambda^{(X)}_{x,z} \in \{\overlap, \contains, \circlecover \}$ for every $z \in C$. Otherwise $x$ would not be in $N_C(C)$. We consider the following two cases: in the first one we assume that $x$ contains one vertex of $C$ in $\lambda^{(X)}$ and in the second  one we assume that $x$ forms a  circle cover with one vertex of $C$  in $\lambda^{(X)}$. We prove that neither of these cases can occur and therefore $x$ must overlap with all vertices of $C$  in $\lambda^{(X)}$. For the first case assume that w.l.o.g.~$x$ contains $u$ in $\lambda^{(X)}$ and intersects with the other vertices of $C$ in $\lambda^{(X)}$. If $x \in X$ then it was flipped. It follows that $x$ was disjoint from $u$ in $\lambda$ and therefore $x \in N_C^G(w,w',u')$. Since $x \in X$ it also must hold that $x$ contains at least one of the vertices $w,w',u'$ in $G$. It follows that $x$ contains $w'$ since it cannot contain the other two in $\lambda$. However, this contradicts our choice of $C$ which says that there exists no $w'_1 \in N_C^G(w,w',u')$ such that $w'_1$ contains $w'$ in $\lambda$. 
    If $x \notin X$ then it must hold that $x$ already contained $u$ in $\lambda$. But then $x$ should be in $X$, contradiction. For the second case assume $x$ forms a circle cover with $u$ in $\lambda^{(X)}$. If $x$ forms a circle cover with $u$ then this implies that $x$ contains $w'$ in $\lambda^{(X)}$ and therefore this reduces to the first case. We conclude that both conditions of Definition \ref{def:rca} are satisfied and hence $\lambda^{(X)}$ is a restricted CA matrix.   
\end{proof}

\section{Further Research}
Finding a polynomial-time isomorphism test for CA graphs remains an open problem. We have shown that it suffices to consider only non-uniform CA graphs for this problem. This particular class of CA graphs offers quite a lot of structure, which is caused by what we named non-uniformity witnesses. It seems plausible that such witnesses can be exploited to devise an isomorphism test. Additionally, we proved that the canonical representation problem for CA graphs is logspace-reducible to that of restricted CA matrices. The central question with regard to the flip trick is how invariant flip sets for restricted CA matrices or non-uniform CA graphs can be computed. Also, we remark that CA representations for CA graphs can be computed in logspace if flip sets for restricted CA matrices can be found in logspace. 
Another interesting problem is to extend Definition~\ref{def:deltag} of $\Delta_G$ such that it captures $\ovtnht(G)$ on uniform CA graphs, i.e.~$\Delta_G = \ovtnht(G)$ for all uniform CA graphs $G$. If this can be done in such a way that $\Delta_G$ remains an invariant and computable in logspace then everything that is said about $\Delta$-uniform CA graphs in section 5 also applies to uniform CA graphs.

\subparagraph*{Acknowledgements.}
We thank the anonymous reviewers for their insightful comments and suggestions that helped us to improve the quality of this work.

%



\bibliography{ca}


\end{document}